\newcommand{\R}{\mathbb{R}}
\newcommand{\C}{\mathbb{C}}
\newcommand{\cH}{\mathcal{H}}
\renewcommand{\Re}{\operatorname{Re}}
\renewcommand{\Im}{\operatorname{Im}}
\renewcommand{\ge}{\geqslant}
\renewcommand{\le}{\leqslant}
\newcommand{\ip}[2]{\langle #1|#2 \rangle}
\newcommand{\proj}[1]{| #1 \rangle \langle #1 |}
\newcommand{\bracket}[3]{\langle #1|#2|#3 \rangle}
\newcommand{\sm}[1]{\left( \begin{smallmatrix} #1 \end{smallmatrix} \right)}
\DeclareMathOperator{\poly}{poly}
\DeclareMathOperator{\tr}{Tr}
\DeclareMathOperator{\rank}{rank}
\DeclareMathOperator{\linspan}{span}
\DeclareMathOperator{\eff}{eff}
\DeclareMathOperator{\od}{od}
\DeclareMathOperator{\hc}{h.c.}
\newcommand{\Antisymspace}{\mathcal{H}_{\operatorname{antisym}}(d-1)}
\newcommand{\haklt}{h^{\text{AKLT}}}
\newcommand{\Pitot}{\Pi_{\operatorname{tot}}}
\newcommand{\be}{\begin{equation}}
\newcommand{\ee}{\end{equation}}
\newcommand{\bea}{\begin{eqnarray}}
\newcommand{\eea}{\end{eqnarray}}
\newcommand{\bes}{\begin{equation*}}
\newcommand{\ees}{\end{equation*}}
\newcommand{\beas}{\begin{eqnarray*}}
\newcommand{\eeas}{\end{eqnarray*}}
\newcommand{\cS}{\mathcal{S}}
\tikzstyle{qubit}=[circle,draw,fill,thick,
\newtheorem*{rep@theorem}{\rep@title}
\newcommand{\newreptheorem}[2]{%
\newenvironment{rep#1}[1]{%
 \def\rep@title{#2 \ref{##1} (restated)}%
 \begin{rep@theorem}}%
 {\end{rep@theorem}}}
\newtheorem{thm}{Theorem}
\newtheorem*{thm*}{Theorem}
\newtheorem{cor}[thm]{Corollary}
\newtheorem{lem}[thm]{Lemma}
\newtheorem*{lem*}{Lemma}
\newtheorem{dfn}{Definition}
\begin{document}

\title{Universal qudit Hamiltonians}
\author{Stephen Piddock\thanks{{\tt stephen.piddock@bristol.ac.uk}.}\ \ and Ashley Montanaro\thanks{{\tt ashley.montanaro@bristol.ac.uk}.}\\[3pt] {\small School of Mathematics, University of Bristol, UK}}
\maketitle

\begin{abstract}
A family of quantum Hamiltonians is said to be universal if any other finite-dimensional Hamiltonian can be approximately encoded within the low-energy space of a Hamiltonian from that family. If the encoding is efficient, universal families of Hamiltonians can be used as universal analogue quantum simulators and universal quantum computers, and the problem of approximately determining the ground-state energy of a Hamiltonian from a universal family is QMA-complete. One natural way to categorise Hamiltonians into families is in terms of the interactions they are built from. 
%This is a generalisation of the classical notion of defining constraint satisfaction problems (CSPs) in terms of their allowed constraints.
Here we prove universality of some important classes of interactions on qudits ($d$-level systems):
\begin{itemize}
\item We completely characterise the $k$-qudit interactions which are universal, if augmented with arbitrary 1-local terms.
 %(a generalisation of fixing individual variables in a CSP). 
We find that, for all $k \ge 2$ and all local dimensions $d \ge 2$, almost all such interactions are universal aside from a simple stoquastic class.
\item We prove universality of generalisations of the Heisenberg model that are ubiquitous in condensed-matter physics, even if free 1-local terms are not provided. We show that the $SU(d)$ and $SU(2)$ Heisenberg interactions are universal for all local dimensions $d \ge 2$ (spin $\ge 1/2$), implying that a quantum variant of the Max-$d$-Cut problem is QMA-complete. We also  show that for $d=3$ all bilinear-biquadratic Heisenberg interactions are universal. One example is the general AKLT model.
\item We prove universality of any interaction proportional to the projector onto a pure entangled state.
\end{itemize}
%These results are proven via extensive use of perturbative gadgets to simulate one Hamiltonian with another. Analysis of these gadgets requires novel techniques, including fourth-order perturbation theory and the theory of quadratic Casimir operators.
\end{abstract}

\pagebreak

% ------------------------------------------------------------------------------

\section{Introduction}
\label{sec:intro}

What does it mean to say that a class of (quantum-)physical systems is complex? One perspective is to look at the physical phenomena displayed by that type of system. If these phenomena are rich and complex, then the system arguably can be said to be complex itself. Another perspective is to look at the computational power of the system: the ability to build a universal computer using the system would serve as strong evidence that the system is complex.

Interestingly, in some cases these notions of complexity are equivalent. Recent work by us, together with Cubitt, introduced and characterised the notion of universality in many-body quantum Hamiltonians~\cite{cubitt17}. A family of Hamiltonians is said to be universal if any other quantum Hamiltonian can be simulated arbitrarily well by some Hamiltonian in that family. By ``simulate'', we mean the following (see Section \ref{sec:prelims} below for a formal definition): Hamiltonian $A$ simulates Hamiltonian $B$ if the low-energy part of $A$ is close to $B$ in operator norm, up to a local isometry (i.e.\ a map which associates each subsystem of the $B$ system with a discrete set of subsystems of the $A$ system).

This notion of simulation is very strong, as it implies that the low-energy part of $A$ reproduces all physical properties of $B$ (such as eigenvalues, ground states, partition functions, correlation functions, etc.)~\cite{cubitt17}. Universality is correspondingly a very strong notion. As a universal family $\mathcal{F}$ of Hamiltonians can simulate any other quantum Hamiltonian, any physical phenomenon that can occur in a quantum system must occur within Hamiltonians picked from $\mathcal{F}$. This implies that the ability to implement Hamiltonians in $\mathcal{F}$ allows universal ``analogue'' simulation of arbitrary quantum systems~\cite{georgescu14,cirac12}. In addition, if one also assumes that the simulation can be computed efficiently (as is usually the case), universal families of Hamiltonians are computationally universal, in a number of senses~\cite{cubitt17}.  First, they can be used to perform arbitrary quantum computations, either by preparing a simple initial state, evolving according to $H \in \mathcal{F}$ for some time and measuring, or via adiabatic evolution. Second, the problem of approximately computing the ground-state energy of Hamiltonians from $\mathcal{F}$ is QMA-complete, where QMA is the quantum analogue of the complexity class NP~\cite{bookatz14,gharibian15}, and hence expected to be computationally hard.

A natural way to classify physical systems is in terms of the types of interactions that they are built from. Let $\cS$ be a set of interactions on up to $k$ qudits ($d$-level subsystems), i.e.\ each element of $\cS$ is a Hermitian operator on $(\C^d)^{\otimes l}$ for some $l \le k$. Then we say that an $n$-qudit Hamiltonian $H$ is an $\cS$-Hamiltonian if
\be \label{eq:generalform} H = \sum_i \alpha_i H^{(i)}, \ee
where for all $i$, $\alpha_i \in \R$ and the non-trivial part of $H^{(i)}$ is picked from $\cS$. That is, $H^{(i)} = H \otimes I$ for some $H \in \cS$. $H$ is a so-called $k$-local Hamiltonian. We stress that the $\alpha_i$ coefficients can (usually) be either positive or negative. We also say that $H$ is an $\cS$-Hamiltonian with local terms if it can be written in the form (\ref{eq:generalform}) by adding arbitrary 1-local operators. The form (\ref{eq:generalform}) encompasses a vast array of the Hamiltonians studied in condensed-matter physics, such as the general Ising model ($\cS = \{ Z \otimes Z\}$) and the general Heisenberg model ($\cS = \{X \otimes X + Y \otimes Y + Z \otimes Z\}$). In the case where $\cS = \{h\}$ for some $h$, we just call $H$ an $h$-Hamiltonian.

Determining the complexity of $\mathcal{S}$-Hamiltonians is a natural quantum generalisation of the long-running programme in classical complexity theory of classifying constraint satisfaction problems (CSPs) according to their complexity. Beginning with Schaefer's famous 1978 dichotomy theorem for boolean CSPs~\cite{schaefer78}, which has been extended in many different directions since (see e.g.~\cite{creignou01,thapper16} for references), this project aims to pinpoint, for each possible set of constraints $\mathcal{S}$, the complexity of a CSP that uses only constraints from $\mathcal{S}$ (perhaps weighted, to give an optimisation problem). A quantum generalisation of this question is to determine the complexity of approximately computing the ground-state energy of $\mathcal{S}$-Hamiltonians up to $1/\poly(n)$ precision~\cite{gharibian15}. This problem, which we call simply {\sc $\mathcal{S}$-Hamiltonian}, is a special case of the {\sc Local Hamiltonian} problem, which in general is QMA-complete~\cite{Kempe-Kitaev-Regev,Kitaev-Shen-Vyalyi} when $\cS$ contains all $k$-qubit interactions for any fixed $k \ge 2$. The classical special case of the {\sc $\mathcal{S}$-Hamiltonian} problem corresponds to $\mathcal{S}$ containing only diagonal interactions; such problems are known as ``valued'' or ``generalised'' CSPs, and a full complexity classification of these was only obtained in 2016, by Thapper and \v{Z}ivn\'y~\cite{thapper16}.

A full classification was given in~\cite{Cubitt-Montanaro} of the computational complexity of the {\sc $\mathcal{S}$-Hamiltonian} problem in the special case where all interactions in $\mathcal{S}$ are on at most 2 qubits; this was sharpened by~\cite{Bravyi-Hastings}, which showed that one complexity class in the classification was equivalent to the previously studied class StoqMA~\cite{bravyi06a}. It was later shown in~\cite{cubitt17} that each of the classes in~\cite{Cubitt-Montanaro} corresponds to a physical universality class. These results can be summarised as follows:
\begin{thm}[\cite{Science,jonsson00,Cubitt-Montanaro,cubitt17,Bravyi-Hastings}]
\label{thm:qubits}
  Let $\mathcal{S}$ be any fixed set of two-qubit and one-qubit interactions such that $\mathcal{S}$ contains at least one interaction which is not 1-local. Then:
  \begin{itemize}
  \item If there exists $U \in SU(2)$ such that $U$ locally diagonalises $\mathcal{S}$, then $\mathcal{S}$-Hamiltonians are universal classical Hamiltonian simulators~\cite{Science} and the {\sc $\mathcal{S}$-Hamiltonian} problem is NP-complete~\cite{jonsson00,Cubitt-Montanaro};
  \item Otherwise, if there exists $U \in SU(2)$ such that, for each 2-qubit matrix $H_i \in \mathcal{S}$, $U^{\otimes 2} H_i (U^\dag)^{\otimes 2} = \alpha_i Z^{\otimes 2} + A_i\otimes I + I \otimes B_i$, where $\alpha_i \in \R$ and $A_i$, $B_i$ are arbitrary single-qubit interactions, then $\mathcal{S}$-Hamiltonians are universal stoquastic Hamiltonian simulators~\cite{cubitt17} and the {\sc $\mathcal{S}$-Hamiltonian} problem is StoqMA-complete~\cite{Bravyi-Hastings,Cubitt-Montanaro};
  \item Otherwise, $\mathcal{S}$-Hamiltonians are universal quantum Hamiltonian simulators~\cite{cubitt17} and the {\sc $\mathcal{S}$-Hamiltonian} problem is QMA-complete~\cite{Cubitt-Montanaro}.
  \end{itemize}
\end{thm}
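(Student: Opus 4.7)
The plan is to prove the trichotomy by first giving a normal-form classification of sets $\mathcal{S}$ of $1$- and $2$-qubit interactions under simultaneous conjugation by $U^{\otimes n}$ with $U \in SU(2)$, and then establishing the corresponding simulation and complexity result in each of the three resulting classes. The organising observation is that the classification is a purely linear-algebraic statement about a small family of $3 \times 3$ real matrices, while the complexity/universality statements are built from perturbative gadgets layered on top of the qubit Hamiltonian-simulation framework of cubitt17.

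For the classification, I would expand each non-$1$-local $H_i \in \mathcal{S}$ in the Pauli basis as
\[
H_i = \sum_{\mu,\nu \in \{I,X,Y,Z\}} c^{(i)}_{\mu\nu}\, \sigma_\mu \otimes \sigma_\nu,
\]
and isolate the $3 \times 3$ real interaction matrix $M^{(i)}$ with entries $c^{(i)}_{\mu\nu}$ for $\mu,\nu \in \{X,Y,Z\}$; the remaining entries only contribute $1$-local terms and can be absorbed into the free single-qubit part. Conjugation by $U^{\otimes 2}$ acts on $M^{(i)}$ by $O\, M^{(i)} O^T$, where $O \in SO(3)$ is the image of $U$ under the double cover $SU(2) \to SO(3)$. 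The first case corresponds to the existence of a single $O$ simultaneously diagonalizing every $M^{(i)}$ (after which $H_i$ is diagonal in the computational basis), the second to all $O\, M^{(i)} O^T$ being of the form $\alpha_i\, \operatorname{diag}(0,0,1)$, i.e.\ rank one in the $ZZ$ direction up to $1$-local corrections, and the third to the complement; exhaustiveness and mutual exclusivity are tautological.

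For the three complexity/simulation statements, the classical case is immediate: after conjugation the Hamiltonian becomes diagonal, and so, up to relabelling, a weighted classical Ising-type model with arbitrary local fields, for which both universal classical-simulator status and $\mathsf{NP}$-completeness are standard (De las Cuevas--Cubitt, Jonsson). In the stoquastic case, after conjugation every off-diagonal entry is non-positive, so the family lies inside the stoquastic class; the perturbative reductions of Bravyi--Hastings and Cubitt--Montanaro then simulate arbitrary stoquastic Hamiltonians from $\mathcal{S}$, yielding universal stoquastic simulation and StoqMA-completeness. In the quantum-universal case, one must exhibit perturbative gadgets showing that arbitrary $2$-qubit interactions can be encoded in the low-energy space of $\mathcal{S}$-Hamiltonians; once this is done, universality and QMA-completeness follow by composition with Kitaev's original $2$-local reduction.

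The main obstacle is this last step: given only the algebraic assumption that no single-qubit basis change makes every $M^{(i)}$ simultaneously \emph{classical} or \emph{$ZZ$-like}, one must construct mediator-qubit gadgets generating a universal set of effective $2$-qubit couplings. The natural strategy is a case analysis based on the rank and signature of a representative $M^{(i)}$: generic rank-$3$ (e.g.\ Heisenberg-like) interactions are handled directly, while for degenerate $M^{(i)}$ one combines second- and third-order perturbation theory with free $1$-local terms to synthesize at least two non-commuting effective couplings, from which the full $2$-qubit interaction space can be bootstrapped using the standard gadget closure arguments. The delicate point is showing that the exclusion of the stoquastic and classical normal forms is \emph{exactly} the algebraic condition guaranteeing that such a non-commuting pair can always be produced; this is where the classification and the gadget construction meet, and is the technical heart of the proof.
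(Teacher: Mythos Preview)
The paper does not contain a proof of this theorem at all: Theorem~\ref{thm:qubits} is stated in the introduction purely as a summary of prior work, with each clause attributed explicitly to the cited references \cite{Science,jonsson00,Cubitt-Montanaro,cubitt17,Bravyi-Hastings}. It serves only as background and motivation for the new qudit results, and the authors make no attempt to reprove it.

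Your proposal is therefore not comparable to anything in the paper. That said, as a sketch of how the cited works actually establish the trichotomy it is broadly accurate: the Pauli decomposition with the $SU(2)\to SO(3)$ action on the $3\times3$ interaction matrix $M^{(i)}$ is precisely the normal form used in \cite{Cubitt-Montanaro}, and the three cases correspond to simultaneous diagonalisability, simultaneous reducibility to rank-$1$ $ZZ$-type, and everything else. The gadget-based bootstrapping you describe for the universal case is also the shape of the argument in \cite{Cubitt-Montanaro,cubitt17}. One small inaccuracy: in the stoquastic case it is not quite true that ``after conjugation every off-diagonal entry is non-positive'' directly from the normal form $\alpha_i Z\otimes Z + A_i\otimes I + I\otimes B_i$, since the $1$-local parts $A_i,B_i$ are arbitrary; the stoquasticity argument in \cite{Bravyi-Hastings,Cubitt-Montanaro} requires an additional local basis rotation on each qubit individually (cf.\ the argument in Lemma~\ref{lem:stoqconj} of the present paper for the qudit analogue). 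But since the paper under review offers no proof here, there is simply nothing to compare your proposal against.
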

A stoquastic Hamiltonian is one whose off-diagonal elements in the standard basis are all nonpositive. Here we sometimes generalise this terminology slightly by also calling $H$ stoquastic if there exists a local unitary $U$ such that $U^{\otimes n} H (U^\dag)^{\otimes n}$ is stoquastic.

% ------------------------------------------------------------------------------

\subsection{Our results}

Here we continue the programme of classifying universality of Hamiltonians -- and hence the computational complexity of the {\sc $\mathcal{S}$-Hamiltonian} problem -- by generalising from qubit interactions to qudit interactions, i.e.\ local dimension $d > 2$, or equivalently spin $>1/2$. As well as being a natural next step from the perspective of computational complexity, this framework includes many important models studied in condensed-matter theory \cite{Affleck89,Beach09,Harada02,Kennedy90,Lauchli06,Lou09,Read89}. However, it is significantly more difficult than the qubit case. One reason for this is that in the case of qubits, there was a simple ``canonical form'' into which any 2-qubit interaction could be put by applying local unitaries~\cite{Cubitt-Montanaro}, which dramatically reduced the number of types of interaction that needed to be considered. No comparably simple canonical form seems to exist for $d > 2$~\cite{linden99}.

We first consider $\cS$-Hamiltonians with local terms. This is a more general setting than just $\cS$-Hamiltonians, and hence easier to prove universality results. From a computer science point of view, allowing free local terms corresponds to allowing arbitrary constraints or penalties on individual variables in a CSP. For conciseness, we say that $\cS$ is LA-universal (``locally assisted universal'') if the family of $\cS$-Hamiltonians with local terms is universal. Similarly, we say that $\cS$ is LA-stoquastic-universal if it can simulate any stoquastic Hamiltonian. Then our main result about universality with local terms is a complete classification theorem:

\begin{thm}
\label{thm:lamainklocal}
Let $\cS$ be a set of interactions, which are not all 1-local, between qudits of dimension $d$. Then $\cS$ is:
\begin{itemize}
\item stoquastic and LA-stoquastic-universal, if there exists $\ket{\psi}\in \C^d$ such that all interactions in $\cS$ are, up to the addition of 1-local terms, given by a linear combination of operators taken from the set $\{I, \proj{\psi},\proj{\psi}^{\otimes 2},\proj{\psi}^{\otimes 3},\dots \}$;
\item LA-universal, otherwise.
\end{itemize}
\end{thm}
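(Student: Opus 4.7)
The plan is to establish the two directions of the dichotomy separately, and in the hard direction to reduce to the qubit classification of Theorem~\ref{thm:qubits}.

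For the easy direction, suppose every $H\in\mathcal{S}$ has the stated form. Pick a local unitary $U$ with $U\ket{0}=\ket{\psi}$; conjugation by $U^{\otimes n}$ sends each generator $\proj{\psi}^{\otimes k}$ to $\proj{0}^{\otimes k}$, which is diagonal in the computational basis, so $\mathcal{S}$ is stoquastic in the generalised sense (with the rotated 1-local pieces absorbed into the free 1-local budget). To upgrade this to LA-stoquastic-universality, I would embed a qubit in the $\{\ket{0},\ket{1}\}$ subspace of each qudit using heavy 1-local penalties on the orthogonal complement, and observe that inside this embedded qubit subspace the terms $\proj{0}^{\otimes 2}$ together with free 1-local operators realise the Ising-plus-local family $\alpha Z^{\otimes 2}+A\otimes I+I\otimes B$, which is known to be LA-stoquastic-universal for qubits by~\cite{cubitt17,Bravyi-Hastings}.

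For the hard direction, fix $H\in\mathcal{S}$ not of the stated form for any $\ket{\psi}$. The strategy is a chain of simulations, each valid in the sense of~\cite{cubitt17}, that extracts an effective 2-qubit operator in the QMA-complete class of Theorem~\ref{thm:qubits}: (i) if $H$ is $k$-local with $k>2$, use 1-local penalties to fix $k-2$ qudits in chosen product states, producing an effective 2-qudit interaction $\tilde H$; (ii) use further 1-local penalties to project each of the two remaining qudits onto a 2-dimensional subspace, producing an effective 2-qubit interaction $\tilde H_{\text{qubit}}$; (iii) argue that the choices in (i) and (ii) can always be made so that $\tilde H_{\text{qubit}}$ is neither 1-local nor of the qubit-stoquastic form $\alpha Z^{\otimes 2}+A\otimes I+I\otimes B$. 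Theorem~\ref{thm:qubits} then yields LA-universality.

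The crux is the structural claim behind step (iii): if, for every 2-qubit subspace, the corresponding projection of $\tilde H$ lies (modulo 1-local) in the rank-one cone $\{\alpha\proj{\phi}^{\otimes 2}\}$, then $\tilde H$ itself lies (modulo 1-local) in $\{\beta\proj{\psi}^{\otimes 2}\}$ for a single global $\ket{\psi}$. I would prove this by contrapositive: the pairwise stoquastic constraints force $\tilde H-(\text{1-local})$ to be simultaneously diagonal across overlapping 2-qubit subspaces, and a compatibility argument collapses this to diagonality in a common product basis $\{\ket{e_i}\otimes\ket{e_j}\}$; the rank-one-product constraint on each $\{\ket{e_i},\ket{e_j}\}$-slice then forces the diagonal coefficients $\lambda_{ij}$ into the shape $a_i+b_j+c\,\delta_{i i_0}\delta_{j i_0}$ for a single distinguished index $i_0$, giving $\tilde H=c\proj{e_{i_0}}^{\otimes 2}+(\text{1-local})$. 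Executing this algebraic collapse, and verifying that the product-state choices in step (i) can be made so that the non-stoquastic signature of $H$ genuinely survives into $\tilde H$ (a generic-choice argument on the coefficients of $H$ in a product operator basis), is where the bulk of the work lies.
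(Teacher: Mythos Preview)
Your structural claim in step (iii) is false, and this is a genuine gap that cannot be repaired without introducing the higher-order mediator gadgets that the paper uses. Take $d=4$, $A=\diag(1,1,-1,-1)$, and $\tilde H=A\otimes A$. Then $\tilde H$ is \emph{not} of the form $\beta\proj{\psi}^{\otimes 2}+(\text{1-local})$, because that would force the traceless part of $A$ to be proportional to $\proj{\psi}-I/4$, which has eigenvalue multiplicities $(1,3)$ rather than $(2,2)$. Yet for \emph{every} choice of 2-dimensional subspaces $P_1,P_2$ (even different ones on the two qudits, with independently chosen bases), the projected interaction $(P_1AP_1)\otimes(P_2AP_2)$ has 2-local part $(b_1\cdot\sigma)\otimes(b_2\cdot\sigma)$ for some Bloch vectors $b_1,b_2\in\R^3$; after conjugating each logical qubit by a unitary sending $b_i\mapsto|b_i|e_3$, this becomes $|b_1||b_2|\,Z\otimes Z$, landing squarely in the qubit-stoquastic class of Theorem~\ref{thm:qubits}. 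The same local rotation works simultaneously across all pairs in an $n$-qubit system, so no choice of per-qudit encodings escapes stoquasticity. Your ``algebraic collapse'' argument also fails here concretely: the diagonal coefficients are $\lambda_{ij}=\lambda_i\lambda_j$ with $\lambda=(1,1,-1,-1)$, and one checks directly that $\lambda_i\lambda_j-c\,\delta_{ii_0}\delta_{ji_0}$ can never be written as $a_i+b_j$ for any $i_0,c$.

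The paper's proof addresses precisely this obstruction by going beyond first-order projection. For $A\otimes A$ with $A$ having two eigenvalues of multiplicity $\ge 2$, it first projects to a $3$-dimensional subspace to obtain $B\otimes B$ with $B$ having three distinct eigenvalues (Lemma~\ref{lem:AAuniversal}), and then uses a \emph{third-order} mediator gadget (Lemma~\ref{lem:3evalues}) to manufacture $A\otimes A^2+A^2\otimes A$, which has 2-local rank $2$ and can then be handled by direct qubit projection (Lemma~\ref{lem:diagonal}). More generally, for interactions of 2-local rank $\ge 2$ that are not already of the form $A\otimes A+B\otimes B$, a \emph{second-order} mediator gadget (Lemma~\ref{lem:rank2ormore}) is needed to reduce to that form. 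These perturbative steps are essential and are not captured by your first-order-only scheme.
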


We note some general consequences of this result for Hamiltonians assisted by local terms. First, we see that any nontrivial $k$-qudit interaction can be used to simulate an arbitrary stoquastic Hamiltonian. Second, almost any $k$-qudit interaction can actually be used to simulate arbitrary general Hamiltonians. Third, perhaps surprisingly, there exist Hamiltonians whose 2-local part is diagonal, but which are LA-universal.

We highlight some examples for $d=3$. Consider
\[ \cS_1 = \left\{ \begin{pmatrix} 1 & 0 & 0\\ 0 & -1 & 0\\ 0 & 0 & -1 \end{pmatrix}^{\otimes 2} \right\}, \;\;\;\; \cS_2 = \left\{ \begin{pmatrix} 1 & 0 & 0\\ 0 & -1 & 0\\ 0 & 0 & 0 \end{pmatrix}^{\otimes 2} \right\}. \]
The single interaction in $\cS_1$ is equal to $\proj{0}^{\otimes 2}$ plus some 1-local terms, so $\cS_1$ is stoquastic and LA-stoquastic-universal. On the other hand, the interaction in $\cS_2$ cannot be decomposed in this way, so $\cS_2$ is LA-universal. So, for example, given access to interactions of the form of $\cS_2$ and arbitrary local terms, one can perform universal quantum computation.

Next we consider the more general $H$-Hamiltonian problem, where the lack of ``free'' 1-local terms makes it much more challenging to prove universality results. Here we focus on qudit generalisations of the qubit Heisenberg (exchange) interaction ($h\propto X\otimes X+Y\otimes Y+Z\otimes Z$). Hamiltonians built from this interaction enjoy significant levels of symmetry, which made it one of the most difficult cases to prove universal in previous work~\cite{Cubitt-Montanaro,cubitt17}. The most symmetric such generalisation in local dimension $d$ is the $SU(d)$ Heisenberg model (often known as ``$SU(N)$ Heisenberg model'' in the literature~\cite{Lou09,Beach09}), where the interaction is
\be \label{eq:heisenbergsud} h=\sum_{a=1}^{d^2-1} T^a \otimes T^a \ee
for some $d\times d$ traceless Hermitian matrices $T^a$ such that $\tr(T^a T^b)=\frac{1}{2}\delta_{ab}$. Up to adding an identity term and rescaling, $h$ is just the swap operator, or the projector onto the symmetric subspace of two qudits,
\[ P_{\text{sym}} = \frac{1}{4} \sum_{i,j} (\ket{ij} + \ket{ji})(\bra{ij} + \bra{ji}). \]
$h$ is invariant under conjugation by local unitaries, implying that the eigenspaces of any Hamiltonian built only from $h$ interactions inherit this property. Nevertheless, we have the following result:

\begin{thm}
\label{thm:sud}
For any $d \ge 2$, the $SU(d)$ Heisenberg interaction $h:=\sum_{a} T^a \otimes T^a$, where $\{T^a\}$ are traceless Hermitian matrices such that $\tr(T^a T^b)=\frac{1}{2}\delta_{ab}$, is universal. This holds even if the weights $\alpha_i$ in the decomposition (\ref{eq:generalform}) are restricted to be non-negative.
\end{thm}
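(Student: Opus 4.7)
The plan is to reduce to the $d=2$ case: the $SU(2)$-Heisenberg interaction on qubits is already known to be universal by Theorem \ref{thm:qubits} (being neither locally diagonalisable nor of the restricted stoquastic form), so it suffices to show that $SU(d)$-Heisenberg Hamiltonians on qudits can simulate arbitrary $SU(2)$-Heisenberg Hamiltonians on qubits, using only non-negative weights.

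The central obstacle is the global $SU(d)$-invariance of $h$: any $h$-Hamiltonian commutes with the diagonal $SU(d)$ action, so no cluster of qudits coupled only by $h$ can have a ground space that singles out a fixed $2$-dimensional subspace of any individual qudit. We will circumvent this via an encoded-qubit construction based on Schur--Weyl duality. Writing $(\C^d)^{\otimes n} \cong \bigoplus_\lambda V_\lambda \otimes W_\lambda$ with $V_\lambda$ the $SU(d)$-irreducibles and $W_\lambda$ the corresponding Specht modules, the Fierz identity $h = \tfrac{1}{2}(S - I/d)$ (with $S$ the swap) shows that, up to a constant, $h_{ij}$ acts as the transposition $(i,j) \in S_n$ on the multiplicity factor $W_\lambda$ and trivially on $V_\lambda$. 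By the Schur--Weyl double commutant theorem, these transpositions generate all of $\mathrm{End}(W_\lambda)$, so arbitrary Hermitian operators on any multiplicity space are in principle realisable as $h$-Hamiltonians.

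Concretely, we would encode each logical qubit into a cluster of three qudits, placing the encoded qubit in the multiplicity space $W_{(2,1)}$ of the standard representation of $S_3$, which has dimension $2$. Strong positive-weight intra-cluster $h$-couplings would be used to energetically isolate the $(2,1)$-sector of each cluster, invoking a perturbative gadget in the style of~\cite{cubitt17}. Inter-cluster Heisenberg couplings between corresponding qudits in different clusters would then be analysed via second-order perturbation theory to produce effective Heisenberg-type interactions on the encoded qubits. Non-negativity of couplings can be ensured by additive identity shifts $h \mapsto h + cI$, which do not alter the simulation but allow arbitrary sign patterns at the level of encoded interactions.

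The hardest step will be verifying that the resulting effective interactions between encoded qubits are rich enough, and of a suitable form, to realise every $SU(2)$-Heisenberg Hamiltonian on those qubits. The $SU(d)$ symmetry is inherited at the level of the encoding (different clusters still play symmetric roles), so one must ensure that the effective two-qubit couplings span all strength and sign patterns needed --- possibly by combining several gadgets at different perturbative orders, or by introducing additional auxiliary clusters whose contribution fills in any missing directions in the space of achievable effective operators. Once effective $SU(2)$-Heisenberg universality on the encoded qubits is established, Theorem \ref{thm:qubits} applied at $d=2$ closes the argument.
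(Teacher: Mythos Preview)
Your Schur--Weyl framework is the right lens, but the concrete $3$-qudit gadget fails on two counts that break the argument rather than merely leave it incomplete.

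First, for $d\ge 3$ and non-negative weights you cannot energetically isolate the $(2,1)$ sector of a $3$-qudit cluster. Writing $h_{ij}=\tfrac12(S_{ij}-I/d)$, the intra-cluster Hamiltonian $\alpha h_{12}+\beta h_{13}+\gamma h_{23}$ acts on each $W_\lambda$ as $\tfrac12\bigl(\alpha(12)+\beta(13)+\gamma(23)\bigr)$ plus a scalar. On the sign representation $W_{(1,1,1)}$ this element of $\C[S_3]$ has eigenvalue $-(\alpha+\beta+\gamma)$, whereas on $W_{(2,1)}$ its eigenvalues are $\pm\sqrt{\alpha^2+\beta^2+\gamma^2-\alpha\beta-\beta\gamma-\alpha\gamma}$; since the radical is at most $\alpha+\beta+\gamma$ for $\alpha,\beta,\gamma\ge 0$, the antisymmetric sector is always the ground space once it exists (i.e.\ for all $d\ge 3$). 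Your identity-shift remark does not help: $h\mapsto h+cI$ shifts every eigenvalue equally and creates no effective negative coupling.

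Second, and more fundamentally, the $(2,1)$ isotype on three qudits has dimension $\dim V_{(2,1)}\cdot\dim W_{(2,1)}$ with $\dim V_{(2,1)}=d(d^2-1)/3\ge 2$ for every $d\ge 2$. Definition~\ref{dfn:sim} requires the encoding isometry to map \emph{onto} the low-energy subspace, so the extra $V_{(2,1)}$ factor per cluster is not a harmless spectator: $n$ clusters have low-energy dimension $\bigl(\dim V_{(2,1)}\bigr)^n\cdot 2^n$, and inter-cluster $h$-terms act nontrivially on the $V$ tensor factors, so the second-order effective Hamiltonian is not a clean operator on the $2^n$-dimensional $W_{(2,1)}^{\otimes n}$ register.

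The paper's construction resolves both issues at once. It uses $2d$ physical qudits per logical qubit with heavy Hamiltonian $H_0=C(E)+C(A)+C(B)-\tfrac{d^2-1}{d}I$ chosen so that the ground space sits in the \emph{trivial} $SU(d)$ isotype (two antisymmetric $d$-qudit singlets, distributed in the two inequivalent ways among the blocks $\{1\}\cup A$ and $\{2\}\cup B$). This forces $\dim V_\lambda=1$, yielding an honest $2$-dimensional ground space, and the block structure supplies enough distinct placements of physical $h$ interactions (e.g.\ $h_{1A}$ versus $h_{1B}$ give opposite signs of $X_L$) that both signs of logical $X,Z$ and a rank-$2$ two-qubit interaction $XX+\tfrac{3}{d^2-1}ZZ$ are obtained with non-negative weights only.
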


The special case $d=2$ of Theorem \ref{thm:sud} was shown in~\cite{cubitt17}. As a corollary of Theorem~\ref{thm:sud}, we obtain QMA-hardness of a quantum variant of the Max-$d$-Cut problem~\cite{frieze97} (equivalently, a quantum generalisation of the (classical) antiferromagnetic Potts model~\cite{wu82}). In the Max-$d$-Cut problem, we are given a graph where each edge $(i,j)$ has a non-negative weight $w_{ij}$, and are asked to partition the vertices into $d$ sets, such that the sum of the weights of edges between vertices in different sets is maximised. That is, we find a map $c$ from each vertex $i$ to an integer $c(i) \in [d]$ such that $\sum_{i<j} w_{ij} (1-\delta_{c(i)c(j)})$ is maximised. The natural ``quantum'' way of generalising this problem is to replace each vertex with a $d$-dimensional qudit, and replace each weighted edge across two vertices with a weighted projector onto the symmetric subspace across the corresponding qudits (equivalently, an interaction $h$). Then the task is to approximate the ground-state energy of the corresponding Hamiltonian $\sum_{i<j} w_{ij} h_{ij}$, up to precision $1/\poly(n)$. Call this problem {\sc Quantum Max-$d$-Cut}.

To see why this is a suitable (and non-trivial) generalisation, note that $P_{\text{sym}}$ gives an energy penalty to a pair of qudits that are both in the same computational basis state, similarly to the classical case, but that the behaviour of the quantum variant can sometimes be quite different. For example, consider the case $d=2$, and four vertices arranged in an unweighted cycle. Classically, the vertices can clearly be partitioned into two sets such that there are no edges between vertices in the same set. However, there is no quantum state that is simultaneously in the ground space of all corresponding projectors $P_{\text{sym}}$. This is because the unique ground state of $P_{\text{sym}}$ is maximally entangled, and each qubit cannot be maximally entangled with both of its neighbours simultaneously.

It is an immediate consequence of Theorem \ref{thm:sud} that:

\begin{cor}
\label{cor:maxdcut}
For any $d \ge 2$, {\sc Quantum Max-$d$-Cut} is QMA-complete.
\end{cor}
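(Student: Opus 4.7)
The plan is to reduce directly to Theorem~\ref{thm:sud}. Containment in QMA is immediate, since {\sc Quantum Max-$d$-Cut} is a special case of the $2$-local Hamiltonian problem (with non-negative weights on copies of a fixed interaction), and the latter lies in QMA by the standard witness-and-verifier argument.

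For QMA-hardness, I would use the observation made in the paragraph preceding Theorem~\ref{thm:sud}: the $SU(d)$ Heisenberg interaction $h$ differs from the symmetric-subspace projector $P_{\text{sym}}$ only by an additive multiple of the identity. A short computation using $\sum_a T^a \otimes T^a = \frac{1}{2}(S - \frac{1}{d}I)$ together with $S = 2 P_{\text{sym}} - I$ gives $h = P_{\text{sym}} - \frac{d+1}{2d} I$. Consequently, for any non-negative weights $\{\alpha_i\}$, the Hamiltonians $\sum_i \alpha_i h^{(i)}$ and $\sum_i \alpha_i P_{\text{sym}}^{(i)}$ have identical eigenvectors, and their ground-state energies differ by a shift that can be computed in polynomial time. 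Approximating the ground-state energy of one to $1/\poly(n)$ precision therefore reduces in polynomial time to approximating that of the other.

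By Theorem~\ref{thm:sud}, the family of $h$-Hamiltonians with non-negative coefficients is universal, so it can efficiently simulate any family already known to yield a QMA-complete Local Hamiltonian problem (for instance, general 2-local qubit Hamiltonians, which are QMA-hard by Theorem~\ref{thm:qubits}). Because simulation in this paper's sense preserves the low-energy spectrum up to a local encoding isometry, a $1/\poly(n)$ approximation to the ground-state energy of the simulating $h$-Hamiltonian yields a $1/\poly(n)$ approximation to the ground-state energy of the simulated target. Combining this with the equivalence between $h$-Hamiltonians and $P_{\text{sym}}$-Hamiltonians above gives QMA-hardness, and hence QMA-completeness, of {\sc Quantum Max-$d$-Cut}. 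The only point that must be checked is that the universality supplied by Theorem~\ref{thm:sud} is efficient in system size, weights, and simulation error; this is standard and is built into the universality notion used throughout the paper, so no additional work beyond invoking the theorem is needed, which is why the authors call the corollary immediate.
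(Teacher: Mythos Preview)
Your proposal is correct and follows exactly the route the paper intends: the corollary is stated as ``an immediate consequence of Theorem~\ref{thm:sud}'' with no further proof given, and you have accurately supplied the implicit details (the affine relation $h = P_{\text{sym}} - \tfrac{d+1}{2d}I$, the resulting polynomial-time equivalence of ground-state energies, and the fact that efficient universality with non-negative weights yields QMA-hardness). Nothing needs to be added or changed.
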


The special case $d=2$ of Corollary \ref{cor:maxdcut} was shown in~\cite{Piddock-Montanaro}.

Next, we consider the case where the interactions are of the form $P=\proj{\psi}$ for an entangled two qudit state $\ket{\psi}$.
\begin{thm}
\label{thm:proj}
Let $P= \proj{\psi}$  be the projector onto an entangled two-qudit state $\ket{\psi} \in (\C^d)^{\otimes 2}$. Then $\{P\}$-Hamiltonians are universal.
\end{thm}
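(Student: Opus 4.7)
The plan is to reduce to Theorem~\ref{thm:lamainklocal} and then generate arbitrary 1-local terms using only $P$-interactions. For the reduction I would verify that $\{P\}$ falls into the LA-universal branch. Writing $\ket{\psi}=\sum_{i=1}^{r}\lambda_i\ket{a_i}\ket{b_i}$ in Schmidt form with Schmidt rank $r\ge 2$ (since $\ket{\psi}$ is entangled), a direct calculation shows that the 2-local traceless part of $\proj{\psi}$ has operator Schmidt rank at least $r^2-1\ge 3$. In particular it is not a scalar multiple of $(\proj{\phi}-I/d)^{\otimes 2}$, which has operator Schmidt rank $1$, for any $\ket{\phi}\in\C^d$. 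Hence the stoquastic alternative of Theorem~\ref{thm:lamainklocal} fails, and $\{P\}$ is LA-universal: any Hamiltonian can be simulated by a $\{P\}$-Hamiltonian augmented by arbitrary 1-local terms.

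It therefore suffices to simulate an arbitrary 1-local Hermitian operator on any logical qudit using only $P$-interactions. For this I would build a perturbation gadget in the spirit of the proof of Theorem~\ref{thm:sud}. Since $-P$ has $\ket{\psi}$ as its unique ground state, a strong term $-\Delta P$ on two ancilla qudits pins them in $\ket{\psi}$ with error $O(1/\Delta)$. Adding a weak coupling $\alpha P$ between a logical qudit and one of those ancillas then produces, at first order in Schrieffer--Wolff perturbation theory, the effective 1-local operator $\alpha\,\langle\psi|_2\,P_{1,2}\,|\psi\rangle_2 \;=\; \alpha\sum_i \lambda_i^{4}\,\proj{a_i}$ on the logical qudit, which is diagonal in the Schmidt basis of side $A$. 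Exchanging which ancilla the logical qudit couples to yields the analogous diagonal operator in the $\{\ket{b_i}\}$ basis, while longer ancilla chains and second-order virtual excitations out of the frozen subspace should provide off-diagonal matrix elements in each basis.

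The main obstacle is to show that, by combining all such gadget configurations, the real-linear span of achievable effective 1-local operators fills the whole space of Hermitian operators on $\C^d$. First-order gadgets alone only produce operators diagonal in a single fixed basis, so the off-diagonal contributions must come from higher-order perturbation theory or from more elaborate multi-ancilla encodings; a careful case analysis based on the Schmidt decomposition of $\ket{\psi}$, exploiting Schmidt rank $\ge 2$ to rule out invariant subspaces, will be needed to establish the spanning property. Once this is in hand, applying one independent gadget per 1-local term appearing in the LA-universal simulation, together with the sequential-composition property of simulations from~\cite{cubitt17}, lifts LA-universality to full universality and completes the proof.
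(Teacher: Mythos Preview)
Your reduction to Theorem~\ref{thm:lamainklocal} is fine: the 2-local part of $\proj{\psi}$ has 2-local rank at least $r^2-1\ge 3$ when the Schmidt rank is $r\ge 2$, so $\{P\}$ is indeed LA-universal. The genuine gap is the second half of the plan, and it is not merely a matter of ``careful case analysis'': for the maximally entangled state the strategy fails outright. If $\ket{\psi}=\frac{1}{\sqrt{d}}\sum_i\ket{i}\ket{i}$ then $(O\otimes O)\ket{\psi}=\ket{\psi}$ for every real orthogonal $O$, so $P$ is $O\otimes O$--invariant and every $\{P\}$-Hamiltonian on $n$ qudits commutes with $O^{\otimes n}$. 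Any effective 1-local operator produced by a perturbative gadget inherits this symmetry, hence must commute with all of $O(d)$ and is therefore a scalar multiple of the identity. In particular your first-order gadget gives $\alpha\sum_i\lambda_i^4\proj{a_i}\propto I$ in this case, and no higher-order or multi-ancilla refinement can escape the symmetry constraint. Thus the spanning property you identify as ``the main obstacle'' is provably false for at least one (important) choice of $\ket{\psi}$, and LA-universality cannot be upgraded to full universality along this route.

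The paper's proof avoids this trap by not attempting to manufacture arbitrary 1-local terms. It only ever produces the specific diagonal operators $R=\sum_j\lambda_j^4\proj{j}$ and $\alpha R+\beta^2 R^2$, and then uses them as heavy Hamiltonians in a further first-order gadget that projects each qudit into a carefully chosen subspace. In that subspace the projected interaction $(\Pi\otimes\Pi)P(\Pi\otimes\Pi)$ is either (i) the alternative $SU(d')$ Heisenberg interaction $\widetilde{h}$ of Section~\ref{sec:alternativeSUd}, when $R$ has a degenerate nonzero eigenspace (this is exactly what happens for maximally entangled $\ket{\psi}$), or (ii) a concrete 2-qubit interaction of the form $\tfrac{\lambda_1\lambda_2}{2}(XX-YY)+\tfrac{\lambda_1^2+\lambda_2^2}{4}ZZ+\text{1-local}$, when the two largest Schmidt coefficients are distinct. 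Universality then follows from Corollary~\ref{cor:alternativeSUd} or from the qubit results of~\cite{Piddock-Montanaro,Cubitt-Montanaro}, respectively. The key conceptual difference is that the paper reduces to other known-universal \emph{2-local} interactions rather than trying to bootstrap 1-local freedom; this is what lets the highly symmetric maximally entangled case go through.
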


In fact, Theorem \ref{thm:proj} holds even in the restrictive setting where all the interactions are required to sit on the edges of a bipartite interaction graph (see Section \ref{sec:proj} for a precise statement). 
Entanglement is a very well studied property of quantum systems, and is well known to be fundamental to many interesting quantum phenomena.
This result can be viewed as an intriguing and apparently tight link between entanglement and universality.

A perhaps more familiar, and also very well-studied, interaction we consider is another generalisation of the qubit Heisenberg interaction (e.g.~\cite{Affleck89,parkinson10,mattis93}): the $SU(2)$ Heisenberg interaction in local dimension $d$ (often just called the ``spin-$s$ Heisenberg interaction'', where $s=(d-1)/2$). Now the interaction is of the form
\[ h = S^x \otimes S^x + S^y \otimes S^y + S^z \otimes S^z, \]
where $S^x$, $S^y$, $S^z$ generate a $d$-dimensional irreducible representation of $\mathfrak{su}(2)$ and correspond to the familiar Pauli matices $X$, $Y$, $Z$ (up to an overall scaling factor). Note that, although the Lie algebra involved is the same as for the qubit case, the interaction $h$ may have very different properties for higher $d$; for example, it has $d$ distinct eigenvalues (see equation (\ref{eq:su2tensordecomp}) below). Nevetheless, this generalisation turns out to be universal too:

\begin{thm}
\label{thm:su2}
For any $d \ge 2$, the $SU(2)$ Heisenberg interaction $h= S^x \otimes S^x + S^y \otimes S^y + S^z \otimes S^z$, where $S^x$, $S^y$, $S^z$ are representations of the Pauli matrices $X$, $Y$, $Z$, is universal.
\end{thm}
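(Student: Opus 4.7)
The plan is to reduce Theorem~\ref{thm:su2} to the entangled-projector universality result (Theorem~\ref{thm:proj}) by perturbative gadgets. The case $d=2$ is immediate: for spin-$1/2$ the $SU(2)$ Heisenberg interaction coincides with the $SU(d)$ Heisenberg interaction, so Theorem~\ref{thm:sud} applies directly. I therefore focus on $d\ge 3$, aiming to simulate a Hamiltonian proportional to the rank-one projector $P_0 = \proj{\psi_0}$ onto the two-qudit singlet state of two spin-$s$ particles. Since $\ket{\psi_0}$ is maximally entangled, Theorem~\ref{thm:proj} will then immediately yield universality.

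The structural fact I would exploit is the spectral decomposition $h = \sum_{J=0}^{2s} \lambda_J P_J$ with $\lambda_J = \tfrac{1}{2}(J(J+1) - 2s(s+1))$ and $P_J$ the projector onto the total-spin-$J$ sector of two spin-$s$ qudits. The $d$ projectors $\{P_J\}$ form a basis of the full algebra of $SU(2)$-invariant operators on two such qudits, and since the $\lambda_J$ are pairwise distinct, $\{I,h,h^2,\ldots,h^{d-1}\}$ is an equivalent (Vandermonde) basis of this algebra. Because every $h$-coupling commutes with the global $SU(2)$ action, any effective interaction on a system pair that arises from a gadget must live in this $d$-dimensional invariant algebra. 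The task is therefore to construct gadgets whose effective two-body interactions, together with the direct $h$, span this algebra so that $P_0$ can be isolated.

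As a warm-up I would analyse the four-qudit gadget $A{-}M_1{=}M_2{-}B$ in which the ancillae $M_1,M_2$ are coupled by a strong antiferromagnetic $\Delta h_{M_1 M_2}$ (locking them into the singlet $\ket{\phi}$) and the system qudits $A,B$ couple weakly through $h_{AM_1}$ and $h_{BM_2}$. Second-order perturbation theory, together with the singlet identities $S^a_{M_1}\ket{\phi} = -S^a_{M_2}\ket{\phi}$ and $\bra{\phi}S^a_{M_1}S^b_{M_1}\ket{\phi} \propto \delta^{ab}$, gives an effective interaction simply proportional to $h_{AB}$: the simplest gadget produces no new operator. The heart of the proof is therefore to design richer gadgets — for instance more mediator layers, several mediator pairs in parallel, or more intricate interaction graphs analysed to higher perturbative order — so that the effective interactions realise the higher powers $h^2,\ldots,h^{d-1}$, and hence jointly span all $SU(2)$-invariants, from which $P_0$ may be extracted by a suitable linear combination.

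The main obstacle will be carrying out this gadget construction in full: one must exhibit a family of gadgets whose effective Hamiltonians, after projection onto the low-energy subspace, genuinely realise linearly independent powers of $h$ rather than collapsing back to scalar multiples of $h$, and one must track the relevant $SU(2)$ Clebsch-Gordan coefficients carefully enough to verify that the coefficients in the target direction are nonzero. The perturbative error must also be controlled in operator norm in the $\Delta\to\infty$ limit so as to fit into the simulation framework of Section~\ref{sec:prelims}, and a concatenation argument is needed to upgrade pairwise simulation of $P_0$ to simulation of arbitrary weighted graphs of $P_0$ interactions, at which point Theorem~\ref{thm:proj} closes the proof.
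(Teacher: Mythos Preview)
Your overall strategy --- span the $SU(2)$-invariant two-body algebra using effective interactions generated by $h$-gadgets, then extract the singlet projector $P_0$ and invoke Theorem~\ref{thm:proj} --- is sound in principle and shares its starting observation with the paper. But the proposal stops exactly at the hard step: you note that the simplest second-order mediator gadget collapses back to a multiple of $h$, and then defer the construction of ``richer gadgets'' producing $h^2,\dots,h^{d-1}$ to future work. That is the entire content of the theorem.

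The paper's proof shows why this step is nontrivial and how to avoid having to climb all the way to $h^{d-1}$. First, obtaining even $h^2$ from $h$ requires a \emph{fourth-order} mediator gadget (Lemma~\ref{lem:htoh2}): a pair of mediator qudits are locked into the singlet and the system qudits couple to one of them, with second- and third-order corrections cancelled by carefully chosen $H_2,H_3$ counterterms. This is why the paper develops a new fourth-order simulation lemma (Lemma~\ref{lem:fourthorder}) and a cross-gadget interference analysis (Lemma~\ref{lem:4thorderinterference}) --- machinery your proposal does not anticipate. Second, rather than iterating to obtain $h^3,\dots,h^{d-1}$ (which would presumably need still higher perturbative orders), the paper uses $h$ and $h^2$ to build the operator $(C-2I)^2$, whose ground space is the three-dimensional spin-$1$ sector of two physical qudits; encoding a logical qutrit there, the sum $h_{13}+h_{14}+h_{23}+h_{24}$ acts as the qutrit $SU(2)$ Heisenberg interaction. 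One more application of Lemma~\ref{lem:htoh2} at the qutrit level yields $h'+(h')^2$, which for $d=3$ is exactly the $SU(3)$ Heisenberg interaction, universal by Theorem~\ref{thm:sud}.

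So the key ideas you are missing are (i) that fourth-order perturbation theory is already needed for $h^2$, and (ii) that a dimension-reduction trick to qutrits lets you stop at $h^2$ rather than generate the full tower of powers. Your endpoint via $P_0$ and Theorem~\ref{thm:proj} would also work once you have enough powers of $h$ (indeed for $d=3$ it would be an alternative final step), but without the fourth-order gadget you cannot get there.
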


Finally, we consider yet another well-studied generalisation of the Heisenberg model (see e.g.~\cite{AKLT,Harada02,Kennedy90,Lauchli06}): the general bilinear-biquadratic Heisenberg model in local dimension $d=3$ (spin 1). Here the interaction used is
\[ h^{(\theta)} := (\cos \theta) h + (\sin \theta) h^2, \]
where $\theta \in [0,2\pi)$ is an arbitrary parameter and $h$ is the spin-1 Heisenberg interaction, which can be written explicitly as
\be \label{eq:spin1intro} h = X_3 \otimes X_3 + Y_3 \otimes Y_3 + Z_3 \otimes Z_3 \ee
where
\[ X_3 = \frac{1}{\sqrt{2}} \begin{pmatrix} 0 & 1 & 0\\ 1& 0& 1\\ 0 & 1 & 0\end{pmatrix},\;\;\;\;Y_3 = \frac{i}{\sqrt{2}} \begin{pmatrix} 0 & -1 & 0\\ 1& 0& -1\\ 0 & 1 & 0\end{pmatrix},\;\;\;\;Z_3 = \begin{pmatrix} 1 & 0 & 0\\ 0& 0& 0\\0 & 0 & -1\end{pmatrix}. \]
The special case $\theta = \arctan 1/3$ corresponds to the famous Affleck-Kennedy-Lieb-Tasaki (AKLT) model~\cite{AKLT}. Our result here is as follows:

\begin{thm}
\label{thm:bilinbiq}
Let $h^{(\theta)} := (\cos \theta) h + (\sin \theta) h^2$, where $\theta \in [0,2\pi)$ is an arbitrary parameter and $h$ is the spin-1 Heisenberg interaction. For all $\theta$, $h^{(\theta)}$ is universal.
\end{thm}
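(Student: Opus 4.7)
The plan is to reduce, for every value of $\theta$, to one of the two already-proved universality results: Theorem \ref{thm:su2} (simulation of the spin-$1$ Heisenberg interaction $h$) or Theorem \ref{thm:proj} (simulation of the singlet projector $\proj{\psi^-}$, which is valid because the singlet is entangled). The starting observation is that $h$ commutes with the diagonal $SU(2)$ action on $(\C^3)^{\otimes 2}$, so $h^{(\theta)}$ does too, and hence $h^{(\theta)}$ lies in the three-dimensional subspace spanned by the projectors $\Pi_0,\Pi_1,\Pi_2$ onto the total-spin $S=0,1,2$ sectors. Using the spectral decomposition $h = -2\Pi_0 - \Pi_1 + \Pi_2$ (and hence $h^2 = 4\Pi_0 + \Pi_1 + \Pi_2$), one obtains
\[ h^{(\theta)} = \lambda_0(\theta)\Pi_0 + \lambda_1(\theta)\Pi_1 + \lambda_2(\theta)\Pi_2, \]
with $\lambda_0=-2\cos\theta+4\sin\theta$, $\lambda_1=-\cos\theta+\sin\theta$, $\lambda_2=\cos\theta+\sin\theta$.

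I would dispose of the easy angles first. At $\theta\in\{0,\pi\}$ we have $h^{(\theta)}=\pm h$ and Theorem \ref{thm:su2} applies directly. At $\theta\in\{\pi/2,3\pi/2\}$ we have $h^{(\theta)}=\pm h^2=\pm(3\Pi_0+I)$, which up to a sign, an overall scaling, and a trivial identity piece is exactly $\proj{\psi^-}$, so Theorem \ref{thm:proj} applies.

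For generic $\theta$ the three coefficients $\lambda_S(\theta)$ are all distinct, and I would use a mediator gadget in the style of \cite{cubitt17}. Take four qutrits $A,M_1,M_2,B$, impose a heavy $h^{(\theta)}$ on the pair $M_1M_2$ with weight $\Delta$ (signed so that a single sector $\Pi_{S^*}^{M_1M_2}$ is the unique ground space, e.g.\ $S^*=0$), and couple $A$ to $M_1$ and $B$ to $M_2$ by unit-weight $h^{(\theta)}$ terms. Second-order perturbation theory then produces an effective two-body interaction on $AB$ of order $1/\Delta$, which by the manifest $SU(2)$-invariance of the construction must again lie in $\linspan(\Pi_0^{AB},\Pi_1^{AB},\Pi_2^{AB})$. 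Computing the matrix elements (contracting $h^{(\theta)}_{AM_1}\cdot h^{(\theta)}_{BM_2}$ against the mediator ground projector and summing over excited intermediate sectors) yields a linear combination of the $\Pi_S^{AB}$ whose coefficients depend non-linearly on the $\lambda_S(\theta)$; for generic $\theta$ this is linearly independent of $h^{(\theta)}$ modulo the identity. The original $h^{(\theta)}$ and the gadget's output then span the entire two-dimensional quotient $\linspan(\Pi_0,\Pi_1,\Pi_2)/\R I$, so one can choose real weights $\alpha_i$ of either sign on the two interaction types to synthesise the Heisenberg interaction $h$ itself (up to the harmless identity shift), and Theorem \ref{thm:su2} finishes the argument.

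The main obstacle is the remaining degenerate angles, where two of the $\lambda_S(\theta)$ coincide: the AKLT point $\theta=\arctan(1/3)$ (at which $h^{(\theta)}\propto \Pi_2 + cI$, a projector onto the five-dimensional spin-$2$ sector), the ``purely biquadratic'' point $\theta=\pi/4$ (where $h^{(\theta)}\propto -\Pi_1 + cI$), and their $\pi$-shifts. At these angles the mediator gadget above fails as stated because the ground space of $\Delta\, h^{(\theta)}_{M_1M_2}$ is $3$- or $5$-dimensional, so first-order perturbation within the ground subspace dominates and the clean rank-$1$ projection picture collapses. I would handle them with a modified gadget: either a three- or four-qutrit mediator cluster whose frustrated low-energy subspace is again one-dimensional (for the AKLT point, the ground space of $\sum \Pi_2$ on three sites is well-studied and highly entangled), or a two-step reduction in which the degenerate $h^{(\theta)}$ is first used to produce an effective non-degenerate interaction that is then fed back into the generic gadget. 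Carrying out the perturbative bookkeeping and verifying linear independence of the effective coefficients at these special angles is where I expect almost all of the technical work to sit.
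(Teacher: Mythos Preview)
Your overall architecture is close to the paper's: both restrict to $\theta\in[0,\pi]$ using the sign freedom, both lean on a second-order mediator gadget whose heavy term projects a mediator pair onto the singlet sector $\Pi_0$, and your suggestion of a three-qutrit mediator cluster for the AKLT point is exactly Lemma~\ref{lem:AKLTmodel}.

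There is, however, a genuine gap. You assert that for ``generic $\theta$'' (all three $\lambda_S$ distinct) one can sign the heavy term so that a single sector $\Pi_{S^*}$ is the ground space, and then run second-order perturbation theory. But for the gadget to output a two-body interaction on $AB$ you need that sector to be \emph{one-dimensional}, i.e.\ $S^*=0$. For every $\theta\in(\arctan\tfrac13,\tfrac\pi4)$ the ordering is $\lambda_1<\lambda_0<\lambda_2$: the singlet sits strictly in the middle, so no sign choice puts $\Pi_0$ at the bottom or top of the spectrum of $\pm h^{(\theta)}$. If instead $S^*=1$ or $S^*=2$, the mediators carry a nontrivial internal spin; one checks that $\Pi_{S^*}h^{(\theta)}_{AM_1}\Pi_{S^*}$ already contains a genuine $A$--mediator Heisenberg coupling at first order, so perturbation theory produces a three-body effective Hamiltonian on $A$, $B$ and the mediator degree of freedom rather than an $AB$ interaction. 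This entire open interval is therefore not covered by your scheme, and it is not a degenerate-angle issue. The paper closes it by a different mechanism (Lemma~\ref{lem:logicalsu2}): one \emph{encodes} a logical qutrit in the three-dimensional $\Pi_1$ ground space of $h^{(\theta)}$ on two physical qutrits, and shows that physical $h^{(\theta)}$ interactions across two such blocks simulate the logical $SU(3)$ Heisenberg interaction.

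Two smaller remarks. You flag $\theta=\pi/4$ as a hard degenerate case, but $h^{(\pi/4)}\propto h+h^2$ is, up to an identity shift, the swap operator---the $SU(3)$ Heisenberg interaction itself, already universal by Theorem~\ref{thm:sud}. And ``linearly independent for generic $\theta$'' conceals a real failure: the paper's mediator gadget also breaks down at the isolated point $\theta=\arctan 2$, where its output is proportional to $h^{(\theta)}$; this is absorbed into the deliberately overlapping range $(\arctan\tfrac13,\arctan 5)$ of Lemma~\ref{lem:logicalsu2}.
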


We therefore see that, although different values of $\theta$ may correspond to very different physics~\cite{Lauchli06}, from a universality point of view they are all of equal power.

We remark that, in common with most previous work in this area~\cite{Cubitt-Montanaro,cubitt17}, we usually allow each interaction weight to be positive or negative. This can lead to physical systems built from the same interaction having very different physical properties (e.g.\ antiferromagnetism vs.\ ferromagnetism). It is sometimes possible to prove universality-type results for interactions whose weights all have the same sign~\cite{Piddock-Montanaro}; we achieve this in Theorem \ref{thm:sud}, but in general leave this extension for future work. Another interesting direction is to prove universality for systems with simpler interaction patterns~\cite{Oliveira-Terhal,Schuch-Verstraete,Piddock-Montanaro,cubitt17}, or with less heavily-weighted interactions~\cite{cao15}.

% ------------------------------------------------------------------------------

\subsection{Related work}

There has been a substantial amount of work characterising the complexity of various types of qubit Hamiltonians from the perspective of QMA-completeness; see~\cite{cubitt17,bookatz14,gharibian15} for references. In the case of qudits, rather than general classification results, most work has considered carefully designed special cases where QMA-completeness can be achieved. Indeed, it is often the case that these results aim to {\em reduce} the local dimension of a QMA-complete construction that achieves some other desiderata. For example, Aharonov et al.~\cite{aharonov09} gave a QMA-complete family of local Hamiltonians on a 1D line with $d=12$, later improved to $d=8$ by Hallgren, Nagaj and Narayanaswami~\cite{hallgren13}; Gottesman and Irani~\cite{gottesman13} gave a QMA$_{\text{EXP}}$-complete family of translationally invariant Hamiltonians on a 1D line with $d=O(10^6)$, later improved to $d \approx 40$ by Bausch, Cubitt and Ozols~\cite{bausch16}. The local dimension has been reduced even further to $d=4$, for a translationally invariant Hamiltonian on a 3D lattice~\cite{Bausch17}. We refer to~\cite{bookatz14} for further examples, including the more general case where the local dimension can vary across the system being considered. In all these cases, one fixes the dimension and then carefully tunes the types of interactions used to achieve the desired result. Here, by contrast, we begin with a fixed set of interactions and attempt to determine the complexity of Hamiltonians based on these interactions.

% ------------------------------------------------------------------------------

\subsection{Overview of proof of Theorem \ref{thm:lamainklocal}}

We now give an informal discussion of our LA-universality classification result. The majority of the work to prove Theorem \ref{thm:lamainklocal} is taken up by the special case of 2-local interactions, and sets $\cS$ containing only one interaction. To prove universality of an interaction $H$, we use simulations: showing that an interaction known to be universal~\cite{Cubitt-Montanaro,cubitt17} can be implemented using Hamiltonians consisting of $H$ terms and 1-local terms. Our simulations are all based on perturbative gadgets, as introduced in~\cite{Kempe-Kitaev-Regev} and used for example in~\cite{Bravyi-Hastings,cubitt17,Oliveira-Terhal}, to effectively implement one Hamiltonian within the ground space of another.
For example, a type of gadget we often use is a so-called mediator gadget. In this type of gadget, one or more ancilla (``mediator'') qudits are added to the system. Strong interactions within the mediator qudits effectively project these qudits into a fixed state. Then weaker interactions between the mediator and original qudits implement effective interactions between the original qudits. The interactions produced are determined rigorously via perturbation theory.
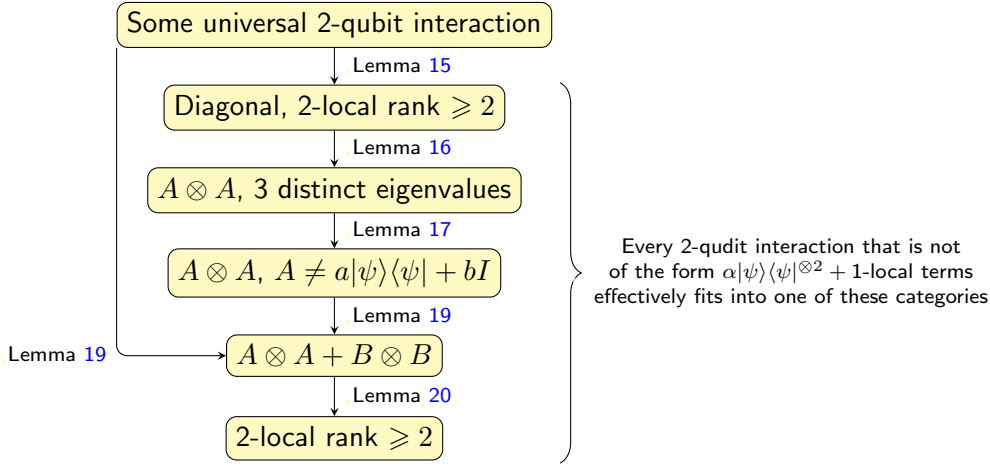
\begin{figure}
  \begin{center}
    \begin{tikzpicture}[yscale=1.1,font=\sffamily,every node/.style={draw,fill=yellow!30,rounded corners},label/.style={draw=none,fill=none,font={\sffamily,\scriptsize}},>=stealth]
      \node (uni2) at (0,0) {Some universal 2-qubit interaction};
      \node (diag) at (0,-1) {Diagonal, 2-local rank $\ge 2$};
      \node (aa) at (0,-2) {$A \otimes A$, 3 distinct eigenvalues};
      \node (aa2) at (0,-3) {$A \otimes A$, $A \neq a \proj{\psi} + b I$};
      \node (aabb) at (0,-4) {$A \otimes A + B \otimes B$};
            \node (2qudit) at (0,-5) {2-local rank $\ge 2$};
      \draw[->] (uni2.south) -- node[label,right] {Lemma \ref{lem:diagonal}} (diag);
      \draw[->] (diag.south) -- node[label,right] {Lemma \ref{lem:3evalues}} (aa);
            \draw[->] (aa.south) -- node[label,right] {Lemma \ref{lem:AAuniversal}} (aa2);
            \draw[->] (aa2.south) -- node[label,right] {Lemma \ref{lem:AA+BB}} (aabb);
            \draw[->,rounded corners] (uni2.south west) |- node[label,left] {Lemma \ref{lem:AA+BB}} (aabb.west);
                        \draw[->] (aabb.south) -- node[label,right] {Lemma \ref{lem:rank2ormore}} (2qudit);
                  \draw [decorate,decoration={brace,amplitude=8pt}] (3,-0.7) -- node[label,right]
        {\begin{tabular}{c}
         Every 2-qudit interaction that is not\\ of the form $\alpha \proj{\psi}^{\otimes 2} + \text{1-local terms}$\\ effectively fits into one of these categories\\\end{tabular}} (3,-5.3);
   \end{tikzpicture}
 \end{center}
 \caption[Sequence of simulations used in this work.]{%
   Sequence of simulations used in this work.
   An arrow from one box to another indicates that a Hamiltonian of the first type can be simulated by a Hamiltonian of the second type.
%   Where two arrows enter a box, this indicates that a Hamiltonian of this type can be simulated by one of the two target Hamiltonians, but not necessarily both.
   }
%   ``2SLD'' is short for ``the 2\nbd-local parts of all interactions in the set are simultaneously locally diagonalisable'', and $k,k' \ge 2$ are arbitrary integers such that $k \ge \lceil k' \log_2 d\rceil$.}
 \label{fig:reductions}
\end{figure}

First we consider the special case of diagonal interactions with 2-local rank $\ge 2$, where the 2-local rank of an interaction $H$ is informally defined as follows: Writing $H= H' + \text{1-local terms}$, and $H' = \sum_{a,b} M_{ab} T^a \otimes T^b$ for some basis $T^a$ of Hermitian matrices, the 2-local rank of $H$ is the rank of $M$. (For example, $H = X \otimes X + Y \otimes I$ has 2-local rank 1.) We can think of diagonal matrices symmetric under qudit interchange and with 2-local rank 2 as being of the form $D \otimes D + E \otimes E$ for some diagonal matrices $D$ and $E$. To show that such interactions are universal (a similar argument works for non-symmetric interactions), we use our free 1-local terms to apply a heavy interaction to each qudit which effectively projects it into a 2-dimensional subspace. Note that even though $D$ and $E$ commute, this need not be the case for the corresponding projected qubit interactions. This allows us to generate a 2-qubit effective interaction within this subspace which is universal~\cite{cubitt17}.

Remaining within the special case of diagonal interactions, the next step is to consider those with 2-local rank 1, which are of the form $A \otimes A$. To deal with this case, we split into two parts. When $A$ has at least 3 distinct eigenvalues, we design a gadget using an additional qudit to implement the effective interaction $A \otimes A^2 + A^2 \otimes A$, which is universal from the previous case. When $A$ has 2 distinct eigenvalues, but is not of the form $a\proj{\psi} + b I$, we show that another gadget can be used to simulate an interaction $B \otimes B$ where $B$ has 3 distinct eigenvalues. For the remaining diagonal case -- interactions of the form $A \otimes A$ for $A = a\proj{\psi} + b I$ -- we show that local unitary rotations can be used to transform any Hamiltonian built of such interactions into a stoquastic Hamiltonian, so we cannot expect this case to be universal.

We then move on to non-diagonal interactions. We first consider those of the form $A \otimes A + B \otimes B$ for some $B$ that does not commute with $A$ (otherwise we would be in the diagonal case). For all such interactions, we show there exists a gadget which projects the interaction onto a 2-qubit subspace on which the resulting interaction is universal. The non-commutativity makes this task simpler than in the diagonal case. The next step is interactions with 2-local rank $\ge 2$, but not of the form $A \otimes A + B \otimes B$. For these, we show that one can always produce an effective interaction of the form $A \otimes A + B \otimes B$ using two rounds of simulation.

All 2-qudit interactions $H$ can be handled using one of these lemmas. Considering the interaction $H'$ formed by deleting the 1-local parts from $H$, we know that $H$ is LA-universal if the 2-local rank of $H'$ is $\ge 2$. If not, then $H' = A \otimes B$ for some $A$ and $B$. Either $A \otimes B + B\otimes A$ has 2-local rank $\ge 2$, or $B$ is proportional to $A$. Either way, we are in one of the previously considered cases.

The final step to complete the proof of Theorem \ref{thm:lamainklocal} is to generalise to $k$-local interactions for $k>2$. To do so, we show that our free 1-local terms can be used to extract 2-local ``sub-interactions'' from the interactions we are given; this is a generalisation to $d>2$ of an analogous argument for qubits in~\cite{Cubitt-Montanaro}. Then either we can produce a universal sub-interaction, or all the sub-interactions of all interactions in $\cS$ are proportional to $\proj{\psi}^{\otimes 2}$, up to 1-local terms. In the latter case, the overall interactions must all have been of the form $\proj{\psi}^{\otimes \ell}$, so the whole Hamiltonian is stoquastic.

% ------------------------------------------------------------------------------

\subsection{Overview of proof of Theorems \ref{thm:sud}, \ref{thm:proj}, \ref{thm:su2} and \ref{thm:bilinbiq}}

The techniques required to prove universality of interactions without free local terms are very different, and in general this setting is much more challenging. Given the symmetry displayed by the interactions we consider, we need to consider some notion of encoding in order to implement arbitrary effective interactions. In the case of the $SU(d)$ Heisenberg interaction, we proceed by using a perturbative gadget to encode a qubit within the 2-dimensional ground space of a system of $2d$ qudits; this generalises a similar (but significantly simpler) gadget used for the case $d=2$ in~\cite{cubitt17}. Interactions across pairs of qudits within the gadget implement effective $X$ and $Z$ interactions, while interactions across two gadgets can be used to implement a non-trivial 2-qubit interaction, which is enough to prove universality using the results of~\cite{cubitt17,Piddock-Montanaro}. In order to analyse the gadget's behaviour, we need to use the representation theory of the Lie algebra $\mathfrak{su}(N)$, and in particular analysis of quadratic Casimir operators~\cite{FuchsSchweigert}, which are operators of the form $\sum_a R(T^a) R(T^a)$ for some representation $R$ of the generators $T^a$ of $\mathfrak{su}(d)$. The Hamiltonian corresponding to the $SU(d)$ Heisenberg interaction on the complete graph on $k$ qudits turns out to have a close connection to the Casimir operator corresponding to the representation $R(T^a) = \sum_{i=1}^k T^a_i$, whose spectral properties are well-understood, and which has beautiful algebraic features that enable suitable gadget weights to be determined for any $d$.

Theorem \ref{thm:proj} is proven using a gadget that shows that, when $P$ is the projector onto an entangled state of two qudits, $\{P\}$-Hamiltonians can simulate $\{P'\}$-Hamiltonians for some $P'=\proj{\psi'}$ where either $\ket{\psi'}$ is an entangled state of two {\em qubits}, in which case universality follows from Theorem \ref{thm:qubits}; or $\ket{\psi'} =\frac{1}{\sqrt{d}}\sum_i \ket{i}\ket{i}$, in which case universality can be shown to follow from universality of the $SU(d)$-Heisenberg interaction (Theorem \ref{thm:sud}).

The gadget for the $SU(2)$ Heisenberg interaction $h$ also relies on properties of the corresponding Casimir operator, but is more complicated than the $SU(d)$ case. Here the key technical step is to give a gadget that allows $h^2$ interactions to be simulated, given access to $h$ interactions; once this is achieved, it is not too hard to show that for any $d$, this allows the $SU(2)$ Heisenberg interaction to be simulated in local dimension 3 (qutrits). Applying the $h \mapsto h^2$ gadget again, we can produce the interaction $h + h^2$, which (in local dimension 3) is the same as the $SU(3)$ Heisenberg interaction, and hence universal. The analysis of this gadget depends on fourth-order perturbation theory, for which we need to prove a new general simulation lemma based on the Schreiffer-Wolff transformation~\cite{Bravyi-DiVincenzo-Loss}. Previous work gave general simulation lemmas for up to third-order perturbations~\cite{Bravyi-Hastings}, but extending this line of argument to fourth-order is more complex technically; in particular, there are non-trivial interference effects between different gadgets to take into account. We thus hope that this result will find other applications elsewhere.

We note that higher order perturbation theory has been considered before in the literature in slightly different settings, mostly in a framework where only the ground state energy is reproduced; for example \cite{jordan08} considers perturbation theory at arbitrary order. 
Although the contribution of the fourth order term in a Schreiffer-Wolff perturbative series has been considered before \cite{cao17}, we are not aware of any explicit demonstration of how the interactions must be chosen such that this fourth order term dominates as in Lemma~\ref{lem:fourthorder}.
Cross gadget interference has previously been seen before for certain parameter regimes of low strength Hamiltonians \cite{cao13}, where it can be easily shown to disappear simply by increasing the strength of the interactions; whereas in Lemma~\ref{lem:4thorderinterference}, the cross gadget terms are independent of the strength of the Hamiltonian.

Finally, for the remaining bilinear-biquadratic Heisenberg interactions in dimension 3, we use different gadgets depending on the value of $\theta$, which we can assume is within the range $[0,\pi]$ because we are free to choose the signs of interactions arbitrarily. When $\theta \in (0,\arctan 1/3) \cup (\pi/4, \pi)$ and $\theta \neq \arctan 2$, then there exists an entangled state $\ket{\psi}$ which is either the unique ground state or the unique highest excited state of $h^{(\theta)}$. Using a perturbative gadget to effectively project some qudits onto $\ket{\psi}$, we can obtain a new interaction $h^{(\theta')}$ for some $\theta' \neq \theta$. Taking a linear combination of these two interactions, we can simulate the $SU(3)$ Heisenberg interaction. When $\theta \in (\arctan 1/3,\arctan 5)$, $h^{(\theta)}$ has a 3-dimensional ground space. We encode a qutrit within this subspace of two physical qutrits, and use $h^{(\theta)}$ interactions across pairs of qutrits to simulate the $SU(3)$ Heisenberg interaction across logical qutrits. These ranges encompass all values of $\theta$ except $\theta=\arctan 1/3$. In this last special case, $h^{(\theta)}$ corresponds to the well-studied AKLT interaction~\cite{AKLT}. Here the ground space of $h^{(\theta)}$ is 4-dimensional, but we are able to construct a mediator qutrit gadget which effectively projects 3 qutrits into the unique ground state of a 3 qutrit AKLT Hamiltonian. This again allows us to simulate the $SU(3)$ Heisenberg interaction.

% ------------------------------------------------------------------------------

\section{Summary of techniques}
\label{sec:prelims}

%\subsection{Notation}

%We say that a Hamiltonian $H \in L((\C^d)^{\otimes n})$ is $k$-local (or just ``local'' if $k=O(1)$) if it can be written as a sum of terms such that each $h_i$ acts non-trivially on at most $k$ subsystems of $(\C^d)^{\otimes n}$.
%That is, $h_i\in L((\C^d)^{\otimes k})$ and $H = \sum_i h_i\otimes I$ where the identity in each term in the sum acts on the subsystems where that $h_i$ does not.

Next, we give the required definitions to state our results formally, describe previous results that we use, and exemplify our results by giving a simple example of a simulation. We then proceed to a full technical presentation of the remainder of our results.

% ------------------------------------------------------------------------------
\subsection{Definitions}

We first formally define the notions of simulation and universality that we will use. For an arbitrary Hamiltonian $H \in L(\C^d)$, we let $P_{\le \Delta(H)}$ denote the orthogonal projector onto the subspace $S_{\le \Delta(H)} := \linspan \{ \ket{\psi} : H\ket{\psi}=\lambda\ket{\psi}, \lambda \le \Delta \}$.
We also let $H'|_{\le \Delta(H)}$ denote the restriction of some other arbitrary Hamiltonian $H'$ to $S_{\le \Delta(H)}$, and write $H|_{\le \Delta} := H|_{\le \Delta(H)}$ and $H_{\le \Delta} := H P_{\le \Delta(H)}$. We let $L(\mathcal{H})$ denote the set of linear operators acting on a Hilbert space $\mathcal{H}$, and use the standard notation $[A,B] := AB-BA$ and $\{A,B\} := AB+BA$ for the commutator and anticommutator of $A$ and $B$, respectively.

\begin{dfn}[Special case of definition in~\cite{cubitt17}; variant of definition in~\cite{Bravyi-Hastings}] 
  \label{dfn:sim}
  We say that $H'$ is a $(\Delta,\eta,\epsilon)$-simulation of $H$ if there exists a local isometry $V = \bigotimes_i V_i$ such that:
  \begin{enumerate}
  \item 
    There exists an isometry $\widetilde{V}$ such that $\widetilde{V} \widetilde{V}^\dag = P_{\le \Delta(H')}$ and $\|\widetilde{V} - V\| \le \eta$;
  \item 
    $\| H'_{\le \Delta} - \widetilde{V}H\widetilde{V}^\dag \| \le \epsilon$.
  \end{enumerate}
  We say that a family $\mathcal{F}'$ of Hamiltonians can simulate a family $\mathcal{F}$ of Hamiltonians if, for any $H \in \mathcal{F}$ and any $\eta,\epsilon >0$ and $\Delta \ge \Delta_0$ (for some $\Delta_0 > 0$), there exists $H' \in \mathcal{F}'$ such that $H'$ is a $(\Delta,\eta,\epsilon)$-simulation of $H$.
  We say that the simulation is efficient if, in addition, for $H$ acting on $n$ qudits, $\|H'\| = \poly(n,1/\eta,1/\epsilon,\Delta)$; $H'$ is efficiently computable given $H$, $\Delta$, $\eta$ and $\epsilon$; and each isometry $V_i$ maps to $O(1)$ qudits.
\end{dfn}

The first part of Definition \ref{dfn:sim} says that $H$ can be mapped exactly into the ground space of $H'$ by some ``encoding'' isometry $\widetilde{V}$ which is close to a local isometry $V$. The second part says that the low-energy part of $H'$ is close to an encoded version of $H$. In~\cite{cubitt17} a more general notion of encoding was used, which allowed for complex Hamiltonians to be encoded as real Hamiltonians, for example; here we will not need this directly. (However, as we make use of the results of~\cite{cubitt17}, we do use this notion of encoding indirectly.)

\begin{dfn}[\cite{cubitt17}]
\label{dfn:universal}
  We say that a family of Hamiltonians is \emph{universal} if \emph{any} (finite-dimensional) Hamiltonian can be simulated by a Hamiltonian from the family.
  We say that the universal simulator is \emph{efficient} if the simulation is efficient for all local Hamiltonians.
\end{dfn}

Here all simulations we develop will be efficient, so whenever we say ``universal'', we mean ``efficiently universal'' in the above sense.

% ------------------------------------------------------------------------------

\subsection{Perturbative gadgets}

The main technique we will use to prove universality will be the remarkably powerful concept of perturbative gadgets~\cite{Kempe-Kitaev-Regev}. Let $\mathcal{H}_{\text{sim}}$ be a Hilbert space decomposed as $\mathcal{H}_{\text{sim}} = \mathcal{H}_+ \oplus \mathcal{H}_-$, and let $\Pi_{\pm}$ denote the projector onto $\mathcal{H}_{\pm}$.
For any linear operator $O$ on $\mathcal{H}_{\text{sim}}$, write
\begin{equation}
  O_{--} = \Pi_- O \Pi_-,\;\;\;\; O_{-+}
  = \Pi_- O \Pi_+,\;\;\;\; O_{+-}
  = \Pi_+ O \Pi_-,\;\;\;\; O_{++} = \Pi_+ O \Pi_+.
\end{equation}
Throughout, let $H_0$ be a Hamiltonian such that $H_0$ is block-diagonal with respect to the split $\mathcal{H}_+ \oplus \mathcal{H}_-$, $(H_0)_{--} = 0$, and $\lambda_{\min}((H_0)_{++}) \ge 1$, where $\lambda_{\min}(H)$ denotes the minimal eigenvalue of $H$.

Slight variants of the following lemmas were shown in~\cite{Bravyi-Hastings}, building on previous work~\cite{Oliveira-Terhal,Bravyi-DiVincenzo-Loss}:

\begin{lem}[First-order simulation~\cite{Bravyi-Hastings}]
  \label{lem:firstorder}
  Let $H_0$ and $H_1$ be Hamiltonians acting on the same space. Suppose there exists a local isometry $V$ such that $\Im(V)=\mathcal{H}_-$ and
  \begin{equation}
V H_{\operatorname{target}} V^\dag = (H_1)_{--}.
  \end{equation}
  Then $H_{\operatorname{sim}} = \Delta H_0 + H_1$ $(\Delta/2,\eta,\epsilon)$-simulates $H_{\operatorname{target}}$, provided that the bound $\Delta \ge O(\|H_1\|^2/\epsilon + \|H_1\| / \eta)$ holds.
\end{lem}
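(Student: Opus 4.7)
The plan is to recognise the lemma as a packaging of the standard Schrieffer--Wolff (SW) perturbation theory underlying the gadget results of Bravyi--DiVincenzo--Loss and Bravyi--Hastings, specialised to the case where the unperturbed Hamiltonian $\Delta H_0$ is exactly block-diagonal with respect to $\cH_+\oplus\cH_-$ and its low-energy block vanishes identically. Because that block is exactly zero, no resolvent series is needed: the leading effective interaction in the low-energy space is simply $(H_1)_{--}$, and the simulation error comes entirely from corrections of order $\|H_1\|^2/\Delta$ and from the small rotation that maps $\cH_-$ to the true low-energy subspace of $H_{\operatorname{sim}}$.

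The first step is to construct the SW unitary $U=e^S$ with $S$ anti-Hermitian and purely block-off-diagonal, defined by the equation $[S,\Delta H_0]=-(H_1)_{+-}-(H_1)_{-+}$. Using $(H_0)_{--}=0$ and $\lambda_{\min}((H_0)_{++})\ge 1$, this has the unique solution
\[
S_{+-}=\tfrac{1}{\Delta}\,(H_0)_{++}^{-1}(H_1)_{+-},\qquad S_{-+}=-S_{+-}^{\dag},
\]
with $\|S\|\le \|H_1\|/\Delta$. By construction the block-off-diagonal part of $UH_{\operatorname{sim}}U^{\dag}$ vanishes to first order; expanding $e^{S}H_{\operatorname{sim}}e^{-S}$ and grouping terms, the $--$ block becomes
\[
H_{\eff}:=\Pi_{-}UH_{\operatorname{sim}}U^{\dag}\Pi_{-}=(H_1)_{--}+\tfrac12\bigl[S,(H_1)_{+-}+(H_1)_{-+}\bigr]_{--}+O(\|H_1\|^{3}/\Delta^{2}),
\]
so $\|H_{\eff}-(H_1)_{--}\|=O(\|H_1\|^{2}/\Delta)$. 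The $++$ block equals $\Delta(H_0)_{++}+O(\|H_1\|)$ and hence has spectrum bounded below by $\Delta-O(\|H_1\|)$, exceeding $\Delta/2$ once $\Delta$ is a suitable constant multiple of $\|H_1\|$.

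Next I would identify the true low-energy subspace of $H_{\operatorname{sim}}$ as $U^{\dag}\cH_{-}$ and set $\widetilde{V}:=U^{\dag}V$. Then $\widetilde{V}\widetilde{V}^{\dag}=U^{\dag}VV^{\dag}U=U^{\dag}\Pi_{-}U=P_{\le\Delta/2(H_{\operatorname{sim}})}$, and
\[
\|\widetilde{V}-V\|=\|(U^{\dag}-I)V\|\le \|U-I\|\le \|S\|+O(\|S\|^{2})=O(\|H_1\|/\Delta),
\]
which is at most $\eta$ whenever $\Delta \gtrsim \|H_1\|/\eta$. For the energy accuracy, using the hypothesis $VH_{\operatorname{target}}V^{\dag}=(H_1)_{--}$,
\[
H_{\operatorname{sim}}|_{\le\Delta/2}-\widetilde{V}H_{\operatorname{target}}\widetilde{V}^{\dag}=U^{\dag}\bigl(H_{\eff}-(H_1)_{--}\bigr)U,
\]
whose operator norm is $O(\|H_1\|^{2}/\Delta)$ and is therefore at most $\epsilon$ whenever $\Delta\gtrsim \|H_1\|^{2}/\epsilon$. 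Combining the two conditions on $\Delta$ yields the stated hypothesis.

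The main obstacle is the bookkeeping in the SW expansion: one must verify that each nested commutator $[S,[S,\ldots[S,\Delta H_0]\cdots]]$ contributes at order $\|H_1\|^{k}/\Delta^{k-1}$ with explicit, $\Delta$-independent constants, and that the block gap between the $--$ and $++$ sectors of $UH_{\operatorname{sim}}U^{\dag}$ really does exceed $\Delta/2$ so that the low-energy eigenspace is exactly $U^{\dag}\cH_{-}$ rather than merely approximately so. Both facts are handled uniformly by the quantitative SW machinery of Bravyi--DiVincenzo--Loss, so in practice I would invoke that source for the high-order bounds and only check that the normalisation $\lambda_{\min}((H_0)_{++})\ge 1$ makes the implicit constants fully explicit in the hypothesis $\Delta\ge O(\|H_1\|^{2}/\epsilon+\|H_1\|/\eta)$.
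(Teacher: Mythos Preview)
The paper does not prove this lemma; it is quoted from Bravyi--Hastings. The paper does, however, prove the analogous fourth-order statement (Lemma~\ref{lem:fourthorder}) in Appendix~\ref{app:fourthorderproofs} using precisely the Schrieffer--Wolff strategy you outline: take the exact SW unitary $e^S$ that block-diagonalises $H_{\operatorname{sim}}$, set $\widetilde V=e^{-S}V$, invoke the bound $\|S\|=O(\|A\|/\Delta)$ from Bravyi--DiVincenzo--Loss, and then expand $H_{\eff}=(e^S H_{\operatorname{sim}} e^{-S})_{--}$ as a Taylor series in $1/\Delta$. Your sketch is the first-order specialisation of that argument and is correct in structure.

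There is one genuine slip. You define $S$ explicitly by $[S,\Delta H_0]=-(H_1)_{+-}-(H_1)_{-+}$, which pins down only the leading term $S_1$ of the SW series, not the exact generator. With this $S$, the rotated Hamiltonian $e^{S}H_{\operatorname{sim}}e^{-S}$ still carries an off-diagonal residue of size $O(\|H_1\|^{2}/\Delta)$, and then the equality $U^{\dag}\Pi_{-}U=P_{\le\Delta/2(H_{\operatorname{sim}})}$ that you assert is simply false: a large spectral gap between the diagonal blocks does not force the low-energy eigenspace to equal $U^{\dag}\cH_{-}$ when off-diagonal terms remain. Since Definition~\ref{dfn:sim} requires $\widetilde V\widetilde V^{\dag}$ to equal $P_{\le\Delta/2(H_{\operatorname{sim}})}$ exactly, this step must be repaired. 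The fix is the one the paper uses in its proof of Lemma~\ref{lem:fourthorder}: let $e^{S}$ be the \emph{exact} SW unitary, whose existence and norm bound $\|S\|=O(\|H_1\|/\Delta)$ are provided by Bravyi--DiVincenzo--Loss, and then control $\|H_{\eff}-(H_1)_{--}\|$ via their Taylor remainder estimate (the paper's equation~(23) citation) rather than by the explicit commutator you wrote. With that substitution everything else in your sketch goes through unchanged and yields the stated threshold $\Delta\ge O(\|H_1\|^{2}/\epsilon+\|H_1\|/\eta)$.
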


\begin{lem}[Second-order simulation~\cite{Bravyi-Hastings}]
  \label{lem:secondorder}
  Let $H_0$, $H_1$, $H_2$ be Hamiltonians acting on the same space, such that: $\max\{\|H_1\|,\|H_2\|\} \le \Lambda$; $H_1$ is block-diagonal with respect to the split $\mathcal{H}_+ \oplus \mathcal{H}_-$; and $(H_2)_{--} =0$.
  Suppose there exists a local isometry $V$ such that $\Im(V)=\mathcal{H}_-$ and 
  \begin{equation}
    V H_{\operatorname{target}} V^\dag = (H_1)_{--} - (H_2)_{-+} H_0^{-1} (H_2)_{+-}.
  \end{equation}
  Then $H_{\operatorname{sim}} = \Delta H_0 + \Delta^{1/2} H_2 + H_1$ $(\Delta/2,\eta,\epsilon)$-simulates $H_{\operatorname{target}}$, provided that $\Delta \ge O(\Lambda^6/\epsilon^2 + \Lambda^2/\eta^2)$.
\end{lem}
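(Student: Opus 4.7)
The plan is to follow the self-energy (or equivalently Schrieffer--Wolff) approach to perturbation theory for the operator $H_{\operatorname{sim}} = \Delta H_0 + \Delta^{1/2} H_2 + H_1$. The key idea that makes the scaling $\Delta^{1/2}$ on $H_2$ natural is a simple power counting: the second-order contribution of a perturbation $V$ to the effective Hamiltonian on the low-energy subspace scales like $\|V_{-+}\|^2/\Delta$, so rescaling $H_2 \mapsto \Delta^{1/2} H_2$ promotes its second-order contribution to an order-$1$ term, while all third- and higher-order contributions in which $H_2$ appears at most twice pick up at least one factor of $\Delta^{-1/2}$ and thus vanish as $\Delta \to \infty$. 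Meanwhile the $H_1$ part, which is block-diagonal, contributes at first order as $(H_1)_{--}$ and all its higher-order effects are suppressed by $\Delta^{-1}$.

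The concrete steps I would carry out are as follows. First I would define the self-energy $\Sigma_-(z)$ associated with the splitting $\mathcal{H}_{\operatorname{sim}} = \mathcal{H}_+ \oplus \mathcal{H}_-$ via the Schur complement
\[
\Sigma_-(z) = (H_{\operatorname{sim}})_{--} + (H_{\operatorname{sim}})_{-+} \bigl(z\Pi_+ - (H_{\operatorname{sim}})_{++}\bigr)^{-1} (H_{\operatorname{sim}})_{+-},
\]
which has the feature that its eigenvalues/eigenvectors at self-consistent $z$ coincide with the low-lying ones of $H_{\operatorname{sim}}$. Next I would expand the resolvent $(z\Pi_+ - (H_{\operatorname{sim}})_{++})^{-1}$ as a Neumann series around the unperturbed resolvent $(\Delta(H_0)_{++})^{-1}$, using $\|(H_1)_{++}\|, \|(H_2)_{++}\| \le \Lambda$ together with the hypothesis $\lambda_{\min}((H_0)_{++}) \ge 1$ to control convergence for $|z| \le \Delta/2$. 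Collecting terms I would extract the zeroth- and first-order expressions $(H_1)_{--} - (H_2)_{-+} H_0^{-1} (H_2)_{+-}$, which is exactly $V H_{\operatorname{target}} V^\dag$ by assumption, and bound the remaining tail by $O(\Lambda^3/\Delta^{1/2})$ in operator norm. Finally, I would appeal to the standard ``self-energy close implies spectra close'' argument (as in Section 3 of Bravyi--Hastings, using the fact that $\Sigma_-(z) \approx V H_{\operatorname{target}} V^\dag$ implies $P_{\le\Delta/2}(H_{\operatorname{sim}})$ is $\eta$-close to $\Im(V)$ in isometry and the restriction of $H_{\operatorname{sim}}$ is $\epsilon$-close to $V H_{\operatorname{target}} V^\dag$) to convert these self-energy bounds into the two conditions of Definition~\ref{dfn:sim}.

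The main obstacle will be translating the asymptotic bound on the perturbative remainder into the quantitative conditions $\epsilon$ and $\eta$ claimed in the lemma. For the energy bound one needs the $O(\Lambda^3/\Delta^{1/2})$ tail of the self-energy to be at most $\epsilon$, forcing $\Delta \gtrsim \Lambda^6/\epsilon^2$; for the isometry bound one must also show that the true low-energy eigenvectors of $H_{\operatorname{sim}}$ differ from $\Im(V)$ only by $\eta$, which after a careful computation (essentially tracking the off-diagonal block of the Schrieffer--Wolff generator $S$ with $\|S\| \lesssim \Lambda/\Delta^{1/2}$) yields the condition $\Delta \gtrsim \Lambda^2/\eta^2$. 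Combining these two gives the stated hypothesis $\Delta \ge O(\Lambda^6/\epsilon^2 + \Lambda^2/\eta^2)$. The delicate bookkeeping required to keep the powers of $\Lambda$ and $\Delta$ consistent across the energy error and the isometry error is where I would expect to spend the bulk of the technical effort, but no new ideas beyond those already developed in the Bravyi--Hastings and Bravyi--DiVincenzo--Loss analyses should be needed.
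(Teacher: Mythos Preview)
The paper does not actually prove this lemma; it is quoted from \cite{Bravyi-Hastings} as a known tool (``Slight variants of the following lemmas were shown in~\cite{Bravyi-Hastings}\ldots''), so there is no in-paper proof to compare against. Your sketch is a faithful outline of the standard Bravyi--Hastings argument, and indeed the paper's own proof of the analogous fourth-order statement (Lemma~\ref{lem:fourthorder}) in Appendix~\ref{app:fourthorderproofs} proceeds exactly along Schrieffer--Wolff lines, bounding $\|V-\widetilde V\|$ by $O(\|S\|)=O(\|A\|/\Delta)$ and controlling the truncation error in the Taylor expansion of $H_{\eff}$.

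One small point worth flagging: you mix the self-energy/Schur-complement formulation and the Schrieffer--Wolff generator $S$ in the same argument. Both routes work, but the paper (and \cite{Bravyi-Hastings} in the form used here) commits to the Schrieffer--Wolff route throughout: the isometry $\widetilde V = e^{-S}V$ is defined directly, $\|I-e^{-S}\|=O(\|S\|)$ gives the $\eta$-bound, and the truncated series $\sum_i H_{\eff,i}$ gives the $\epsilon$-bound. Your self-energy expansion would instead require the Kempe--Kitaev--Regev-type lemma that closeness of $\Sigma_-(z)$ to $VH_{\operatorname{target}}V^\dag$ over a $z$-window implies closeness of the low-energy restriction; this is correct but is a slightly different packaging than what the paper relies on. Either way the power-counting and final thresholds you obtain are right.
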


\begin{lem}[Third-order simulation~\cite{Bravyi-Hastings}]
  \label{lem:thirdorder}
  Let $H_0$, $H_1$, $H_1'$, $H_2$ be Hamiltonians acting on the same space, such that: $\max\{\|H_1\|,\|H_1'\|,\|H_2\|\} \le \Lambda$; $H_1$ and $H_1'$ are block-diagonal with respect to the split $\mathcal{H}_+ \oplus \mathcal{H}_-$; $(H_2)_{--}=0$.
  Suppose there exists a local isometry $V$ such that $\Im(V)=\mathcal{H}_-$ and 
  \begin{equation}
    V H_{\operatorname{target}} V^\dag = (H_1)_{--} + (H_2)_{-+} H_0^{-1} (H_2)_{++} H_0^{-1} (H_2)_{+-}
  \end{equation}
  and also that
  \begin{equation}
    (H_1')_{--} = (H_2)_{-+} H_0^{-1} (H_2)_{+-}.
  \end{equation}
  Then $H_{\operatorname{sim}} = \Delta H_0 + \Delta^{2/3} H_2 + \Delta^{1/3} H_1' + H_1$ $(\Delta/2,\eta,\epsilon)$-simulates $H_{\operatorname{target}}$, provided that $\Delta \ge O(\Lambda^{12}/\epsilon^3 + \Lambda^3/\eta^3)$.
\end{lem}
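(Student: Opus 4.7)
The plan is to compute the low-energy effective Hamiltonian of $H_{\text{sim}} = \Delta H_0 + V$ with $V := \Delta^{2/3}H_2 + \Delta^{1/3}H_1' + H_1$ via the Schrieffer-Wolff transformation (equivalently, via the resolvent/self-energy series of Bloch). Since $H_0$ has spectral gap $\ge \Delta$ on its support and $\|V\| = O(\Lambda \Delta^{2/3})$, the natural small parameter is $\|V\|/\Delta = O(\Lambda \Delta^{-1/3})$, which suppresses higher-order corrections and goes to zero as $\Delta \to \infty$. The goal is to show that the effective Hamiltonian on $\mathcal{H}_-$ agrees, up to operator-norm error $\epsilon$, with $V H_{\operatorname{target}} V^\dag$, and that the low-energy eigenspace of $H_{\text{sim}}$ is within $\eta$ of $\Im(V)$.

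First I would expand the effective Hamiltonian order by order in $V$. The first-order piece is $V_{--} = (H_1)_{--} + \Delta^{1/3}(H_1')_{--}$, because $(H_2)_{--} = 0$ by hypothesis. The second-order term is $-V_{-+}(\Delta H_0)^{-1}V_{+-}$; expanding $V$ and using block-diagonality of $H_1$ and $H_1'$, the only surviving leading contribution is $-\Delta^{1/3}(H_2)_{-+}H_0^{-1}(H_2)_{+-}$, with mixed terms suppressed by at least one extra factor of $\Delta^{-1/3}$. By the hypothesis $(H_1')_{--} = (H_2)_{-+}H_0^{-1}(H_2)_{+-}$, the potentially divergent $\Delta^{1/3}$ contributions from first and second order cancel exactly. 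The third-order Bloch expression is $V_{-+}(\Delta H_0)^{-1}V_{++}(\Delta H_0)^{-1}V_{+-}$ plus an anticommutator correction proportional to $V_{--}$; the correction contributes nothing at the leading order because $(H_2)_{--}=0$ and the $H_1, H_1'$ pieces in $V_{--}$ are subleading. After the $\Delta$ powers cancel, the leading pure-$H_2$ third-order piece is exactly $(H_2)_{-+}H_0^{-1}(H_2)_{++}H_0^{-1}(H_2)_{+-}$, as required.

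Next I would bound the remainder. The $n$-th order term has operator norm $O(\|V\|^n/\Delta^{n-1}) = O(\Lambda^n \Delta^{1-n/3})$, so for $n\ge 4$ the geometric tail sums to $O(\Lambda^4 \Delta^{-1/3})$; forcing this to be $\le \epsilon$ yields $\Delta = \Omega(\Lambda^{12}/\epsilon^3)$, matching the stated bound. The same bookkeeping controls the subleading cross-terms that were dropped at second and third order. For the isometry clause of Definition \ref{dfn:sim}, I would take $\widetilde{V} := e^{S}V$ where $S$ is the anti-Hermitian Schrieffer-Wolff generator that block-diagonalises $H_{\text{sim}}$; standard bounds (as in Bravyi-DiVincenzo-Loss) give $\|S\| = O(\|V\|/\Delta) = O(\Lambda\Delta^{-1/3})$, so $\|\widetilde{V}-V\| \le \eta$ once $\Delta = \Omega(\Lambda^3/\eta^3)$.

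The main obstacle is the careful bookkeeping of which mixed terms between $H_2$, $H_1'$, and $H_1$ survive at each order after the $\Delta^{1/3}$ cancellation: one must verify that, across second and third order, every potentially divergent combination is either block-diagonal and absorbed into $(H_1)_{--}$, cancelled by the $(H_1')_{--}$ counter-term, or suppressed by a positive fractional power of $\Delta^{-1/3}$. This is a direct but notationally heavy extension of the first- and second-order analyses (Lemmas \ref{lem:firstorder} and \ref{lem:secondorder}), carried out within the same Schrieffer-Wolff framework, with the novelty being the introduction of the $\Delta^{1/3}H_1'$ counter-term specifically designed to kill the otherwise-diverging second-order contribution.
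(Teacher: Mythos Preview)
The paper does not actually prove this lemma: it is quoted from \cite{Bravyi-Hastings} (see the sentence preceding Lemmas~\ref{lem:firstorder}--\ref{lem:thirdorder}), so there is no in-paper proof to compare against. That said, your outline is correct and is precisely the Schrieffer--Wolff argument the paper itself uses in Appendix~\ref{app:fourthorderproofs} for the analogous fourth-order statement (Lemma~\ref{lem:fourthorder}): write the perturbation as $A=\Delta^{2/3}H_2+\Delta^{1/3}H_1'+H_1$, expand $H_{\eff}$ via the series $H_{\eff,1},H_{\eff,2},H_{\eff,3}$ given there, use block-diagonality of $H_1,H_1'$ so that $A_{-+}=\Delta^{2/3}(H_2)_{-+}$ exactly, observe that the $\Delta^{1/3}$ pieces of $H_{\eff,1}$ and $H_{\eff,2}$ cancel by the counter-term hypothesis, read off the order-$1$ third-order contribution, and bound the tail by $O(\|A\|^4/\Delta^3)=O(\Lambda^4\Delta^{-1/3})$ together with $\|S\|=O(\Lambda\Delta^{-1/3})$ for the isometry clause. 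One cosmetic point: you use $V$ for both the isometry and the perturbation; the paper's Appendix uses $A$ for the latter to avoid exactly this clash, and you should too.
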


We will often apply the simulation results in these lemmas to many individual interactions within a larger overall Hamiltonian, in parallel. For the gadgets we will use, it was shown in~\cite[Lemma 36]{cubitt17} (following similar arguments in previous work, e.g.~\cite{Oliveira-Terhal,Bravyi-Hastings}) that the overall simulation produced is what one would expect (i.e.\ a sum of the individual simulated interactions, without unexpected interference between the terms). In addition, the simulations that we use will either associate a fixed number of ancilla (``mediator'') qudits with each interaction, or encode each logical qudit within a fixed number of physical qudits. In each such case, the overall isometry $V$ is easily seen to be a tensor product of local isometries as required for Definition~\ref{dfn:sim}; for readability, we leave this isometry implicit.

Later on, we will need a new {\em fourth-order} simulation lemma. As this is more technical to state (and its proof has some additional complications involving interference), we defer it to Section~\ref{sec:fourthorder}.

% ------------------------------------------------------------------------------

\subsection{Example: the AKLT interaction}

To see how the above simulation results can be used to prove universality, we give a simple example of how the AKLT interaction~\cite{AKLT} can simulate the $SU(3)$ Heisenberg interaction. The AKLT interaction $\haklt$ is defined in local dimension $d=3$ (spin 1) by $\haklt:=3h+h^2$, where $h$ is the $SU(2)$ Heisenberg interaction defined in (\ref{eq:spin1intro}).

\begin{lem}
The AKLT interaction $\haklt:=3h+h^2$ is universal.
\label{lem:AKLTmodel}
\end{lem}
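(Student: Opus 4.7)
The plan is to reduce to Theorem~\ref{thm:sud} by simulating the $SU(3)$ Heisenberg interaction with $\haklt$-Hamiltonians. The reduction uses a three-qutrit mediator gadget, analysed via second-order perturbation theory (Lemma~\ref{lem:secondorder}).

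The key structural input is the 3-qutrit triangle AKLT Hamiltonian $H_\triangle := \haklt_{12}+\haklt_{23}+\haklt_{13}$. Evaluating $3h+h^2$ on the three pair-spin sectors $S\in\{0,1,2\}$ of two spin-1s gives $\haklt = 6P_{S=2}-2I$, so $H_\triangle = 6\sum_{i<j}P^{(ij)}_{S=2} - 6I$, and the ground states are precisely those annihilated by every pair-spin-2 projector. The totally antisymmetric $SU(2)$-singlet $\ket{\xi}$ of three spin-1s couples every pair to spin~1 (the antisymmetric piece of $3\otimes 3$) and so is a ground state; using the $SU(2)\times S_3$ decomposition of $(\C^3)^{\otimes 3}$ and checking each isotypic component, one shows $\ket{\xi}$ is the only such state, with an order-unity spectral gap.

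For each edge $(a,b)$ of the target $SU(3)$ Heisenberg graph, I would attach a fresh triangle of mediator qutrits $\{m_1,m_2,m_3\}$, apply $H_0 := H_\triangle$ to them (shifted so the ground energy is zero), and add the perturbation $H_2 := \haklt_{a,m_1}+\haklt_{b,m_2}$. Because $\ket{\xi}$ is $SU(2)$-invariant, its one-qutrit marginals equal $I/3$; a short calculation using $\tr S^\alpha = 0$ and $\tr(S^\alpha S^\beta)=2\delta_{\alpha\beta}$ then gives $(\haklt_{a,m_i})_{--} = \tfrac43\, I_a\otimes\proj{\xi}$, a scalar contribution that can be absorbed into the energy offset. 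By the same $SU(2)$-invariance, the second-order effective interaction $-(H_2)_{-+}H_0^{-1}(H_2)_{+-}$ on the logical pair $(a,b)$ is $SU(2)$-invariant on two spin-1s, hence a linear combination $\alpha I_{ab}+\beta h_{ab}+\gamma h_{ab}^2$, with coefficients determined by matrix elements $\bra{\xi}\haklt_{a,m_1}\ket{\phi}$ over excited eigenstates $\ket{\phi}$ of $H_\triangle$ and their excitation energies.

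Provided the ratio $\beta:\gamma$ produced this way differs from $3:1$ (the ratio appearing in $\haklt = 3h+h^2$), combining the gadget's effective interaction with a directly-applied $\haklt_{ab}$ term at tunable weight realises every element of $\operatorname{span}\{h_{ab},h_{ab}^2\}$ modulo identity. In particular one obtains $h+h^2$, which by a three-pair-spin evaluation equals $\operatorname{SWAP}+I$ on two spin-1s, and $\operatorname{SWAP}=2\sum_a T^a\otimes T^a+\tfrac13 I$ is the $SU(3)$ Heisenberg interaction up to identity and rescaling. Lemma~\ref{lem:secondorder} then makes the simulation rigorous, and Theorem~\ref{thm:sud} delivers universality. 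The main obstacle is the second-order coefficient computation establishing $\beta:\gamma\ne 3:1$: although this is a finite, essentially representation-theoretic calculation, it requires tracking matrix elements and excitation energies across the several total-spin sectors of $H_\triangle$ (the sectors $J=3$, two copies of $J=2$, three copies of $J=1$ above the $J=0$ ground state), which is where most of the technical work will lie.
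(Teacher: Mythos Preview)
Your approach is essentially the same as the paper's: a three-qutrit mediator triangle with the totally antisymmetric ground state, second-order perturbation via Lemma~\ref{lem:secondorder}, and a linear combination with a directly applied $\haklt$ term to reach the $SU(3)$ Heisenberg interaction $h+h^2$, followed by Theorem~\ref{thm:sud}.

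Two small points of comparison. First, the paper couples both data qutrits to the \emph{same} mediator qutrit, taking $H_2=\lambda_2(\haklt_{13}+\haklt_{23}-\tfrac{8}{3}I)$ rather than your $\haklt_{a,m_1}+\haklt_{b,m_2}$; by the $S_3$ symmetry of $\ket{\xi}$ either topology yields an $SU(2)$-invariant effective interaction, but the single-mediator choice keeps the off-diagonal matrix elements slightly more uniform. Second, and more importantly, the paper simply carries out the coefficient computation you flag as the main obstacle (by hand or computer algebra), obtaining
\[
-\Pi H_2 H_0^{-1} H_2 \Pi = -\tfrac{2\lambda_2^2}{27}\bigl(23h_{12}+h_{12}^2+\tfrac{136}{3}I\bigr)\Pi,
\]
so the ratio is $23:1\neq 3:1$, and the choice $\lambda_1=22$, $\lambda_2=\sqrt{27}$ with $H_1=\lambda_1\haklt_{12}$ produces $20(h_{12}+h_{12}^2)$ up to an identity. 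Your representation-theoretic route through the $J$-sector decomposition of $H_\triangle$ would certainly work, but the paper sidesteps it in favour of a direct nine-dimensional calculation; there is no deeper obstruction hiding in the step you left open.
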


\begin{proof}
We will use a gadget construction to show that $\haklt$ can simulate the $SU(3)$ invariant interaction $h+h^2$, which is shown to be universal in Theorem~\ref{thm:sud}.
We will use Lemma~\ref{lem:secondorder} and construct a second-order mediator qutrit gadget involving 3 mediator qutrits labelled 3, 4, 5 that will result in an effective interaction between qutrits 1 and 2.
Let $H_0=\haklt_{34}+\haklt_{45}+\haklt_{35}+6I$, which has a unique ground state $\ket{\psi}$ on qutrits 3, 4, 5 
\[\ket{\psi}=\frac{1}{\sqrt{6}}\left(\ket{012}+\ket{120}+\ket{201}-\ket{021}-\ket{210}-\ket{102}\right),\]
the completely antisymmetric state on 3 qutrits. Define $\Pi$ to be the projector onto the ground space of $H_0$, and let $H_2=\lambda_2\left(\haklt_{13}+\haklt_{23}-\frac{8}{3}I\right)$ for some $\lambda_2 \in \R$. Then one can check (either by hand or using a computer algebra package) that $\Pi H_2 \Pi=0$ and 
%\[-\Pi H_2 H_0^{-1} H_2 \Pi = -\frac{2\lambda_2^2}{27}\left(23h_{12}+h_{12}^2+\frac{272}{3}I\right)\Pi\]
\[-\Pi H_2 H_0^{-1} H_2 \Pi = -\frac{2\lambda_2^2}{27}\left(23h_{12}+h_{12}^2+\frac{136}{3}I\right)\Pi.\]
Let $H_1=\lambda_1 \haklt_{12}$ for some $\lambda_1 \in \R$ so that $\Pi H_1\Pi=\lambda_1\haklt_{12}\Pi$. Then by Lemma~\ref{lem:secondorder}, choosing $\lambda_1=22$ and $\lambda_2=\sqrt{27}$ we can simulate
\[\Pi H_1 \Pi -\Pi H_2 H_0^{-1} H_2 \Pi =20(h_{12}+h^2_{12})-\frac{272}{3}I\]
which one can check is the $SU(3)$ Heisenberg interaction as desired, up to rescaling and deletion of an identity term. Note that this can only produce positively-weighted interactions, but Hamiltonians of this restricted form are indeed proven universal in Theorem~\ref{thm:sud}.
 \end{proof}

% ------------------------------------------------------------------------------

\section{Fourth-order perturbative gadgets}
\label{sec:fourthorder}

We will need the following lemma, which we prove for the first time here (and hence state a bit more generally than the above simulation lemmas, although we will only need $\epsilon=0$ on the right-hand side of (\ref{eq:4thorderstate})). The proof is technical, and hence (as with the subsequent lemma) deferred to Appendix~\ref{app:fourthorderproofs}.

\begin{lem}[Fourth-order simulation]
\label{lem:fourthorder}
Let $H_0$, $H_1$, $H_2$, $H_3$, $H_4$ be Hamiltonians acting on the same space, such that: $\max\{\|H_1\|,\|H_2\|,\|H_3\|,\|H_4\|\} \le \Lambda$; $H_2$ and $H_3$ are block-diagonal with respect to the split $\mathcal{H}_+ \oplus \mathcal{H}_-$; $(H_4)_{--}=0$. Suppose there exists a local isometry $V$ such that $\Im(V)=\mathcal{H}_-$ and 
\begin{equation} \| V H_{\operatorname{target}} V^\dag - \Pi_-\left(H_1 +H_4 H_0^{-1} H_2 H_0^{-1} H_4 -H_4 H_0^{-1} H_4 H_0^{-1} H_4 H_0^{-1} H_4\right)\Pi_- \| \le \epsilon/2 
\label{eq:4thorderstate}
\end{equation}
and also that
\begin{equation} (H_2)_{--} = \Pi_- H_4 H_0^{-1}H_4 \Pi_- \quad \text{ and } \quad (H_3)_{--} = -\Pi_- H_4 H_0^{-1}H_4 H_0^{-1}H_4 \Pi_-. 
\label{eq:H2H3}
\end{equation}
Then $H_{\operatorname{sim}} = \Delta H_0 + \Delta^{3/4} H_4 + \Delta^{1/4} H_3 + \Delta^{1/2}H_2+H_1$ $(\Delta/2,\eta,\epsilon)$-simulates $H_{\operatorname{target}}$, provided that $\Delta \ge O(\Lambda^{20}/\epsilon^4+\Lambda^4/\eta^4)$.
\end{lem}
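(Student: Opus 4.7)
The plan is to establish Lemma~\ref{lem:fourthorder} via the Schrieffer-Wolff (SW) transformation, extending the strategy used in~\cite{Bravyi-Hastings} for the second- and third-order gadgets. Writing the perturbed Hamiltonian as $H_{\operatorname{sim}} = \Delta H_0 + V$ with $V := \Delta^{3/4} H_4 + \Delta^{1/2} H_2 + \Delta^{1/4} H_3 + H_1$, I would seek an anti-Hermitian operator $S$ so that $e^{S}$ exactly block-diagonalises $H_{\operatorname{sim}}$ with respect to the split $\mathcal{H}_+\oplus\mathcal{H}_-$, and then analyse the induced effective Hamiltonian $H_{\text{eff}} := \Pi_- e^{-S} H_{\operatorname{sim}} e^{S} \Pi_-$ as a formal power series in $V$. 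The standard SW machinery produces an expansion whose $k$-th term is a sum of block-alternating products of the form $\Pi_- V_{i_1} H_0^{-1} V_{i_2} H_0^{-1} \cdots H_0^{-1} V_{i_k} \Pi_-$ with definite signs, each carrying a factor of $1/\Delta^{k-1}$ from the resolvents.

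The core combinatorial step is to sort these contributions by their overall scaling in $\Delta$. Assigning exponents $e_4 = 3/4$, $e_2 = 1/2$, $e_3 = 1/4$, $e_1 = 0$, the string $(i_1,\dots,i_k)$ contributes at scaling $\Delta^{-(k-1)+\sum_j e_{i_j}}$, so I would enumerate every combination producing a non-negative exponent and verify the following cancellations. At order $\Delta^{1/2}$ the first-order term $\Delta^{1/2}(H_2)_{--}$ cancels the second-order string $(H_4,H_4)$ contribution $-\Delta^{1/2}\Pi_- H_4 H_0^{-1} H_4 \Pi_-$, precisely as forced by the first hypothesis in (\ref{eq:H2H3}); at order $\Delta^{1/4}$ the first-order term $\Delta^{1/4}(H_3)_{--}$ cancels the pure third-order string $(H_4,H_4,H_4)$ contribution $+\Delta^{1/4}\Pi_- H_4 H_0^{-1} H_4 H_0^{-1} H_4 \Pi_-$, matching the second hypothesis. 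All other candidate contributions at these scalings vanish automatically because $H_2$ and $H_3$ are block-diagonal (so $(H_2)_{-+}=(H_3)_{-+}=0$, killing any second- or third-order string that begins or ends with $H_2$ or $H_3$), and because $(H_4)_{--}=0$ kills any first-order contribution of $H_4$. The surviving $\Delta^0$ contributions are then exactly $(H_1)_{--}$, the third-order string $(H_4,H_2,H_4)$, and the fourth-order pure string $(H_4,H_4,H_4,H_4)$, yielding the right-hand side of (\ref{eq:4thorderstate}) up to the stated $\epsilon/2$ tolerance.

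The remaining SW contributions, of scaling $\Delta^{-k/4}$ for $k\ge 1$, are controlled by the standard SW remainder estimate: the $\ell$-th order correction is bounded by $O(\|V\|^\ell/\Delta^{\ell-1})$, and since $\|V\| = O(\Lambda\, \Delta^{3/4})$ the dominant leftover piece comes from fifth-order strings at scaling $O(\Lambda^5/\Delta^{1/4})$. Demanding this be at most $\epsilon/2$ gives the threshold $\Delta \ge O(\Lambda^{20}/\epsilon^4)$ stated in the lemma, and the isometry closeness $\eta$ is controlled by the usual bound $\|e^{S}\Pi_- - \Pi_-\| = O(\|V\|/\Delta) = O(\Lambda/\Delta^{1/4})$, giving the parallel bound $\Delta \ge O(\Lambda^{4}/\eta^4)$.

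The hardest part of the argument will be the careful bookkeeping of signs and block structures in the fourth-order expansion, and in particular the cross-gadget interference that appears when several copies of this gadget are run in parallel (which is how the lemma will be used in applications). Unlike at second and third order, where every cross-term between distinct gadgets is suppressed by an extra factor of $1/\Delta$, at fourth order there are genuine $\Delta^0$ mixed strings that draw $H_4$ or $H_2$ factors from two different gadgets, and whose cancellation must be established by exploiting the operator structure of $H_4$ and $H_2$ rather than scaling arguments. This is precisely the content of the separate cross-gadget interference statement (Lemma~\ref{lem:4thorderinterference}) invoked alongside the basic fourth-order formula, and it is where the bulk of the technical work lies; the single-gadget SW expansion above is essentially bookkeeping once the cancellation pattern has been identified.
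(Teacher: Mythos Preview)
Your overall strategy is the same as the paper's---apply the Schrieffer--Wolff transformation, expand $H_{\mathrm{eff}}$ to fourth order in the perturbation $A=\Delta^{3/4}H_4+\Delta^{1/2}H_2+\Delta^{1/4}H_3+H_1$, and track powers of $\Delta$---but there is a real gap in your description of the SW expansion itself.

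You assert that the $k$-th order term of the SW effective Hamiltonian is a sum of ``block-alternating products'' of the form $\Pi_- V_{i_1}H_0^{-1}V_{i_2}\cdots H_0^{-1}V_{i_k}\Pi_-$. That is not correct beyond second order. Starting at third order the SW expansion acquires additional ``folded'' terms in which intermediate $\Pi_-$'s and higher powers $H_0^{-2},H_0^{-3},\dots$ appear; for instance (cf.\ \cite{Bravyi-DiVincenzo-Loss,Bravyi-Hastings})
\[
H_{\mathrm{eff},3}=\tfrac{1}{\Delta^2}A_{-+}H_0^{-1}A_{++}H_0^{-1}A_{+-}-\tfrac{1}{2\Delta^2}\bigl(A_{-+}H_0^{-2}A_{+-}A_{--}+\hc\bigr),
\]
and $H_{\mathrm{eff},4}$ contains five distinct terms of the same flavour. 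Your enumeration of $\Delta^0$ contributions therefore misses, among others, the third-order folded term
\[
-\tfrac{1}{2}\bigl(\Pi_- H_4 H_0^{-2} H_4\,\Pi_- H_2\Pi_-+\hc\bigr),
\]
which is genuinely nonzero. In the actual proof this is cancelled by the surviving fourth-order folded term $+\tfrac{1}{2}\bigl(\Pi_- H_4 H_0^{-2} H_4\,\Pi_- H_4 H_0^{-1}H_4\Pi_-+\hc\bigr)$, and that cancellation again uses the hypothesis $(H_2)_{--}=\Pi_- H_4 H_0^{-1}H_4\Pi_-$. (The other folded terms in $H_{\mathrm{eff},4}$ vanish because they contain a factor $\Pi_- H_4\Pi_-=0$.) Without identifying this extra cancellation, your list of surviving $\Delta^0$ contributions is incomplete and the conclusion does not follow. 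The remainder and $\eta$ bounds you give are fine, and the cross-gadget discussion belongs to Lemma~\ref{lem:4thorderinterference} rather than to the present statement.
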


%We will often apply the simulation results in these lemmas to many individual interactions within a larger overall Hamiltonian, in parallel. For the gadgets we will use, it was shown in~\cite[Lemma 36]{cubitt17} (following similar arguments in previous work, e.g.~\cite{Oliveira-Terhal,Bravyi-Hastings}) that the overall simulation produced is what one would expect (i.e.\ a sum of the individual simulated interactions, without unexpected interference between the terms).

For fourth-order gadgets, unlike the gadgets analysed in previous work, it is unfortunately not the case that one can disregard interference between different gadgets applied in parallel; there are additional terms generated by interference between gadgets. We calculate this interference in the following lemma.

\begin{lem}
\label{lem:4thorderinterference}
Consider a Hilbert space $\cH=\cH_0 \otimes \bigotimes_{i\ge 1} \cH_i$ with multiple fourth-order mediator gadgets labelled by $i\ge 1$, each with heavy Hamiltonian $H_0^{(i)}$ which acts non-trivially only on $\cH_i$, and interaction terms $H_1^{(i)}$, $H_2^{(i)}$, $H_3^{(i)}$, $H_4^{(i)}$ which act non-trivially only on $\cH_i \otimes \cH_0$. Let $\Pi_-^{(i)}$ denote the projector onto the ground space of $H_0^{(i)}$, and $\Pi_+^{(i)} = I - \Pi_-^{(i)}$.
Suppose that for each $i$, these terms satisfy the conditions of Lemma~\ref{lem:fourthorder}; in particular, $H_0^{(i)} \Pi_-^{(i)} = 0$, $H_2^{(i)}$ and $H_3^{(i)}$ are block diagonal with respect to the $\Pi_-^{(i)}$, $\Pi_+^{(i)}$ split, $\Pi^{(i)}_- H_4^{(i)} \Pi^{(i)}_-=0$ and 
\[ \Pi_-^{(i)}H_2^{(i)}\Pi_-^{(i)} = \Pi^{(i)}_- H_4^{(i)} (H_0^{(i)})^{-1}H_4^{(i)} \Pi^{(i)}_- \text{ and }
\Pi^{(i)}_-H_3^{(i)} \Pi_-^{(i)} = -\Pi^{(i)}_- H_4^{(i)} (H_0^{(i)})^{-1}H_4^{(i)} (H_0^{(i)})^{-1}H_4^{(i)} \Pi^{(i)}_-.\]
%
%Let $H_0=\sum_{i}H_0^{(i)}$, $H_1=\sum_{i}H_1^{(i)}$, $H_2=\sum_{i}H_2^{(i)}$, $H_3=\sum_{i}H_3^{(i)}$ and $H_4=\sum_{i}H_4^{(i)}$.
For each $j \in \{0,\dots,4\}$, let $H_j = \sum_i H_j^{(i)}$, and let  $\Lambda \ge\max\{\|H_1\|,\|H_2\|,\|H_3\|,\|H_4\|\}$.

Suppose there exists a local isometry $V$ such that $\Im(V)$ is the ground space of $H_0$ and also $\|V H_{\operatorname{target}}V^{\dagger}-M\|\le \epsilon/2$, where
\begin{align*}M=\sum_i& \Pi_-\left(H_1^{(i)}+H_4^{(i)} (H_0^{(i)})^{-1} H_2^{(i)} (H_0^{(i)})^{-1} H_4^{(i)} -H_4^{(i)} (H_0^{(i)})^{-1} H_4^{(i)} (H_0^{(i)})^{-1} H_4^{(i)} (H_0^{(i)})^{-1} H_4^{(i)}\right)\Pi_-\\
&+\sum_{i\neq j}\Pi_-\Big(H_4^{(i)} (H_0^{(i)})^{-1} H_4^{(j)}(H_0^{(j)})^{-1}H_4^{(j)} (H_0^{(i)})^{-1} H_4^{(i)}\\[-11pt]
&\qquad\qquad\qquad-H_4^{(i)} (H_0^{(i)})^{-1} H_4^{(j)} (H_0^{(i)}+H_0^{(j)})^{-1} H_4^{(j)} (H_0^{(i)})^{-1} H_4^{(i)}\\
&\qquad\qquad\qquad-H_4^{(i)} (H_0^{(i)})^{-1} H_4^{(j)} (H_0^{(i)}+H_0^{(j)})^{-1} H_4^{(i)} (H_0^{(j)})^{-1} H_4^{(j)}\Big)\Pi_-
\end{align*}
and $\Pi_-$  is the projector onto the ground space of $H_0$.

Then $\Delta H_0 + \Delta^{3/4} H_4 + \Delta^{1/4} H_3 + \Delta^{1/2}H_2+H_1$ $(\Delta/2,\eta,\epsilon)$ simulates $H_{\operatorname{target}}$, provided that $\Delta \ge O(\Lambda^{20}/\epsilon^4+\Lambda^4/\eta^4)$.
\end{lem}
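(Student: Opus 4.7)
The plan is to reduce Lemma~\ref{lem:4thorderinterference} directly to Lemma~\ref{lem:fourthorder} by taking $H_j := \sum_i H_j^{(i)}$ for $j\in\{0,1,2,3,4\}$, and then showing that the fourth-order effective operator appearing in the hypothesis of Lemma~\ref{lem:fourthorder} expands to exactly the operator $M$ defined here. The norm bound is assumed, the identity $(H_4)_{--}=0$ follows from $\Pi_-^{(i)} H_4^{(i)} \Pi_-^{(i)}=0$ together with $\Pi_-=\bigotimes_i \Pi_-^{(i)}$ (which holds because the mediator spaces are disjoint), and the block-diagonality of $H_2,H_3$ with respect to $\Pi_\pm$ follows similarly from the per-gadget versions. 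The first substantive check is the cancellation conditions (\ref{eq:H2H3}) at the aggregate level. Expanding $\Pi_- H_4 H_0^{-1} H_4 \Pi_- = \sum_{i,j}\Pi_- H_4^{(i)} H_0^{-1} H_4^{(j)} \Pi_-$ and noting that each $H_4^{(i)}$ changes only gadget $i$'s excitation status, the intermediate state produced by $H_4^{(j)}\Pi_-$ is excited only on gadget $j$, so $H_0^{-1}$ acts there as $(H_0^{(j)})^{-1}$, and only $i=j$ survives the outer $\Pi_-$. Summing matches $(H_2)_{--}=\sum_i \Pi_- H_2^{(i)} \Pi_-$ by the per-gadget hypothesis; the $(H_3)_{--}$ case is analogous.

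The bulk of the work is in expanding $\Pi_- H_4 H_0^{-1} H_4 H_0^{-1} H_4 H_0^{-1} H_4 \Pi_-$ and matching it, together with $\Pi_- H_4 H_0^{-1} H_2 H_0^{-1} H_4 \Pi_-$, to the three cross-gadget terms in $M$. I would decompose $H_0^{-1}$ on the excited space as $\sum_{S\neq\emptyset} \Pi_S \bigl(\sum_{i\in S} H_0^{(i)}\bigr)^{-1}\Pi_S$, where $\Pi_S$ projects onto states with precisely the gadgets in $S$ excited (and all others in the ground). The boundary insertions force the outermost intermediate excited sets to be $S_3=\{i_1\}$ and $S_1=\{i_4\}$. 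Using the constraint that $\Pi_S H_4^{(i)} \Pi_{S'} \neq 0$ implies $S\triangle S' \subseteq \{i\}$, the middle projector $S_2$ can be solved for: the only nonvanishing index quadruples $(i_1,i_2,i_3,i_4)$ are $i_1=i_2=i_3=i_4$ (yielding the single-gadget fourth-order contribution summed over $i$), the pattern $(m,k,k,m)$ with $m\neq k$ and $S_2=\{m,k\}$ (yielding Term~2 of $M$), and the pattern $(a,b,a,b)$ with $a\neq b$ and $S_2=\{a,b\}$ (yielding Term~3). A parallel but simpler analysis of $\Pi_- H_4 H_0^{-1} H_2 H_0^{-1} H_4 \Pi_-$ splits it into a single-gadget piece and a cross piece with $i=j\neq k$, in which $\Pi_-^{(k)} H_2^{(k)} \Pi_-^{(k)}$ is rewritten using (\ref{eq:H2H3}) to produce Term~1.

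The main obstacle is this case enumeration: one must verify that patterns involving three or more distinct gadgets, unbalanced $H_4$ insertions, or any other assignment of $S_2$ genuinely yield zero, using the constraints $\Pi_-^{(i)} H_4^{(i)} \Pi_-^{(i)}=0$ and $S\triangle S' \subseteq \{i\}$, which together pin $|S_k|\le 2$. Once this algebraic matching is complete, the hypothesis $\|V H_{\operatorname{target}} V^\dag - M\| \le \epsilon/2$ of the present lemma becomes exactly the hypothesis of Lemma~\ref{lem:fourthorder} applied to the aggregated Hamiltonians, so the simulation conclusion and the threshold $\Delta \ge O(\Lambda^{20}/\epsilon^4 + \Lambda^4/\eta^4)$ transfer verbatim, with no new error terms introduced by the aggregation step.
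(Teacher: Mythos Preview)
Your proposal is correct and follows essentially the same route as the paper's proof. The paper also reduces directly to Lemma~\ref{lem:fourthorder} applied to the aggregated $H_j=\sum_i H_j^{(i)}$, introduces projectors $\overline{\Pi^S}$ onto the states with exactly the gadgets in $S$ excited (your $\Pi_S$), uses the identity $H_0^{-1}\overline{\Pi^S}=\bigl(\sum_{i\in S}H_0^{(i)}\bigr)^{-1}\overline{\Pi^S}$, and arrives at the same three index patterns $(i,i,i,i)$, $(i,j,j,i)$, $(i,j,i,j)$ for the fourth-order term and the same $i=j$ collapse for the $H_2$ term. The only cosmetic difference is that the paper tracks the projectors by pushing $\overline{\Pi^{\{i\}}}$ and $\overline{\Pi^{\{i,j\}}}$ through the products explicitly, whereas you phrase it as a case enumeration on $(S_1,S_2,S_3)$; the content is the same.
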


Note that the first line of the simulated Hamiltonian is what one would expect when summing the contributions of each of the gadgets separately. The other terms are in general not zero and may be thought of as the cross-gadget interference.

We will only need to use Lemma~\ref{lem:4thorderinterference} via the following simplified corollary.

\begin{cor}
\label{cor:4thorderinterference}
Suppose the conditions of Lemma~\ref{lem:4thorderinterference} hold, and in addition $H_0^{(i)}H_4^{(i)}\Pi_-=H_4^{(i)}\Pi_-$ for all $i$ (for example when $H_0^{(i)}$ is a projector).
Then the expression for $M$ is given by
\[M=\sum_i \Pi_-\left(H_1^{(i)}+H_4^{(i)} H_2^{(i)} H_4^{(i)} -H_4^{(i)}H_4^{(i)} (H_0^{(i)})^{-1} H_4^{(i)}  H_4^{(i)}\right)\Pi_- 
-\frac{1}{2} \sum_{i<j}\Pi_-\left[H_4^{(i)},H_4^{(j)}\right]^2 \Pi_- \]
\end{cor}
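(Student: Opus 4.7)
The plan is to invoke Lemma~\ref{lem:4thorderinterference} and then simplify the expression for $M$ term by term, using the new hypothesis $H_0^{(i)}H_4^{(i)}\Pi_- = H_4^{(i)}\Pi_-$. The core observation is that this hypothesis says $H_4^{(i)}\Pi_-$ lies in the eigenvalue-$1$ eigenspace of $H_0^{(i)}$, so $(H_0^{(i)})^{-1}$ acts as the identity on such vectors; taking adjoints, $\Pi_- H_4^{(i)}(H_0^{(i)})^{-1} = \Pi_- H_4^{(i)}$ as well. This immediately collapses the single-gadget contributions: each occurrence of $(H_0^{(i)})^{-1}$ adjacent to $H_4^{(i)}\Pi_-$ (or its adjoint) can be dropped, reducing the first line of $M$ to $\sum_i \Pi_-\bigl(H_1^{(i)} + H_4^{(i)} H_2^{(i)} H_4^{(i)} - H_4^{(i)} H_4^{(i)} (H_0^{(i)})^{-1} H_4^{(i)} H_4^{(i)}\bigr)\Pi_-$, as claimed.

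For the cross-gadget terms I need to understand how $(H_0^{(i)})^{-1}$ and $(H_0^{(i)}+H_0^{(j)})^{-1}$ act on vectors like $H_4^{(j)}H_4^{(i)}\Pi_-$. Because $H_4^{(j)}$ is supported on $\cH_0\otimes \cH_j$ and $H_0^{(i)}$ on $\cH_i$, the operators $H_0^{(i)}$ and $H_4^{(j)}$ commute; hence $H_0^{(i)} H_4^{(j)}H_4^{(i)}\Pi_- = H_4^{(j)} H_0^{(i)}H_4^{(i)}\Pi_- = H_4^{(j)}H_4^{(i)}\Pi_-$. Using the tensor-product structure $\Pi_- = I_{\cH_0}\otimes \bigotimes_k \Pi_-^{(k)}$, the analogous computation yields $H_0^{(j)} H_4^{(j)}H_4^{(i)}\Pi_- = H_4^{(j)}H_4^{(i)}\Pi_-$. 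Thus $H_4^{(j)}H_4^{(i)}\Pi_-$ is a simultaneous eigenvector of $H_0^{(i)}$ and $H_0^{(j)}$ with eigenvalue $1$ each, hence an eigenvector of $H_0^{(i)}+H_0^{(j)}$ with eigenvalue $2$.

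Applying this to the three cross-gadget terms in the definition of $M$: in term (a) both outer resolvents $(H_0^{(i)})^{-1}$ and the inner $(H_0^{(j)})^{-1}$ collapse to identities, yielding $\Pi_- H_4^{(i)}H_4^{(j)}H_4^{(j)}H_4^{(i)}\Pi_-$. In term (b) the outer $(H_0^{(i)})^{-1}$'s again collapse and $(H_0^{(i)}+H_0^{(j)})^{-1}$ contributes a factor of $\tfrac12$, giving $-\tfrac12\Pi_- H_4^{(i)}H_4^{(j)}H_4^{(j)}H_4^{(i)}\Pi_-$. Term (c) similarly simplifies to $-\tfrac12\Pi_- H_4^{(i)}H_4^{(j)}H_4^{(i)}H_4^{(j)}\Pi_-$. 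Summing the three contributions for the ordered pair $(i,j)$ gives
\[
\tfrac12 \Pi_- H_4^{(i)} H_4^{(j)} \bigl(H_4^{(j)} H_4^{(i)} - H_4^{(i)} H_4^{(j)}\bigr) \Pi_- = -\tfrac12 \Pi_- H_4^{(i)} H_4^{(j)}\,[H_4^{(i)}, H_4^{(j)}]\,\Pi_-.
\]
Finally, I symmetrize the sum over $i\neq j$ into a sum over unordered pairs: adding the $(i,j)$ and $(j,i)$ contributions produces $-\tfrac12 \Pi_- [H_4^{(i)}, H_4^{(j)}]\,[H_4^{(i)}, H_4^{(j)}]\,\Pi_- = -\tfrac12 \Pi_- [H_4^{(i)}, H_4^{(j)}]^2 \Pi_-$ per unordered pair, matching the stated formula for $M$.

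The only real obstacle is the bookkeeping around the pseudoinverses $(H_0^{(i)})^{-1}$: these are not genuine inverses, so one must justify that they act as the identity on exactly the vectors produced by chains of $H_4^{(\cdot)}$'s acting on $\Pi_-$. The hypothesis $H_0^{(i)}H_4^{(i)}\Pi_- = H_4^{(i)}\Pi_-$ together with the commutativity of $H_0^{(i)}$ with every $H_4^{(j)}$ for $j\neq i$ is precisely what makes this work, and it is what allows all resolvents to be evaluated on explicit eigenspaces of known eigenvalue.
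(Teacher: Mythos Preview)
Your proof is correct and follows essentially the same route as the paper's: collapse the resolvents using the eigenvalue-$1$ hypothesis (together with the commutativity of $H_0^{(i)}$ with $H_4^{(j)}$ for $i\neq j$), reduce each cross-gadget contribution to $\tfrac12\Pi_-(H_4^{(i)}H_4^{(j)}H_4^{(j)}H_4^{(i)} - H_4^{(i)}H_4^{(j)}H_4^{(i)}H_4^{(j)})\Pi_-$, and then symmetrize over ordered pairs to obtain $-\tfrac12\Pi_-[H_4^{(i)},H_4^{(j)}]^2\Pi_-$. Your treatment is slightly more explicit than the paper's in justifying why $(H_0^{(i)}+H_0^{(j)})^{-1}$ contributes a factor of $\tfrac12$ on $H_4^{(j)}H_4^{(i)}\Pi_-$, but the argument is the same.
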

\begin{proof}
For $i \neq j$, by the additional assumption of the present corollary $(H_0^{(i)}+H_0^{(j)})^{-1}H_4^{(i)}H_4^{(j)}\Pi_-=\frac{1}{2}H_4^{(i)}H_4^{(j)}\Pi_-$, so the expression for the cross-gadget interference from Lemma~\ref{lem:4thorderinterference} simplifies to 
\begin{align*}
&\sum_{i \neq j} \Pi_- \left(H_4^{(i)} H_4^{(j)} H_4^{(j)} H_4^{(i)} -\frac{1}{2}\left(H_4^{(i)} H_4^{(j)} H_4^{(j)} H_4^{(i)} +H_4^{(i)} H_4^{(j)} H_4^{(i)} H_4^{(j)} \right)\right)\Pi_-\\
=&\frac{1}{2}\sum_{i \neq j} \Pi_- \left(H_4^{(i)} H_4^{(j)} H_4^{(j)} H_4^{(i)} -H_4^{(i)} H_4^{(j)} H_4^{(i)} H_4^{(j)} \right)\Pi_- =-\frac{1}{2} \sum_{i<j}\Pi_-\left[H_4^{(i)},H_4^{(j)}\right]^2 \Pi_-
\end{align*}
where we note that the sum over $i \neq j$ includes both cases $i < j$ and $i>j$.
\end{proof}

% ------------------------------------------------------------------------------

\section{LA-universal Hamiltonians}

We first prove LA-universality (or otherwise) of various classes of interactions, before bringing these results together into a full classification theorem by showing that every interaction fits into one of these classes. Before embarking on the proof, we observe that for any interaction $H$, we can delete its 1-local part by using our free 1-local terms. This corresponds to replacing $H$ with
\begin{equation}
\label{eq:2localpart} H' = H - \frac{I}{d}\otimes \tr_1(H) - \tr_2(H)\otimes \frac{I}{d} + \tr(H)\frac{I\otimes I}{d^2}. \end{equation}
We call $H'$ the 2-local part of $H$. For a fixed basis $T^a$ of Hermitian $d \times d$ matrices, we can decompose $H'=\sum_{a,b} M_{ab} T^a \otimes T^b$ for some real $d^2 \times d^2$ matrix $M$. We define the \emph{2-local rank} of $H$ to be the rank of $M$. 

Note that this definition is independent of the choice of basis $T^a$. Suppose we instead write $H'=\sum_{a,b}\tilde M_{ab} S^{a}\otimes S^{\prime b}$ for two other bases $\{S^a\}_a$ and $\{S^{\prime b}\}_b$ of Hermitian $d\times d$ matrices. Since these are bases there must exist invertible matrices $R$ and $R'$ such that $T^a=\sum_{b} R_{ab} S^a=\sum_{b} R'_{ab} S^{\prime b}$. Then 
\begin{align*}
H'&=\sum_{a,b}\tilde M_{ab} S^{a}\otimes S^{\prime b} \\
&=\sum_{c,d} M_{cd} T^c \otimes T^d=\sum_{a,b}( \sum_{c,d}R_{ca} M_{cd} R_{db})  S^a \otimes S^{\prime b}
\end{align*}
and thus  $\rank(\tilde{M})=\rank(R^{T}MR') =\rank(M)$ since $R$ and $R'$ are both full rank.

We now move on to the first case of the proof, diagonal interactions.
% ------------------------------------------------------------------------------

\subsection{Interactions diagonalisable by local unitaries}

\begin{lem}
\label{lem:diagonal}
Let $H$ be a nonzero diagonal 2-qudit interaction. If the 2-local rank of $H$ is $\geqslant 2$, then $H$ is LA-universal; otherwise, $H$ is LA-stoquastic-universal.
\end{lem}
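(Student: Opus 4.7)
The plan is to handle both cases via a first-order perturbative projection (Lemma~\ref{lem:firstorder}), using the free 1-local terms to reduce the problem to a 2-qubit interaction whose LA-(stoquastic-)universality is given by Theorem~\ref{thm:qubits}. Concretely, I would first use~(\ref{eq:2localpart}) to strip the 1-local part of $H$ and write its 2-local part as $H' = \sum_{k=1}^{r} A_k \otimes B_k$ in an SVD-like decomposition, where $r$ is the 2-local rank and $\{A_k\}, \{B_k\}$ are each a linearly independent set of traceless diagonal Hermitian matrices. Then, choosing a 2D subspace on each qudit with projector $P^{(i)}$, I apply Lemma~\ref{lem:firstorder} with a heavy 1-local Hamiltonian whose ground space is spanned by these subspaces; the resulting effective interaction is $(P^{(1)} \otimes P^{(2)}) H' (P^{(1)} \otimes P^{(2)})$, and any free 1-local terms descend to arbitrary 1-local terms on the two logical qubits. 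It remains in each case to choose the two 2D subspaces so that the projected 2-qubit interaction together with free 1-local terms lies in the appropriate branch of Theorem~\ref{thm:qubits}.

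The rank-1 case is immediate: $H' = A \otimes B$ with $A, B$ nonzero traceless diagonal, so each has at least two distinct diagonal entries. Picking indices $a \neq b$ with $A_{aa} \neq A_{bb}$ and $c \neq d$ with $B_{cc} \neq B_{dd}$, projecting qudit~1 onto $\linspan(\ket{a}, \ket{b})$ and qudit~2 onto $\linspan(\ket{c}, \ket{d})$ yields an effective interaction that is a nonzero multiple of $Z \otimes Z$ up to 1-local terms. Combined with the free 1-local terms, the stoquastic branch of Theorem~\ref{thm:qubits} then gives LA-stoquastic-universality.

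For the rank $\ge 2$ case the task is to land instead in the LA-universal branch of Theorem~\ref{thm:qubits}, i.e.\ to ensure the 2-local part of the projected interaction is not locally equivalent to $\alpha Z \otimes Z$. Writing $\tilde A_k = P^{(1)} A_k P^{(1)} = a_k^0 I + \vec{a}_k \cdot \vec{\sigma}$ and $\tilde B_k$ similarly, this amounts to requiring the $3 \times 3$ matrix $M = \sum_k \vec{a}_k \vec{b}_k^{\top}$ to have rank at least 2. The main obstacle is that the inputs $A_k, B_k$ are diagonal and hence all commute: projecting onto two computational basis states would produce only $Z$-type qubit operators, giving $\rank(M) = 1$ and landing in the stoquastic rather than the universal branch. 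The remedy is to choose $\ket{\phi_0^{(i)}}, \ket{\phi_1^{(i)}}$ as superpositions supported on at least three computational basis states, so that the off-diagonal elements $\bra{\phi_0^{(i)}} A_k \ket{\phi_1^{(i)}}$ may be nonzero and $\vec{a}_k$ acquires non-$Z$ Pauli components. Concretely, since $A_1, A_2$ are linearly independent as diagonals on $\C^d$ there exist three indices $\{i, j, k\}$ on which $A_1, A_2$ restricted are still linearly independent in $\R^3$; taking $\ket{\phi_0^{(1)}}, \ket{\phi_1^{(1)}}$ to be a generic orthonormal pair in $\linspan\{\ket{i}, \ket{j}, \ket{k}\}$ then makes $\vec{a}_1, \vec{a}_2$ non-parallel, and doing the analogous thing for $B_1, B_2$ on qudit~2 makes $\vec{b}_1, \vec{b}_2$ non-parallel, forcing $\rank(M) \ge 2$ and hence LA-universality by Theorem~\ref{thm:qubits}.
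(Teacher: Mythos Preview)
Your overall strategy---strip the 1-local part, project each qudit onto a 2-dimensional subspace via Lemma~\ref{lem:firstorder}, and invoke Theorem~\ref{thm:qubits}---matches the paper exactly, and your rank-1 argument is fine. The rank $\ge 2$ case, however, has a genuine gap. First, the linear map taking a diagonal $A$ to the Pauli vector $\vec a$ of $P^{(1)}AP^{(1)}$ always annihilates the identity; so if on your three chosen indices some nonzero combination of $A_1,A_2$ restricts to a constant diagonal, then $\vec a_1,\vec a_2$ are forced parallel for \emph{every} 2-dimensional subspace of $\linspan\{\ket i,\ket j,\ket k\}$, and you never verify that a triple of indices avoiding this can be found. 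Second, and more seriously, the inference ``$\vec a_1,\vec a_2$ non-parallel and $\vec b_1,\vec b_2$ non-parallel $\Rightarrow \rank M\ge 2$'' is only valid when $r=2$: for $r>2$ the remaining rank-one terms $\vec a_k\vec b_k^{\top}$ with $k\ge 3$ can cancel the first two, and you say nothing about them. A genericity claim (``for almost all subspaces $\rank M=2$'') is plausible, but making it a proof still requires exhibiting at least one explicit good choice.

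The paper closes exactly this gap by a more constructive route. It first replaces $H$ by its (anti)sym\-metrisation in the two qudits, so that the coefficient matrix $A=(A_{ij})$ is symmetric or skew-symmetric, and observes that $\tr_1(H)=\tr_2(H)=0$ forces $A\ket{+}=0$; hence the two eigenvectors $\ket{\eta_1},\ket{\eta_2}$ of $A$ with nonzero eigenvalue (or the two vectors from a nonzero $2\times 2$ skew block) are automatically orthogonal to $(1,\dots,1)$. It then \emph{explicitly solves}, entry by entry, for a single real $2\times d$ matrix $P$ (applied to both qudits) so that the Pauli-$X$ and Pauli-$Z$ vectors of the projected interaction equal $\ket{\eta_1}$ and $\ket{\eta_2}$, which directly yields a rank-2 Pauli coefficient matrix. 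That explicit construction is precisely what your ``generic'' step is missing.
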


%The interaction $H'$ occurring in the statement of this lemma is simply the interaction obtained by deleting the 1-local part from $H$. We can assume that we have access to $H'$ by using our free 1-local terms to effectively delete this part from $H$.

\begin{proof}
First note that we can use 1-local terms to replace $H$ with its 2-local part, as in (\ref{eq:2localpart}). This still results in a diagonal interaction and allows us to assume that $\tr_1(H)=0=\tr_2(H)$. Let $H$ be given by $H=\sum_{i,j=1}^d A_{ij} \proj{i} \otimes \proj{j}$ for some $d \times d$ matrix $A$. Then the 2-local rank of $H$ is given by $\rank(A)$.
Next observe that we can assume that the interaction $H$ is either symmetric or antisymmetric with respect to permuting the qudits on which it acts, because we can apply it in either direction, with positive or negative weights. So we obtain either $H_{ij} + H_{ji}$ or $H_{ij}-H_{ji}$, corresponding to mapping $A$ either to $A+A^T$ or $A-A^T$. This cannot affect the condition on the rank of $A$, because
\[ \rank(A) = \rank((A+A^T) + (A-A^T)) \le \rank(A+A^T) + \rank(A-A^T); \]
if $\rank(A) \ge 2$, then either $\max\{\rank(A+A^T),\rank(A-A^T)\} \ge 2$, or $\rank(A+A^T) = \rank(A-A^T) = 1$; but this latter possibility cannot occur because $A-A^T$ is skew-symmetric, so $\rank(A-A^T) \neq 1$.

We will apply Lemma~\ref{lem:firstorder} by using heavily-weighted local terms to effectively project each subsystem on which $H$ acts into a 2-dimensional subspace, which will encode a qubit. Such a projection can be described by a $2 \times d$ matrix $P$. We aim to produce an effective 2-qubit interaction $H'$ which is universal. As we can apply arbitrary local terms, we can project each qudit onto an arbitrary 2-dimensional subspace $S$ by choosing a ``heavy'' Hamiltonian $H_0 = \sum_i H^P_i$ in Lemma~\ref{lem:firstorder} such that $H^P$ has $S$ as its ground space. The local isometry $V$ in the lemma is just given by $P^{\dagger}$.

The result of projecting $H$ is the 2-qubit interaction
\[ H' = \sum_{i,j=1}^d A_{ij} \left( P \proj{i} P^{\dag}\right) \otimes \left( P \proj{j} P^\dag \right) = \sum_{i,j=1}^d A_{ij} \left( \sum_{k=0}^3 \beta_{ik} \sigma^k \right) \otimes \left( \sum_{\ell=0}^3 \beta_{j\ell} \sigma^\ell \right), \]
for some real coefficients $\beta_{ik}$ such that
\[ \beta_{ik} = \frac{1}{2} \tr[P \proj{i} P^\dag \sigma^k]. \]
Reordering the sums, we obtain
\[ H' = \sum_{k,\ell=0}^3 \left( \sum_{i,j=1}^d \beta_{ik} A_{ij} \beta_{j\ell} \right) \sigma^k \otimes \sigma^\ell = \sum_{k,\ell=0}^3 \bracket{\beta_k}{A}{\beta_{\ell}} \sigma^k \otimes \sigma^\ell, \]
where we define the unnormalised vector $\ket{\beta_k} = \sum_{i=1}^d \beta_{ik} \ket{i}$. We can write down explicit expressions for these vectors as
\[ \beta_{i1} = \Re( P_{1i}^* P_{2i} ),\;\;\;\; \beta_{i2} = \Im( P_{1i}^* P_{2i} ),\;\;\;\; \beta_{i3} = \frac{1}{2}\left( |P_{1i}|^2 - |P_{2i}|^2\right). \]
It was shown in~\cite{Cubitt-Montanaro,cubitt17} that an interaction of the form $\sum_{k,\ell=1}^3 M_{k\ell} \sigma^k \otimes \sigma^\ell$ is universal if the $3 \times 3$ matrix $M$ has rank at least 2. Our goal will be to choose the vectors $\ket{\beta_k}$ to achieve this.

If $A$ is symmetric, we can expand it as a weighted sum of projectors onto real, orthonormal eigenvectors $\ket{\eta_i}$; as $\rank(A) \ge 2$, there exist $\ket{\eta_1}$, $\ket{\eta_2}$ with nonzero eigenvalues. If $A$ is skew-symmetric, there exist real, orthonormal vectors $\ket{\eta_i}$ such that $\bracket{\eta_i}{A}{\eta_i} = 0$ for all $i$, and $\bracket{\eta_1}{A}{\eta_2} = -\bracket{\eta_2}{A}{\eta_1} \neq 0$ (see e.g.~\cite{thompson88}). Hence, in either the symmetric or skew-symmetric case, in order to achieve that $M$ has rank at least 2, it is sufficient to have $\ket{\beta_1} = \ket{\eta_1}$ and $\ket{\beta_3} = \ket{\eta_2}$. This fixes a $2 \times 2$ submatrix of $M$ to be either diagonal (and rank 2), or proportional to $\sm{0 & 1\\-1 & 0}$. So we want to produce a matrix $P$ that achieves $\beta_{i1} = \ip{i}{\eta_1}$, $\beta_{i3} = \ip{i}{\eta_2}$ for all $i$.

%Further, as $A'$ has been projected onto the orthogonal complement of $\proj{+}$, all of the $\ket{\eta_i}$ vectors with nonzero eigenvalues are orthogonal to $\ket{+}$. (This holds in the skew-symmetric case too: follows from the fact that $A'$ and $(A')^T$ both have $\ket{+}$ as a zero eigenvector.)

If we can find a {\em real} matrix $P$ that achieves this, it will automatically have orthonormal rows (up to an overall normalising constant), and also the entries of $M$ outside a $2\times 2$ submatrix will be zero. To see this, first note that $\ket{\eta_1}$ and $\ket{\eta_2}$ are orthogonal to $\ket{+} = \sum_{i=1}^d \ket{i}$. This holds because $\tr_1(H) = \sum_{j=1}^d \left(\sum_{i=1}^d A_{ij}\right) \proj{j} = 0$, and similarly for $\tr_2(H)$, so $A \ket{+} = A^T \ket{+} = 0$. So as $\ket{\beta_1} = \ket{\eta_1}$ and $\ket{\beta_3} = \ket{\eta_2}$, $\sum_i \beta_{i1} = \sum_i \beta_{i3} = 0$, implying that $\sum_i P_{1i} P_{2i} = 0$ and $\sum_i P_{1i}^2 = \sum_i P_{2i}^2$. We can find an explicit expression for each element of $P$ by solving the simultaneous equations
\[ P_{1i} P_{2i} = \gamma_i,\;\;\;\; \frac{1}{2}\left(P_{1i}^2 - P_{2i}^2\right) = \delta_i, \]
where we write $\gamma_i = \ip{i}{\eta_1}$, $\delta_i = \ip{i}{\eta_2}$. It can readily be verified that the following is a valid solution:
\[
\begin{cases}
P_{1i} = 0, P_{2i} = \sqrt{-2\delta_i} & \text{if $\gamma_i = 0$ and $\delta_i \le 0$}\\
%P_{1i} = 0, P_{2i} = \sqrt{\delta_i} & \text{if $\gamma_i = 0$ and $\delta_i \ge 0$}\\
%P_{1i} = \sqrt{\gamma_i}, P_{2i} = -\sqrt{\gamma_i} & \text{if $\gamma_i \le 0$ and $\delta_i = 0$}\\
%P_{1i} = \sqrt{\gamma_i}, P_{2i} = \sqrt{\gamma_i} & \text{if $\gamma_i \ge 0$ and $\delta_i = 0$}\\
P_{1i} = \sqrt{\delta_i + \sqrt{\gamma_i^2+\delta_i^2} }, P_{2i} = \frac{\gamma_i}{\sqrt{\delta_i + \sqrt{\gamma_i^2+\delta_i^2}}} & \text{otherwise.}
\end{cases}
\]
Thus $H$ is LA-universal. This completes the proof of the case $\rank(A) \ge 2$. If $\rank(A) = 1$, we know that there exists an eigenvector $\ket{\eta_1}$ with nonzero eigenvalue, and can take $\ket{\eta_2}$ to be an arbitrary orthogonal vector. Almost all the above steps go through, but we end up producing a matrix $M$ such that $\rank(M) \ge 1$. This case is known to be stoquastic-universal~\cite{Bravyi-Hastings,cubitt17}.
\end{proof}

\begin{lem}
\label{lem:3evalues}
 Let $H=A\otimes A$ be a 2-qudit interaction such that $A$ has three distinct eigenvalues. Then $H$ is LA-universal.
\end{lem}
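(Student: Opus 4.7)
The plan is to use a single mediator qudit and the third-order gadget of Lemma~\ref{lem:thirdorder} to simulate an effective interaction whose $2$-local part is a nonzero multiple of $A\otimes A^2+A^2\otimes A$, then to invoke Lemma~\ref{lem:diagonal}. Since $A$ has three distinct eigenvalues, its minimal polynomial has degree $3$, so $\{I,A,A^2\}$ are linearly independent, and this effective interaction is diagonal (in the eigenbasis of $A$) with $2$-local rank exactly $2$. Because simulation is preserved under local unitaries, I may assume $A$ is diagonal throughout.

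Introduce one mediator qudit labelled $3$ and pick a unit vector $\ket{\phi}\in\C^d$. Use the free $1$-local terms to set $H_0=I_{12}\otimes(I-\proj{\phi})$, so $\Pi_-=I_{12}\otimes\proj{\phi}$ and $H_0^{-1}\Pi_+=\Pi_+$. Take $H_2$ to be $H$ applied between qudits $\{1,3\}$ and $\{2,3\}$ with a common weight, together with $1$-local compensating shifts on qudits $1$ and $2$ arranged so that $H_2=(A_1+A_2)\otimes A'_3$, where $A':=A-\langle A\rangle\, I$ and $\langle X\rangle:=\bra{\phi}X\ket{\phi}$. This ensures $(H_2)_{--}=0$, a required hypothesis of Lemma~\ref{lem:thirdorder}. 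Using $\Pi_+A'\ket{\phi}=A'\ket{\phi}$, a direct computation yields
\begin{align*}
(H_2)_{-+}H_0^{-1}(H_2)_{+-}\Pi_- &= c_2(A_1+A_2)^2\otimes\proj{\phi},\\
(H_2)_{-+}H_0^{-1}(H_2)_{++}H_0^{-1}(H_2)_{+-}\Pi_- &= c_3(A_1+A_2)^3\otimes\proj{\phi},
\end{align*}
with $c_2=\langle A'^2\rangle$ and $c_3=\langle A'^3\rangle$ (the third central moment of the spectral distribution of $A$ in state $\ket{\phi}$). I then take $H_1':=c_2(A_1+A_2)^2\otimes I_3=c_2(A_1^2+A_2^2+2A\otimes A)\otimes I_3$, which fulfils the compensation hypothesis $(H_1')_{--}=(H_2)_{-+}H_0^{-1}(H_2)_{+-}$ and decomposes as $1$-local terms plus a rescaled copy of $H$ on qudits $1$ and $2$.

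To ensure $c_3\neq 0$, pick two distinct eigenvalues $\lambda_1\neq\lambda_2$ of $A$ and set $\ket{\phi}=\sqrt{p}\,\ket{\lambda_1}+\sqrt{1-p}\,\ket{\lambda_2}$ with $p\in(0,1)\setminus\{1/2\}$; a short calculation yields $c_3=p(1-p)(1-2p)(\lambda_1-\lambda_2)^3\neq 0$. Lemma~\ref{lem:thirdorder} then produces a simulator whose effective interaction is $c_3(A_1+A_2)^3$, whose $2$-local part equals $3c_3(A\otimes A^2+A^2\otimes A)$. Writing $\tilde A:=A-\tfrac{\tr A}{d}I$ and $\tilde B:=A^2-\tfrac{\tr A^2}{d}I$, linear independence of $\{I,A,A^2\}$ gives linear independence of $\tilde A,\tilde B$, so expanding $\tilde A\otimes\tilde B+\tilde B\otimes\tilde A$ in any Hermitian basis produces a coefficient matrix of the form $\alpha\beta^T+\beta\alpha^T$ of rank exactly $2$. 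Thus the simulated interaction is diagonal with $2$-local rank $2$, Lemma~\ref{lem:diagonal} delivers its LA-universality, and composing simulations gives LA-universality of $H$. The main technical point to check is $c_3\neq 0$: the symmetric choice $p=1/2$ forces $c_3=0$, so a slightly asymmetric two-level superposition on the mediator is needed.
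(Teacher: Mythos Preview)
Your proof is correct and follows essentially the same route as the paper: a third-order mediator gadget on one ancilla qudit produces the effective interaction $(A_1+A_2)^3$ (times a nonzero constant), whose $2$-local part $A\otimes A^2+A^2\otimes A$ is diagonal of $2$-local rank $2$ by the minimal-polynomial argument, whence Lemma~\ref{lem:diagonal} applies. The only cosmetic difference is how the nonvanishing of the third-order coefficient is arranged: the paper first shifts the spectrum of $A$ so that $\bra\psi A\ket\psi=0$ and $\bra\psi A^3\ket\psi>0$ for a specific two-level state, whereas you keep $A$ fixed and instead subtract $\langle A\rangle$ on the mediator and pick an asymmetric superposition $p\neq 1/2$ to force the third central moment to be nonzero.
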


\begin{proof}
We will use a third order mediator qudit perturbation involving three qudits labelled $1,2,3$, where $3$ will be a mediator qudit. We work in the eigenbasis of $A$, so that $A=\sum_i \lambda_i \proj{i}$ . By the addition of 1-local terms of the form $\mu A\otimes I +\mu I \otimes A+\mu^2 I\otimes I$, it is possible to shift the spectrum of $A$ by a constant $\mu$. Since $A$ has three distinct eigenvalues, we may therefore assume wlog (relabelling eigenvectors if necessary) that $A$ has eigenvalues $\lambda_0<0$ and $\lambda_1>0$ such that $\lambda_0 + \lambda_1 > 0$.

  Let $H_2= A_1A_3+A_2 A_3$ and let $H_0=I-\proj{\psi}$ act only on the mediator qudit $3$, where $\ket{\psi}=\sqrt{\lambda_1}\ket{0}+\sqrt{-\lambda_0}\ket{1}$ . Note that $\ket{\psi}$ has been chosen so that
  \begin{equation}
    \label{eq:expAzero}
  \bra{\psi}A\ket{\psi}=0,\;\;\;\; \bra{\psi}A^2\ket{\psi}>0,\;\;\;\; \bra{\psi}A^3\ket{\psi}> 0,
  \end{equation}
  which implies that $(H_2)_{--}=(A_1+A_2)\bra{\psi}A\ket{\psi}\otimes \proj{\psi}=0$.
 
Let $H_1'=\bra{\psi}A^2\ket{\psi}(2A_1A_2 +A_1^2+A_2^2)$ so that
 \[    (H_2)_{-+} H_0^{-1} (H_2)_{+-}= (A_1+A_2)\bra{\psi}A^2\ket{\psi}(A_1+A_2)\otimes \proj{\psi}=(H_1')_{--} \]
 as required, where we have used the fact that $H_0^{-1}A\ket{\psi}=A\ket{\psi}$ (since $A\ket{\psi}$ and $\ket{\psi}$ are orthogonal as shown in (\ref{eq:expAzero})).
 
Finally we calculate the third order term:
 \[(H_2)_{-+} H_0^{-1} (H_2)_{++} H_0^{-1} (H_2)_{+-}=(A_1+A_2)^3 \bra{\psi}A^3\ket{\psi} \otimes \proj{\psi}\]
 so by Lemma~\ref{lem:thirdorder} we can set $H_1=-(A_1^3+A_2^3)\bra{\psi}A^3\ket{\psi}$ and simulate an interaction of the form $A\otimes A^2 +A^2 \otimes A$ which is universal by Lemma~\ref{lem:diagonal} unless $A^2=\lambda A+ \mu I$ for some $\lambda, \mu \in \R$. But if A has three distinct eigenvalues, then it cannot be a root of any polynomial of degree 2. 
\end{proof}

\begin{lem}
\label{lem:AAuniversal}
 Let $H=A\otimes A$ be a 2-qudit interaction such that $A$ is \emph{not} of the form $a\proj{\psi}+bI$ for any $\ket{\psi} \in \C^d$, and $a,b \in \R$. Then $H$ is LA-universal.
\end{lem}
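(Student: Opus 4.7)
The plan is to reduce the claim to the previously-established Lemma~\ref{lem:3evalues} by using free 1-local terms to compress each qudit into a carefully chosen 3-dimensional subspace. First I do a case analysis on the spectrum of $A$: if $A$ has three or more distinct eigenvalues then Lemma~\ref{lem:3evalues} applies at once. Otherwise $A$ has at most two distinct eigenvalues; the case $A = cI$ is excluded by hypothesis (take $a=0$ in $a\proj{\psi}+bI$), so $A$ has exactly two distinct eigenvalues $\lambda_1 \neq \lambda_2$. If either eigenspace were one-dimensional then $A = \lambda_1 \Pi_1 + \lambda_2 \Pi_2$ would be of the excluded form $(\lambda_1-\lambda_2)\proj{\psi} + \lambda_2 I$, so both eigenspaces must have dimension at least two, and in particular $d \ge 4$.

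The key geometric step is to exhibit a 3-dimensional subspace $S \subseteq \C^d$ whose compression $A' := P_S A P_S$ has three distinct eigenvalues. Pick orthonormal $\ket{v_1}, \ket{v_2}$ in the $\lambda_1$-eigenspace and orthonormal $\ket{w_1}, \ket{w_2}$ in the $\lambda_2$-eigenspace, and set
\[ S := \linspan\{\ket{v_1},\, \ket{w_1},\, \ket{u}\}, \qquad \ket{u} := \tfrac{1}{\sqrt{2}}(\ket{v_2} + \ket{w_2}). \]
These three vectors are orthonormal because eigenspaces corresponding to distinct eigenvalues of a Hermitian operator are orthogonal. A direct computation gives $A\ket{v_1} = \lambda_1 \ket{v_1}$, $A\ket{w_1} = \lambda_2 \ket{w_1}$, and $A\ket{u} = \tfrac{\lambda_1+\lambda_2}{2}\ket{u} + \tfrac{\lambda_1-\lambda_2}{2} \cdot \tfrac{1}{\sqrt{2}}(\ket{v_2} - \ket{w_2})$, with the second summand lying in $S^\perp$. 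Hence $A'$ is diagonal in the basis $\{\ket{v_1}, \ket{w_1}, \ket{u}\}$ with eigenvalues $\lambda_1, \lambda_2, (\lambda_1+\lambda_2)/2$, which are three distinct values.

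With $S$ in hand, I invoke a first-order gadget (Lemma~\ref{lem:firstorder}) whose heavy Hamiltonian $H_0 = \sum_i (I - P_S)_i$ is 1-local and has $S^{\otimes n}$ as its ground space, while $H_1$ consists of the desired sum of $A \otimes A$ terms together with any chosen 1-local terms. The effective interaction on each pair of encoded qutrits is $A' \otimes A'$; and because every Hermitian operator on $S$ arises as the compression of some Hermitian operator on $\C^d$, arbitrary qutrit 1-local terms remain available. Now apply Lemma~\ref{lem:3evalues} in local dimension $3$ to $A' \otimes A'$, which is permissible precisely because $A'$ has three distinct eigenvalues. Composing the two simulations gives LA-universality of $H = A \otimes A$.

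The main obstacle I expect is verifying that a suitable compression $S$ exists: a generic 3-dimensional subspace need not yield three distinct eigenvalues (e.g.\ any $S$ contained in a single eigenspace gives a single eigenvalue, and a generic $S$ intersected with the eigenspaces can give only two), and the explicit construction above exploits in an essential way the hypothesis that both eigenspaces have dimension at least two. The remaining ingredients — the perturbative estimates behind Lemma~\ref{lem:firstorder} and the composition of simulations — are routine within the framework already set up in the paper.
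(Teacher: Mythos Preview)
Your proposal is correct and follows essentially the same approach as the paper: reduce to Lemma~\ref{lem:3evalues} by projecting onto a 3-dimensional subspace spanned by one eigenvector from each eigenspace together with an equal superposition of a second pair, yielding a compressed operator with eigenvalues $\lambda_1,\lambda_2,(\lambda_1+\lambda_2)/2$. The paper's proof is identical up to notation, so there is nothing substantive to compare.
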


\begin{proof}
By assumption $A$ is not proportional to the identity so has at least two distinct eigenvalues. If $A$ has three distinct eigenvalues then $H$ is LA-universal by Lemma~\ref{lem:3evalues}. It remains to consider the case where $A$ has exactly two eigenvalues $\lambda_1 \neq \lambda_2$.

Since $A\neq  a\proj{\psi}+bI$, there must be at least two orthonormal eigenvectors for each eigenvalue of $A$. Let $\ket{\psi_i}$ and $\ket{\phi_i}$ be orthonormal eigenvectors with eigenvalue $\lambda_i$ for $i \in \{1,2\}$. Let $P$ be the projector onto span$\{\ket{\psi_1},\ket{\psi_2},\frac{\ket{\phi_1}+\ket{\phi_2}}{\sqrt{2}}\}$, and let $H_0=I-P$. Then by Lemma~\ref{lem:firstorder}, we can simulate interactions of the form $B \otimes B$ where 
\[B=PAP=\lambda_1 \proj{\psi_1} +\lambda_2 \proj{\psi_2} + \frac{\lambda_1+\lambda_2}{2}\left(\frac{\ket{\phi_1}+\ket{\phi_2}}{\sqrt{2}}\right)\left(\frac{\bra{\phi_1}+\bra{\phi_2}}{\sqrt{2}}\right)\]
which has three distinct eigenvalues $\lambda_1, \lambda_2, \frac{\lambda_1+\lambda_2}{2}$, so is LA-universal by Lemma~\ref{lem:3evalues}.
\end{proof}

We next show that the one remaining case that is not covered by Lemma \ref{lem:AAuniversal} corresponds to stoquastic Hamiltonians, so is unlikely to be universal.

\begin{lem}
\label{lem:stoqconj}
Let $H=A\otimes A$ be a 2-qudit interaction where $A$ is of the form $A=a \proj{\psi}$ for some $\ket{\psi} \in \C^d$ and 
$a \neq 0$. Then any Hamiltonian of the form $\sum_i M^{(i)} + \sum_{j \neq k} \alpha_{jk} H_{jk}$ -- where $M^{(i)}$ are arbitrary single qudit operators acting only on qudit $i$, $H_{jk}$ refers to the interaction $H$ applied to qudits $j$ and $k$, and $\alpha_{jk} \in \R$ -- is equivalent to a stoquastic Hamiltonian under conjugation by a local unitary operation.
\end{lem}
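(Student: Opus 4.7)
The plan is to build a local unitary of the form $\bigotimes_i V_i$---with each $V_i$ chosen independently on each qudit---that simultaneously diagonalises every two-qudit interaction and puts every one-local term into stoquastic form. Since $A = a\proj{\psi}$ is rank-one, $V_i A V_i^{\dag} \otimes V_j A V_j^{\dag} = a^2 \proj{V_i\psi} \otimes \proj{V_j\psi}$, so insisting that $V_i\ket{\psi} = \ket{0}$ for every $i$ collapses every two-local term to $a^2 \proj{00}_{jk}$, which is diagonal and therefore contributes nothing to the off-diagonal entries of the full Hamiltonian for any choice of coefficients $\alpha_{jk}$. What remains is to use the freedom still available in $V_i$ (a $U(d-1)$ stabiliser of $\ket{\psi}$) to make $V_i M^{(i)} V_i^{\dag}$ stoquastic for each $i$ separately.

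The crux of the argument is therefore the following claim, which I would prove first: for any Hermitian $M \in L(\C^d)$ and any unit vector $\ket{\psi} \in \C^d$, there exists $V \in U(d)$ with $V\ket{\psi} = \ket{0}$ such that $V M V^{\dag}$ has real nonpositive off-diagonal entries. To prove it, run the Lanczos tridiagonalisation of $M$ with seed vector $\ket{f_0} := \ket{\psi}$: inductively set $\alpha_i := \bra{f_i} M \ket{f_i} \in \R$, $\ket{v_{i+1}} := M\ket{f_i} - \alpha_i \ket{f_i} - \beta_{i-1} \ket{f_{i-1}}$, $\beta_i := \|\ket{v_{i+1}}\| \ge 0$ and $\ket{f_{i+1}} := \ket{v_{i+1}}/\beta_i$, continuing with an arbitrary unit vector in the orthogonal complement whenever some $\beta_i$ vanishes. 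In the resulting orthonormal basis $M$ is (block-)tridiagonal, real, with nonnegative off-diagonals $\beta_i$. Passing to $\ket{f_i'} := (-1)^i \ket{f_i}$ then leaves the diagonal entries unchanged, flips each off-diagonal to $-\beta_i \le 0$ (zero entries stay zero), and crucially keeps $\ket{f_0'} = \ket{\psi}$ because $(-1)^0 = 1$. Letting $V$ send $\ket{f_i'} \mapsto \ket{i}$ then gives the desired unitary.

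Applying this claim to each $M^{(i)}$ produces the $V_i$, and conjugating the full Hamiltonian by $\bigotimes_i V_i$ then yields a stoquastic Hamiltonian: every two-local piece becomes $\alpha_{jk} a^2 \proj{00}_{jk}$, while each transformed one-local piece is tridiagonal with nonpositive off-diagonals. I expect the main obstacle to be reconciling the two competing demands on $V_i$---that it map $\ket{\psi}$ to a computational basis state (to preserve diagonality of the two-local terms) and that it put the arbitrary Hermitian $M^{(i)}$ into a form with nonpositive off-diagonals (to handle the one-local terms). The Lanczos-plus-sign-flip procedure is tailor-made for exactly this, since using $\ket{\psi}$ as the seed of the recursion automatically makes it the first basis vector, and the sign flip only alters basis vectors of positive index, leaving $\ket{\psi}$ itself untouched.
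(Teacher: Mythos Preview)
Your proof is correct and shares the paper's overall plan: send $\ket{\psi}$ to $\ket{0}$ on every site so that all two-local pieces become diagonal, then exploit the residual $U(d-1)$ freedom to make each $M^{(i)}$ stoquastic independently. The difference lies in how that last step is executed. The paper first diagonalises $M^{(i)}$ on $\operatorname{span}\{\ket{1},\dots,\ket{d-1}\}$ via a block unitary $U_1=\proj{0}+\widetilde{U}$, yielding an ``arrowhead'' matrix whose only off-diagonal entries sit in the first row and column; it then kills the phases of those entries with a diagonal unitary $U_2=\proj{0}-\sum_j e^{i\theta_j}\proj{j}$. You instead run Lanczos with seed $\ket{\psi}$ to obtain a real tridiagonal form with nonnegative sub/super-diagonal entries $\beta_i$, and then apply the alternating sign flip $\ket{f_i}\mapsto(-1)^i\ket{f_i}$ to make them nonpositive. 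Both constructions are short; yours is arguably more uniform (the sign flip does not depend on the matrix), while the paper's arrowhead form makes the ``fix $\ket{0}$'' constraint visually transparent since $U_1,U_2$ are manifestly block-diagonal with $\proj{0}$ as a $1\times 1$ block.
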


\begin{proof}
By conjugating $H$ by a local unitary $U\otimes U$ and rescaling, we may assume without loss of generality that $A=\proj{0}$. For each qudit, we demonstrate the existence of a local unitary acting on that qudit which leaves $\ket{0}$ unchanged, but rotates the 1-local term $M^{(i)}$ acting on that qudit into a stoquastic term (i.e.\ non-positive off-diagonal entries). First we conjugate by a unitary $U_1=\proj{0}+\widetilde{U}$ where $\widetilde{U}$ acts only on $S=\text{span}\{\ket{1},\dots \ket{d-1}\}$, such that $U_1M^{(i)}U_1^{\dagger}$ is diagonal on the space $S$; that is,
\[U_1M^{(i)}U_1^{\dagger}=\sum_{j=0}^{d-1}{w_j}\proj{j} +\sum_{j-1}^{d-1}a_j\ket{0}\bra{j}+a_j^*\ket{j}\bra{0}.\] 
Write $a_j=|a_j|e^{i\theta_j}$ and define $U_2=\proj{0}+\sum_{j=1}^{d-1}-e^{i\theta_j}\proj{j}$
so that \[U_2U_1M^{(i)}U_1^{\dagger}U_2^{\dagger}=\sum_{j=0}^{d-1}w_j\proj{j} +\sum_{j=1}^{d-1}- |a_j| \bigl(\ket{0}\bra{j}+\ket{j}\bra{0}\bigr).\]
This operator is clearly stoquastic.
\end{proof}

% ------------------------------------------------------------------------------

\subsection{Interactions not necessarily diagonalisable by local unitaries}

Having dealt with the diagonal case, we now need to consider other types of interactions. The first is interactions of the form $A\otimes A+B\otimes B$.

\begin{lem}
\label{lem:AA+BB}
Let $A$ and $B$ be single-qudit Hermitian operators such that the operators $A'=A-\tr(A)I/d$ and $B'=B-\tr(B)I/d$ are linearly independent, and write $H = A\otimes A+B\otimes B$. Then $H$ is LA-universal.
\end{lem}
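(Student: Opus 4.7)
The plan is to use the free 1-local terms both to reduce $A,B$ to traceless operators and to apply a first-order projection gadget (Lemma \ref{lem:firstorder}) that yields a 2-qubit interaction with rank-$2$ Pauli coefficient matrix, which is already known to be LA-universal. A direct expansion shows $A\otimes A+B\otimes B=A'\otimes A'+B'\otimes B'$ up to 1-local and constant terms, so from here on I assume $A,B$ are themselves traceless and linearly independent.

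If $[A,B]=0$, I would simultaneously diagonalise $A$ and $B$ by a common local unitary, yielding a diagonal 2-qudit interaction whose 2-local rank is $2$ (because $A,B$ are linearly independent), and LA-universality follows from Lemma \ref{lem:diagonal}. If instead $[A,B]\neq 0$, I claim there exist eigenvectors $\ket{u_1},\ket{u_2}$ of $A$ with distinct eigenvalues and $\bra{u_1}B\ket{u_2}\neq 0$: otherwise $B$ would preserve every eigenspace of $A$, forcing $[A,B]=0$. I then apply Lemma \ref{lem:firstorder} with heavy 1-local terms pinning each qudit's ground space to $S=\linspan\{\ket{u_1},\ket{u_2}\}$. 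The effective 2-qubit interaction is $\tilde A\otimes\tilde A+\tilde B\otimes\tilde B$, where $\tilde A=PAP^{\dagger}$ is a diagonal $2\times 2$ matrix with distinct diagonal entries and $\tilde B=PBP^{\dagger}$ has a nonzero off-diagonal entry; hence the traceless parts of $\tilde A$ and $\tilde B$ are linearly independent in the space of $2\times 2$ traceless Hermitian matrices. Letting $\vec a,\vec b\in\R^3$ denote the $(\sigma^x,\sigma^y,\sigma^z)$-coefficients of these traceless parts, the $3\times 3$ Pauli coefficient matrix of the 2-local part is $\vec a\vec a^{\,T}+\vec b\vec b^{\,T}$, which has rank $2$. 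Theorem \ref{thm:qubits} (or the qubit specialisation of Lemma \ref{lem:diagonal}) then yields LA-universality of this qubit interaction, and composing the two simulations gives LA-universality of $H$.

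The main obstacle is the choice of the 2D subspace in the non-commuting case: I need a subspace on which $A$ has two distinct diagonal entries \emph{and} $B$ has a nonzero off-diagonal entry, so that the projected operators span independent directions in the 2-qubit Pauli space. The key observation is that commutativity of $A$ and $B$ is precisely the obstruction to such a choice, so $[A,B]\neq 0$ supplies exactly the cross-eigenspace matrix element of $B$ that can be paired with two distinct eigenvectors of $A$. A secondary bookkeeping point is that arbitrary 1-local qubit terms on the encoded subspace are realised by the projection of arbitrary 1-local qudit terms, so LA-universality at the qubit level pulls back to LA-universality of $H$.
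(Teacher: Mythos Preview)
Your proposal is correct and follows essentially the same approach as the paper: split into the commuting case (reduce to Lemma~\ref{lem:diagonal}) and the non-commuting case (project onto a 2-dimensional subspace spanned by eigenvectors of $A$ with distinct eigenvalues across which $B$ has a nonzero matrix element, then invoke the rank-$\ge 2$ qubit classification). The paper carries out the same construction, computing the $3\times 3$ Pauli matrix $M$ explicitly rather than arguing abstractly via $\vec a\vec a^{\,T}+\vec b\vec b^{\,T}$; your parenthetical reference to Lemma~\ref{lem:diagonal} for the qubit step is slightly off (that lemma is for diagonal interactions), but Theorem~\ref{thm:qubits} is the right citation.
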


\begin{proof}
If $A$ and $B$ commute, then $A$ and $B$ are simultaneously diagonalisable by the same unitary $U$. Conjugating $H$ by $U \otimes U$, the result follows from Lemma~\ref{lem:diagonal}. So suppose $A$ and $B$ do not commute. Then there must exist an eigenstate $\ket{\psi}$ of $A$ with eigenvalue $\lambda$ such that $AB\ket{\psi}\neq BA\ket{\psi}=\lambda B\ket{\psi}$. So $B\ket{\psi}$ is not in the eigenspace of $A$ corresponding to eigenvalue $\lambda$, and there must exist an orthogonal eigenstate $\ket{\phi}$ of $A$ with distinct eigenvalue $\mu \neq \lambda$, such that $\bra{\phi}B\ket{\psi} \neq 0$. By multiplying $\ket{\phi}$ by a phase $e^{i\theta}$, we may assume $\bra{\phi}B\ket{\psi}$ is real. 

We will apply a heavy term $H_0=I-\ket{\psi}\bra{\psi}-\ket{\phi}\bra{\phi}$ with ground space $S=\text{span}\{\ket{\psi},\ket{\phi}\}$ to each of the qudits on which $H$ acts. Then we can use first-order perturbation theory (Lemma~\ref{lem:firstorder}) to produce a logical 2-qubit interaction by projecting $H$ onto $S$. Let $P$ be the projector onto $S$, and identify $\ket{0_L}=\ket{\psi}$ and $\ket{1_L}=\ket{\phi}$ so that
\begin{align*}PAP&=\lambda\proj{\psi}+\mu\proj{\phi} = \frac{\lambda-\mu}{2}Z_L+\frac{\lambda+\mu}{2}I_L,\\
PBP&=aZ_L+\bra{\phi}B\ket{\psi}X_L+\frac{\bra{\psi}B\ket{\psi}+\bra{\phi}B\ket{\phi}}{2}I_L,
\end{align*}
where $a=(\bra{\psi}B\ket{\psi}-\bra{\phi}B\ket{\phi})/2$. So $P^{\otimes 2}HP^{\otimes 2} = \sum M_{ij} \sigma^i \otimes \sigma^j+\text{1-local terms}$, where $M$ is the matrix defined by
\[M=\left(\begin{array}{ccc}
\bra{\phi}B\ket{\psi}^2 & 0 & a\bra{\phi}B\ket{\psi} \\
0 & 0 & 0 \\
a\bra{\phi}B\ket{\psi} & 0 & a^2+(\lambda-\mu)^2/4 \\ 
\end{array}
\right),\]
which has rank 2 whenever $\bra{\phi}B\ket{\psi}(\lambda-\mu)\neq 0$. As shown in~\cite{Cubitt-Montanaro,cubitt17}, any such interaction is universal. Hence $H$ is LA-universal.
\end{proof}

Next we use Lemma~\ref{lem:AA+BB} to deal with almost all other types of interactions.

\begin{lem}
\label{lem:rank2ormore}
Let $H$ be a 2-qudit interaction with 2-local rank $\geqslant 2$. Then $H$ is LA-universal.
\end{lem}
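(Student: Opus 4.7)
My plan is to reduce the 2-local rank $\ge 2$ hypothesis to one of the previously handled cases, in particular Lemma~\ref{lem:AA+BB}, by compressing the two qudits down to qubits via a first-order projection gadget. Using (\ref{eq:2localpart}) I would assume throughout that $H$ has no 1-local part, so $H = \sum_{a,b} M_{ab} T^a \otimes T^b$ with $\rank(M) \ge 2$.

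The first step is to split $M = M_S + M_A$ into its symmetric and antisymmetric parts. By applying $H$ with both qudit orderings and arbitrary real weights, both $H_S$ and $H_A$ can be simulated independently. Since the rank of a real skew-symmetric matrix is always even, $\rank(M) \ge 2$ forces at least one of $\rank(M_S) \ge 2$ or $\rank(M_A) \ge 2$. In the symmetric case, diagonalising $M_S$ gives $H_S = \sum_{i=1}^r \epsilon_i A_i \otimes A_i$ with Hilbert--Schmidt-orthogonal $A_i$, $\epsilon_i \in \{\pm 1\}$, and $r \ge 2$; the antisymmetric case yields an analogous decomposition with rank-2 blocks $C_i \otimes D_i - D_i \otimes C_i$ via the Youla normal form.

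The second step is to apply a heavy 1-local projection onto a carefully chosen 2-dimensional subspace of each qudit via Lemma~\ref{lem:firstorder}, described by a $2 \times d$ matrix $P$. The effective 2-qubit interaction is $(P \otimes P) H (P^\dagger \otimes P^\dagger)$; expanding $P A_i P^\dagger = \sum_k \alpha_{i,k} \sigma^k$, it has coefficient matrix $N_{k\ell} = \sum_i \epsilon_i \alpha_{i,k} \alpha_{i,\ell}$. For generic $P$, the linearly independent operators $A_1, A_2$ project to linearly independent $2 \times 2$ Hermitian operators with non-proportional traceless parts, which suffices for $N$ restricted to $k, \ell \in \{1,2,3\}$ to have rank $\ge 2$. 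Once this is achieved, LA-universality of the effective 2-qubit interaction follows from Theorem~\ref{thm:qubits}, or equivalently by invoking Lemma~\ref{lem:AA+BB} (with $d = 2$) after a further spectral step.

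The main obstacle is to exhibit an explicit $P$ that guarantees rank preservation for every $H$ satisfying the hypothesis. I would construct $P$ in direct analogy with the proof of Lemma~\ref{lem:AA+BB}: pick an orthonormal pair $\ket{\psi}, \ket{\phi}$ from the spectral data of two of the $A_i$ (or of the $C_i, D_i$) so that the projected operators are not proportional. The contributions of $A_i$ for $i \ge 3$ can only increase the rank of $N$, but one has to rule out accidental cancellations in the sum defining $N_{k\ell}$; this is where a small amount of case analysis on the signs $\epsilon_i$ may enter.
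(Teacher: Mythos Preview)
Your approach differs from the paper's, and as written it has a genuine gap. The claim that ``the contributions of $A_i$ for $i\ge 3$ can only increase the rank of $N$'' is false: with mixed signs $\epsilon_i$, adding further rank-one terms can lower the rank (e.g.\ $v_1v_1^T+v_2v_2^T-v_3v_3^T$ has rank~$1$ when $v_3=v_2$), and Hilbert--Schmidt orthogonality of $A_2,A_3$ in $\C^d$ does not prevent $PA_2P^\dagger$ and $PA_3P^\dagger$ from having proportional traceless parts after compression to qubits. More seriously, your appeal to a ``generic $P$'' and to the argument of Lemma~\ref{lem:AA+BB} does not cover the case where the $A_i$ all commute: if $P$ projects onto two common eigenvectors, every $PA_iP^\dagger$ has traceless part along $\sigma^z$, and exhibiting a $P$ that escapes this is precisely the nontrivial explicit construction carried out in Lemma~\ref{lem:diagonal}, which your sketch does not invoke. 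The proof of Lemma~\ref{lem:AA+BB} uses noncommutativity of $A,B$ in an essential way, so ``direct analogy'' with it is not available here.

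The paper sidesteps both obstacles by not compressing the two data qudits at all. It first projects a single mediator qudit onto a $2$-dimensional subspace $S=\linspan\{\ket{\psi},\ket{\phi}\}$ via Lemma~\ref{lem:firstorder}, obtaining $F_{ij}=\sum_{a,b}M_{ab}T_i^a\otimes T_j^b|_S$; then a second-order gadget (Lemma~\ref{lem:secondorder}) with $H_0=\proj{\phi}$ and $H_2=F_{13}+F_{23}$ produces an effective two-\emph{qudit} interaction $-\sum_{a,c}(R_{ac}+R_{ca})T_1^aT_2^c$, where $R_{ac}=\bra{\psi}K^a\proj{\phi}K^c\ket{\psi}$ and $K^a=\sum_b M_{ab}T^b$. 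By construction $R$ has rank~$1$, so $R+R^T$ has rank at most~$2$; when it equals~$2$ the output is exactly of the form $A\otimes A+B\otimes B$ and Lemma~\ref{lem:AA+BB} applies. The paper then shows by contradiction that $\rank(R+R^T)=2$ for some choice of $\ket{\psi},\ket{\phi}$: assuming $R=R^T$ for all such choices forces $[K^a,K^c]=0$ and then $K^a\propto K^c$, contradicting $\rank(M)\ge 2$. Manufacturing a controlled rank-one perturbative term and symmetrising it is what replaces the uncontrolled cancellation analysis your direct-projection route would require.
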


\begin{proof}
Let $H'$ be the 2-local part of $H$, given by $H'=\sum_{a,b}M_{ab}T^a \otimes T^b$ where $\rank(M)\geqslant 2$ and $\{T^a\}_a$ is a basis for the space of of traceless Hermitian matrices. Let $S$ be a two-dimensional subspace of $\C^d$ spanned by orthonormal vectors $\ket{\psi}$ and $\ket{\phi}$ to be chosen later. Let $P$ be the projector onto $S$ and let $H_0=I-P$ act on qudit $j$. Then by Lemma~\ref{lem:firstorder}, for any $i$ we can simulate interactions of the form $F_{ij}=\sum_{a,b} M_{ab} T_i^a \otimes T_j^b|_S$.

Then, using another round of (second-order) perturbation theory, we choose $H_0=\proj{\phi}$ and $H_2=F_{13}+F_{23}$. The second-order term is given by 
\begin{align*}
-(H_2)_{-+} H_0^{-1} (H_2)_{+-}&=-\sum_{a,b,c,d}M_{ab} (T_1^a+T_2^a) \bra{\psi}T_3^b\proj{\phi}T_3^{d}\ket{\psi} M_{cd}(T_1^c+T_2^c)\otimes \proj{\psi}\\
&=-\left[\sum_{a,c}(R_{ac}+R_{ca})T_1^a T_2^c + \text{1-local terms} \right] \otimes \proj{\psi} 
\end{align*}
where $R_{ac}=\sum_{b,d} M_{ab} \bra{\psi}T_3^b\proj{\phi}T_3^{d}\ket{\psi} M_{cd}= \bra{\psi}K^a\proj{\phi}K^c\ket{\psi}$ where $K^a=\sum_b M_{ab}T_3^b$. Note that $R$ is positive semi-definite and rank 1. Since $R+R^T$ is symmetric, if we can choose $\ket{\psi}$ and $\ket{\phi}$ such that $\rank(R+R^T)=2$, then the simulated interaction must be of the form $-(A\otimes A+B \otimes B)$ and so is LA-universal by Lemma~\ref{lem:AA+BB}. 

Suppose for a contradiction that $\rank(R+R^T)\neq 2$ for any choice of  $\ket{\psi}$ and $\ket{\phi}$. 
Since $\rank(R)=1=\rank(R^T)$, this can only happen if $R=R^T$.
That is, for any $a$ and $c$ and any choice of orthogonal normalised states $\ket{\psi}$ and $\ket{\phi}$, 
\begin{equation}
\label{eq:Rsymmetric}
\bra{\psi}K^a\proj{\phi}K^c \ket{\psi}=\bra{\psi}K^c\proj{\phi}K^a \ket{\psi}.
\end{equation}
By the definition of $K^a$ and the fact that $M$ has rank at least 2, there must be a choice of $a$ and $c$ such that $K^a$ and $K^c$ are linearly independent. Fix this choice of $a$ and $c$ for the remainder of the proof. The contradiction we will show is that equation $(\ref{eq:Rsymmetric})$ implies that $K^a$ and $K^c$ are not linearly independent.

Fix $\ket{\psi}$ and extend it to an orthonormal basis $B_{\psi}=\{\ket{\psi},\ket{e_1},\dots,\ket{e_{d-1}}\}$. Then taking $\ket{\phi} = \ket{e_i}$ for any $i$, equation (\ref{eq:Rsymmetric}) holds. Taking the sum over all $i$ we have $\bra{\psi}K^a K^c\ket{\psi}=\bra{\psi}K^c K^a\ket{\psi}$. Since $\ket{\psi}$ was arbitrary, we conclude that $[K^a,K^c]=0$. So $K^a$ and $K^c$ are simultaneously diagonalisable. Let $\ket{\Phi}=\frac{1}{\sqrt{d}}\sum_i \ket{i}$, where $\{\ket{i}\}$ is an eigenbasis for both $K^a$ and $K^c$. We can decompose an arbitrary state $\ket{\psi}$ as $\ket{\psi} = \ket{\psi'}+b\ket{\Phi}$ where $\ket{\psi'}$ is an unnormalised vector orthogonal to $\ket{\Phi}$. Then 
\[\bra{\Phi}K^a\ket{\psi}=\bra{\Phi}K^a\ket{\psi'}+b\bra{\Phi}K^a\ket{\Phi}=\bra{\Phi}K^a\ket{\psi'}+b\frac{1}{d}\tr(K^a)=\bra{\Phi}K^a\ket{\psi'}\]
and similarly for $K^c$. So, setting $\ket{\phi}=\ket{\Phi}$, as $\ket{\psi'}$ is orthogonal to $\ket{\Phi}$ equation (\ref{eq:Rsymmetric}) holds for any choice of $\ket{\psi}$, and hence $K^a\proj{\Phi}K^c=K^c\proj{\Phi}K^a$.
Multiplying on the left by $\bra{i}$ and on the right by $\ket{j}$ this gives $\lambda_i\mu_j=\mu_i\lambda_j$ where $\lambda_i$ and $\mu_i$ are the eigenvalues corresponding to $\ket{i}$ of $K^a$ and $K^c$ respectively. This implies there exists $C\in \R$ such that $\lambda_i=C\mu_i$ for all $i$, and hence that $K^a=C K^c$ which is the contradiction we desired.
\end{proof}

We have now proven all the ingredients we need to show the following theorem, which is the 2-local, single-interaction special case of Theorem \ref{thm:lamainklocal}:

\begin{thm}
\label{thm:lamain}
Let $H$ be a 2-qudit interaction which is not 1-local. 
If, up to addition of 1-local terms, $H = \alpha \proj{\psi}^{\otimes 2}$ for some state $\ket{\psi}$ and some $\alpha \neq 0$, then $H$ is LA-stoquastic-universal.
Otherwise $H$ is LA-universal.
\end{thm}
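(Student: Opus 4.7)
The plan is to reduce the theorem to Lemmas~\ref{lem:diagonal}, \ref{lem:AAuniversal}, \ref{lem:AA+BB}, and \ref{lem:rank2ormore} by a short case analysis on the 2-local rank. The first move is to use the free 1-local terms to replace $H$ by its 2-local part $H'$ as defined in equation~(\ref{eq:2localpart}); since $H$ is not 1-local, $H' \neq 0$.

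If the 2-local rank of $H'$ is at least 2, Lemma~\ref{lem:rank2ormore} immediately yields LA-universality. Otherwise $H'$ has 2-local rank exactly 1 and may be written $H' = A \otimes B$ for some nonzero Hermitian operators $A, B$; these are necessarily traceless, since $\tr_1(H') = \tr(A)B = 0$ and $\tr_2(H') = \tr(B)A = 0$ by construction of the 2-local part.

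Next I split on whether $A$ and $B$ are linearly independent. If they are, then the freedom to choose signs for the coefficients $\alpha_i$ in~(\ref{eq:generalform}) lets me add $H'$ applied in both orientations to simulate $A \otimes B + B \otimes A = C \otimes C - D \otimes D$, where $C = (A+B)/\sqrt{2}$ and $D = (A-B)/\sqrt{2}$ are traceless and linearly independent, so Lemma~\ref{lem:AA+BB} applies and gives LA-universality. If instead $A$ and $B$ are linearly dependent, then $H' = \alpha\, A \otimes A$ for some $\alpha \neq 0$, and Lemma~\ref{lem:AAuniversal} gives LA-universality unless $A = a\proj{\psi} + bI$. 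Combined with $\tr(A)=0$, this forces $A = a(\proj{\psi} - I/d)$ for some $a \neq 0$; expanding the tensor product shows that $H = \alpha a^2 \proj{\psi}^{\otimes 2}$ up to 1-local terms, matching exactly the special form in the theorem. In this final case, conjugating by a local unitary $U^{\otimes n}$ with $UAU^\dagger$ diagonal turns $H$ into a diagonal 2-qudit interaction of 2-local rank $1$, so Lemma~\ref{lem:diagonal} applies and delivers LA-stoquastic-universality.

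The substantive step is the rank-$1$ case: once symmetrised as $A \otimes B + B \otimes A$, a rank-$1$ interaction with linearly independent factors secretly has 2-local rank $2$ after the change of basis to $C, D$, and this is precisely what unlocks Lemma~\ref{lem:AA+BB}. The only genuinely residual case is then $H' \propto A \otimes A$, where the main bookkeeping task is to check that the obstruction in Lemma~\ref{lem:AAuniversal}, together with the tracelessness constraint, pins the remaining interaction down to $\alpha \proj{\psi}^{\otimes 2}$ modulo 1-local terms, and that LA-stoquastic-universality is inherited from Lemma~\ref{lem:diagonal} after a local unitary rotation. No new perturbative gadgets are required.
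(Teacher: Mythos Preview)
Your overall strategy matches the paper's proof almost exactly: pass to the 2-local part $H'$, split on whether its 2-local rank is $\ge 2$ or $=1$, and in the rank-$1$ case $H' = A \otimes B$ further split on whether $A$ and $B$ are linearly independent. There is, however, one small gap in the linearly-independent subcase.

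You correctly compute $A \otimes B + B \otimes A = C \otimes C - D \otimes D$ with $C = (A+B)/\sqrt{2}$, $D = (A-B)/\sqrt{2}$, and this indeed shows that the symmetrised interaction has 2-local rank $2$. But Lemma~\ref{lem:AA+BB} as stated applies only to interactions of the form $C \otimes C + D \otimes D$, not $C \otimes C - D \otimes D$; with the minus sign the coefficient matrix in the $C,D$ basis is $\operatorname{diag}(1,-1)$, which is indefinite, whereas any Hermitian $E,F$ give $E \otimes E + F \otimes F$ a positive-semidefinite coefficient matrix. So you cannot literally invoke Lemma~\ref{lem:AA+BB} here. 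The paper closes this case instead by invoking Lemma~\ref{lem:rank2ormore} directly from the rank-$2$ observation --- which your $C,D$ decomposition already establishes --- so the fix is simply to cite that lemma in place of Lemma~\ref{lem:AA+BB}. (Alternatively one can verify that the proof of Lemma~\ref{lem:AA+BB} goes through with the sign flipped, but that is extra work you did not supply.) The remainder of your argument is correct and coincides with the paper's.
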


\begin{proof}
Let $H'$ be the interaction obtained from $H$ by deleting its 1-local part. Then, by Lemma~\ref{lem:rank2ormore} $H$ is LA-universal unless $H' = A \otimes B$ for some $A$ and $B$. If $A$ and $B$ are linearly independent, then $A\otimes B+B \otimes A$ has 2-local rank 2 and so is LA-universal by Lemma~\ref{lem:rank2ormore}. Otherwise, $B = \beta A $ for some $\beta\neq 0$, so $H' = \beta A \otimes A$. Diagonalising $H$ using a local unitary $U \otimes U$ and using Lemma~\ref{lem:diagonal}, $H$ is LA-stoquastic-universal. In addition, if $A \neq a\proj{\psi}+bI$ for some $\ket{\psi} \in \C^d$, then $H$ is LA-universal by Lemma~\ref{lem:AAuniversal}.
\end{proof}

We do not expect any larger class of interactions to be LA-universal than in Theorem~\ref{thm:lamain}, as shown by Lemma~\ref{lem:stoqconj}. 

\subsection{Extension to $k$-local interactions}
In order to extend our results to interaction terms that act on more than 2 qudits, we first show how 1-local terms can be used to extract $(k-1)$-local interactions from $k$-local interactions.
\begin{lem}
\label{lem:ktok-1}
Let $H$ be a $k$-local interaction with a decomposition $H=\sum_{i=1}^l A_i \otimes B_i$ where the $A_i$ operators act on $k-1$ qudits and the $B_i$ operators are linearly independent. Then using $H$ interactions and additional 1-local terms we can simulate any interaction in $\linspan\{A_i\}_{i=1}^l$.
\end{lem}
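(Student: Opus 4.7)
The plan is to run $l$ parallel first-order mediator gadgets, one per element of a carefully chosen basis of single-qudit states, to extract any desired linear combination of the $A_i$ as an effective $(k-1)$-local interaction. Given any state $\ket{\phi}\in\C^d$, attaching an ancilla qudit with a heavy 1-local penalty whose unique ground state is $\ket{\phi}$ and coupling it to qudits $1,\dots,k-1$ via a weighted copy of $H$ produces, by Lemma \ref{lem:firstorder}, the effective interaction $\sum_i \bra{\phi}B_i\ket{\phi}\,A_i$ on qudits $1,\dots,k-1$.

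The key observation is that, because the Hermitian operators $B_i$ are linearly independent, the set of vectors $\bigl\{\bigl(\bra{\phi}B_1\ket{\phi},\dots,\bra{\phi}B_l\ket{\phi}\bigr):\ket{\phi}\in\C^d\bigr\}\subset\R^l$ spans $\R^l$. Otherwise, some nonzero $(c_1,\dots,c_l)$ would be orthogonal to every such vector, giving $\bra{\phi}\bigl(\sum_i c_i B_i\bigr)\ket{\phi}=0$ for every $\ket{\phi}$; but a Hermitian operator with vanishing expectation in every pure state must be zero, contradicting linear independence. Hence I can pick states $\ket{\phi_1},\dots,\ket{\phi_l}$ so that the $l\times l$ matrix $M_{ij}=\bra{\phi_j}B_i\ket{\phi_j}$ is invertible.

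To simulate a target $T=\sum_i c_i A_i$, I then introduce $l$ fresh ancilla qudits, take $H_0=\sum_{j=1}^l (I-\proj{\phi_j})_j$ and $H_1=\sum_{j=1}^l w_j H_{1,\dots,k-1,j}$, where $w_j\in\R$ are chosen to solve the linear system $\sum_j w_j M_{ij}=c_i$ (which has a unique solution by the basis property). Applying Lemma \ref{lem:firstorder} to $H_{\operatorname{sim}}=\Delta H_0+H_1$, with the local isometry $V$ that appends $\ket{\phi_1}\otimes\cdots\otimes\ket{\phi_l}$ to the state, yields effective interaction $\sum_i\bigl(\sum_j w_j \bra{\phi_j}B_i\ket{\phi_j}\bigr)A_i = T$ on qudits $1,\dots,k-1$, which is what we want.

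Two minor checks complete the argument. First, $V$ meets the locality requirement of Definition \ref{dfn:sim}: the $l\le d^2$ ancillas can be grouped with a single logical qudit, so each local isometry $V_i$ still maps to $O(1)$ qudits. Second, the $l$ gadgets do not interfere at first order, since $H_0$ is a direct sum across the ancillas and the ground-space projection of $H_1$ is additive; this is precisely the parallel-simulation behaviour discussed after Lemma \ref{lem:thirdorder}. The remaining task — setting $\Delta$ polynomially large in $\max_j|w_j|$, $\epsilon^{-1}$ and $\eta^{-1}$ — is exactly the hypothesis of Lemma \ref{lem:firstorder}, so no substantive obstacle arises.
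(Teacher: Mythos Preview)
Your proof is correct and follows essentially the same approach as the paper: project an ancilla qudit into a chosen pure state via a heavy 1-local penalty, use Lemma~\ref{lem:firstorder} to extract the effective interaction $\sum_i \bra{\phi}B_i\ket{\phi}\,A_i$, and then argue by the same contradiction that the expectation vectors span $\R^l$ because the $B_i$ are linearly independent Hermitian operators. The paper is slightly terser (it just says ``using different ancilla qudits projected into different states $\ket{\psi}$ we can produce a linear combination of such interactions''), whereas you explicitly pick $l$ states to make the matrix invertible and solve a linear system for the weights; this is the same idea spelled out in more detail.
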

\begin{proof}
Fix a single qudit state $\ket{\psi} \in \C^d$, and let $H_0=I-\proj{\psi}$. 
Then by Lemma \ref{lem:firstorder}, a first-order perturbation gadget applying $H_0$ to the $k$'th qudit will simulate a $(k-1)$-qudit interaction of the form $\sum_{i=1}^l A_i \bra{\psi} B_i \ket{\psi}$. 
Using different ancilla qubits projected into different states $\ket{\psi}$ we can produce a linear combination of such interactions.
It therefore suffices to prove that $\linspan\{ x^{(\psi)} : \ket{\psi} \in \C^d\}=\R^l$, where $x^{(\psi)}$ is the vector in $\R^l$ with coefficients given by $x^{(\psi)}_i=\bra{\psi} B_i \ket{\psi}$.

Suppose for a contradiction that the $x^{(\psi)}$ do not span the whole of $\R^l$, then there must exist some non-zero $\lambda \in \R^l$ which is orthogonal to $x^{(\psi)}$ for all $\ket{\psi}$, so
\[0=\sum_{i=1}^l \lambda_i x_i^{(\psi)}=\bra{\psi}\left(\sum_i \lambda_i B_i \right)\ket{\psi}  \quad\forall \ket{\psi} \quad \Rightarrow  \quad \sum_i \lambda_i B_i=0\]
contradicting the assumption that the $B_i$ are linearly independent.
\end{proof}

%\begin{lem}
%Let $H$ be a $k$-local interaction (with $k \geqslant 2$) satisfying $\tr_i(H)=0$ for all $i \in \{1,2, \dots ,k\}$. 
%Then $H$ is LA-universal unless there exists a state $\ket{\psi}$ such that $H$ is proportional to $(d\proj{\psi}-I)^{\otimes k}$.
%\end{lem}

%\begin{proof}
%We prove this by induction on $k$, noting that the $k=2$ case follows directly from the statement of Theorem \ref{thm:lamain}.
%Assuming the claim holds for $k-1$ qudits, then by Lemma \ref{lem:ktok-1}, $H$ is LA-universal unless $H= (d\proj{\psi}-I)^{\otimes k-1}\otimes B$ for some single qudit operator $B$. By applying Lemma \ref{lem:ktok-1} to a different qudit, we conclude that $B$ must be proportional to $(d\proj{\psi}-I)$ as required.
%\end{proof}

Let $H$ be a $k$-qudit Hamiltonian and $S$ be a subset of those $k$ qudits. Define $H_S$ to be the part of $H$ which acts non-trivially only on $S$ but does not have any part in its decomposition which acts trivially on any subset of $S$.
More precisely, take a basis $\{I, B_i\}$ of Hermitian matrices on $\C^d$, where the $B_i$ are traceless, and decompose $H$ as a linear combination of tensor products of terms from these bases; then $H_S$ is the sum of all terms which are non-identity on $S$ and identity elsewhere.
Note that $H=\sum_S H_S$ and $\tr_i(H_S)=0$ for any $i \in S$.

The following corollary is an easy consequence of Lemma \ref{lem:ktok-1}.
\begin{cor}
\label{cor:klocal}
Let $H$ be a $k$-qudit interaction, with a decomposition $H=\sum_S H_S$ where $H_S$ is defined as above. Then, using $H$ and additional 1-local terms, it is possible to simulate the interaction $H_S$ for any subset $S$. 
\end{cor}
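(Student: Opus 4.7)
The plan is to prove the corollary by induction on $k$. The base case $k=1$ is immediate: $H = H_\emptyset + H_{\{1\}}$, where $H_\emptyset$ is a scalar and $H_{\{1\}}$ is 1-local, and both can be realised directly from the additional 1-local terms. For the inductive step, I would split according to whether the target subset $S$ is a proper subset of $[k]$ or equals $[k]$, handling the proper subset case first.

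When $S \subsetneq [k]$, pick any qudit $j \in [k]\setminus S$ and decompose $H$ along qudit $j$ as $H = \sum_a A_a \otimes B_a$, where $\{B_a\}$ is an orthonormal basis of Hermitian operators on $\C^d$ with $B_0 = I/\sqrt{d}$ and $B_1,\dots,B_{d^2-1}$ traceless. The $B_a$ are linearly independent, so Lemma~\ref{lem:ktok-1} allows us to simulate any element of $\linspan\{A_a\}$, in particular $A_0 = \tr_j(H)/\sqrt{d}$. Using $\tr_j(H_{S'})=0$ for $S' \ni j$ and $\tr_j(H_{S'}) = d\, H_{S'}$ for $S' \not\ni j$, this $A_0$ equals $\sqrt{d}\sum_{S' \not\ni j} H_{S'}$, a $(k-1)$-qudit interaction on $[k]\setminus\{j\}$ whose decomposition satisfies $(A_0)_{S'} = \sqrt{d}\,H_{S'}$ for each $S' \subseteq [k]\setminus\{j\}$. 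Since $S \subseteq [k]\setminus\{j\}$, applying the inductive hypothesis to $A_0$ with target $S$ then yields a simulation of $H_S$ (after rescaling by $1/\sqrt{d}$).

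When $S = [k]$, I would use the identity $H_{[k]} = H - \sum_{S' \subsetneq [k]} H_{S'}$. Each $H_{S'}$ with $S' \subsetneq [k]$ is simulable by the previous case (scalar and 1-local contributions can be realised directly via free 1-local terms), with each simulation using its own set of mediator qudits. Applying all these simulations in parallel with negative weight alongside the native interaction $H$ on the $k$ logical qudits produces a combined Hamiltonian whose low-energy effective interaction is exactly $H - \sum_{S' \subsetneq [k]} H_{S'} = H_{[k]}$.

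The main technical point to verify is that the iterated and parallel perturbative gadgets compose correctly: each invocation of Lemma~\ref{lem:ktok-1} introduces its own heavy 1-local projector on a distinct mediator qudit, and at each recursion depth one must confirm that these gadgets do not spuriously interact with the effective interactions produced deeper in the recursion or with each other across the different subsets $S'$ being cancelled. This is precisely the standard parallel-composition of first-order gadgets established in \cite[Lemma~36]{cubitt17}, which is what makes the overall induction go through cleanly.
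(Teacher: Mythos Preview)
Your proof is correct and follows essentially the same idea as the paper: use Lemma~\ref{lem:ktok-1} to trace out qudits and obtain partial sums $\sum_{S' \subseteq T} H_{S'}$, then isolate $H_S$ by inclusion--exclusion. The paper organises this slightly differently: it first observes that repeated application of Lemma~\ref{lem:ktok-1} yields $H(S):=\sum_{S'\subseteq S} H_{S'}$ for \emph{any} $S$ (with $H([k])=H$ trivially), and then inducts on $|S|$ via $H_S = H(S) - \sum_{S'\subsetneq S} H_{S'}$, which avoids your case split on whether $S=[k]$; but the content is the same.
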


\begin{proof}
 Let $H$ have a decomposition $H=A_0\otimes I+\sum_i A_i \otimes B_i$ where the $B_i$ are traceless Hermitian matrices acting nontrivially on a single qudit. Then, by Lemma~\ref{lem:ktok-1}, we can simulate $A_0$. This is the part of $H$ which acts trivially on the last qudit and can hence be expressed as $A_0 \otimes I =\sum_{S' \subseteq \{1,\dots k-1\}} H_{S'}$.
By applying Lemma~\ref{lem:ktok-1} repeatedly in this way, we can simulate any interaction of the form $H(S)=\sum_{S'\subseteq S}H_{S'}$ for an arbitrary set $S$.

We now prove the corollary by induction on $|S|$, noting that the base case $|S|=1$ is trivial since we have access to all 1-local terms. 
Assume the claim for all subsets of size $l$ and let $S$ be a subset of size $l+1$.
 By the induction hypothesis, we can simulate $H_{S'}$ for all subsets $S'\subset S$. Taking these away from $H(S)$ we are left with $H_S$ as desired.
\end{proof}

We are now ready to generalise Theorem \ref{thm:lamain} to $k$-local interactions.

\begin{repthm}{thm:lamainklocal}
Let $\cS$ be a set of interactions, which are not all 1-local, between qudits of dimension $d$. Then $\cS$ is:
\begin{itemize}
\item stoquastic and LA-stoquastic-universal, if there exists $\ket{\psi}\in \C^d$ such that all interactions in $\cS$ are, up to the addition of 1-local terms, given by a linear combination of operators taken from the set $\{I, \proj{\psi},\proj{\psi}^{\otimes 2},\proj{\psi}^{\otimes 3},\dots \}$;
\item LA-universal, otherwise.
\end{itemize}
\end{repthm}

\begin{proof}
First note that by the same argument as Lemma~\ref{lem:stoqconj}, the Hamiltonians given in the first case are stoquastic. Since not all interactions are 1-local, Lemma~\ref{lem:ktok-1} can be used to extract a 2-local interaction with non-zero 2-local part, which is LA-stoquastic-universal by Theorem~\ref{thm:lamain}.

It remains to prove that any other set of interactions is universal. Define $T_l$ to be the space of $l$-local interactions that have no $m$-local part in their decomposition for $m<l$, and which can be generated by repeated applications of Lemma~\ref{lem:ktok-1} to interactions $H \in \mathcal{S}$ (and taking linear combinations of such interactions). 
Given an interaction $H$ in $\mathcal{S}$, and a decomposition $H=\sum_{S} H_S$, $T_l$ includes all interactions $H_S$ such that $|S|=l$ by Corollary~\ref{cor:klocal}.
It will therefore suffice to prove that there exists $\ket{\psi}$ such that $T_l= \linspan\{(d \proj{\psi}-I)^{\otimes l}\}$ for all $l$, as then $H = \sum_S H_S$ will be of the desired form.

We prove this claim by induction on $l$. Note that $T_2$ is non-empty unless all interactions in $\mathcal{S}$ are 1-local. By Theorem~\ref{thm:lamain}, each interaction in $T_2$ must be proportional to $(d \proj{\psi}-I)^{\otimes 2}$ for some state $\ket{\psi}$. Moreover, the state $\ket{\psi}$ must be the same for all interactions in $T_2$, or we could simulate $(d \proj{\psi}-I)^{\otimes 2}+(d \proj{\psi'}-I)^{\otimes 2}$ for some $\ket{\psi} \neq \ket{\psi'}$, which is LA-universal by Lemma~\ref{lem:AA+BB}.

Assume now that the claim holds for $T_l$ and consider an interaction $F$ in $T_{l+1}$.
Write $F=\sum_i A_i \otimes B_i$, where $B_i$ are traceless single-qudit operators. Then, by Lemma~\ref{lem:ktok-1}, $\linspan\{A_i\} \subseteq T_l$. 
Therefore, by the induction hypothesis, $F=(d \proj{\psi}-I)^{\otimes l}\otimes B$ for some single-qudit operator $B$. 
By applying Lemma~\ref{lem:ktok-1} to a different qudit, we conclude that $B$ must also be proportional to $(d\proj{\psi}-I)$ as required.
\end{proof}

% ------------------------------------------------------------------------------

\section{$SU(d)$ Heisenberg interaction}

In the remainder of the paper we prove universality for some families of interactions where we are not assisted by free 1-local terms. We consider interactions that generalise the familiar Heisenberg interaction $h = X \otimes X + Y \otimes Y + Z \otimes Z$ for qubits. The Pauli matrices $X$, $Y$, $Z$ correspond to generators for the fundamental (2-dimensional) representation of the Lie algebra $\mathfrak{su}(2)$. So two natural ways to generalise the interaction $h$ are to consider $\mathfrak{su}(d)$ for $d > 2$, or to consider higher-dimensional representations of $\mathfrak{su}(2)$. We study both of these generalisations, beginning with the former.

We first review the mathematical aspects of these generalised Heisenberg models that will be important for us, and in particular the required concepts from representation theory.
Throughout this section,~\cite{FuchsSchweigert} will be a useful reference. 
The fundamental representation of the Lie algebra $\mathfrak{su}(d)$ is given by the space of traceless antiHermitian $d \times d$ matrices. We will follow the physics convention of considering a set of traceless Hermitian operators $\{T^a\}$ such that the real linear span of $\{iT^a\}$ gives the fundamental representation of $\mathfrak{su}(d)$. The basis can be chosen such that $\tr(T^a T^b)=\frac{1}{2}\delta_{ab}$ so that the structure constants $f_{abc}$, defined by $[T^a,T^b]=\sum_{c}if_{abc}T^c$, are completely antisymmetric. For example the Pauli spin matrices $iX/2,iY/2, iZ/2$ are such a basis of $\mathfrak{su}(2)$. The $SU(d)$ Heisenberg interaction $h$ is given by 
\be \label{eq:heisenbergsud} h:=\sum_{a} T^a \otimes T^a. \ee
which (up to rescaling and adding an identity term) is the only two-qudit operator which is invariant under conjugation by the unitary $U\otimes U$ for any matrix $U$ in $SU(d)$. 

\subsection{Notes on the representation theory of $\mathfrak{su}(N)$}
A representation of a Lie algebra $\mathfrak{g}$ is a vector space $\Lambda$ and a linear map $R:\mathfrak{g}\rightarrow L(\Lambda)$ from $\mathfrak{g}$ to the space of linear maps on $\Lambda$, such that $[R(x),R(y)]=R([x,y])$ for all $x,y \in \mathfrak{g}$.
The Lie algebra $\mathfrak{su}(N)$ is semi-simple, which means that any representation $R$ has a direct sum decomposition such that:
\begin{equation}
\label{eq:semisimpledecomp}
R=\bigoplus_{i} R_i \quad \text{ and } \quad \Lambda=\bigoplus_{i} \Lambda_i
\end{equation}
where each $R_i:\mathfrak{g}\rightarrow \Lambda_i$ is an irreducible representation.

The irreducible representations of $\mathfrak{su}(N)$ can be labeled with a Young diagram of at most $N$ rows.
The fundamental representation has a Young diagram of a single box. 
The antifundamental representation or conjugate representation has Young diagram of a single column of $N-1$ boxes, and is given by $R_{\operatorname{conj}}(T^a)=-(T^a)^*$ where $*$ denotes complex conjugation.
The trivial representation is a one dimensional representation in which $R_{\operatorname{trivial}}(T^a)=0$, with Young diagram consisting of a single column of $N$ boxes.
The adjoint representation is an $N^2-1$ dimensional representation in which $R_{\operatorname{adjoint}}$ acts on the Lie algebra itself with the action of the Lie bracket, $R_{\operatorname{adjoint}}(T^a) T^b=[T^a,T^b]$. The adjoint representation has a Young diagram of one column of $N-1$ boxes and a second column of a single box.

For a given representation $R$ of $\mathfrak{su}(N)$, the quadratic Casimir operator $C_R$ is defined by $C_R=\sum_{a}R(T^a)R(T^a)$. Note that $C_R$ commutes with all elements $R(T^b)$:
\begin{align*}[C_R,R(T^b)]&=\sum_a [R(T^a)R(T^a),R(T^b)]=\sum_a \left(R(T^a)[R(T^a),R(T^b)]+[R(T^a),R(T^b)]R(T^a)\right)\\
&=\sum_{a,c}if_{abc}\left(R(T^a)R(T^c)+R(T^c)R(T^a)\right)=0
\end{align*}
since $f_{abc}$ is antisymmetric in $a,c$ and $R(T^a)R(T^c)+R(T^c)R(T^a)$ is clearly symmetric in $a,c$.

When $R$ is an irreducible representation, Schur's Lemma implies that $C_R=c_R I$ for some $c_R \in \R$ known as the Casimir eigenvalue.
For an irreducible representation $R$ of $\mathfrak{su}(N)$ with corresponding Young diagram of $n_{row}$ rows of length $b_1,b_2,\dots,b_{n_{row}}$ and $n_{col}$ columns of length $a_1,a_2,\dots a_{n_{col}}$ and $l$ boxes in total, the Casimir eigenvalue $c_R$ is given by \cite{FuchsSchweigert}
\begin{equation}c_{R}=\frac{1}{2}\left[l(N-l/N)+\sum_{i=1}^{n_{row}}b_i^2-\sum_{i=1}^{n_{col}}a_i^2\right].
\label{eq:casimirvalue}
\end{equation}
For a representation $R$ with a decomposition as in (\ref{eq:semisimpledecomp}), $C_R=\bigoplus_{i} C_{R_i}$ and so each eigenspaces of $C_R$ corresponds to a space $\Lambda_i$ with corresponding Casimir eigenvalue $c_{R_i}$.

Given two representations $R_1$  and $R_2$, we can define a new representation $R_1 \otimes R_2$ called the tensor product representation on the space $\Lambda_1 \otimes \Lambda_2$ by
\[(R_1 \otimes R_2)(T^a)=R_1(T^a) \otimes I_2 + I_1\otimes R_2(T^a)\]
Even when $R_1$ and $R_2$ are irreducible representations, the tensor product representation is not in general irreducible.
The irreducible representations $R_i$ in the decomposition (\ref{eq:semisimpledecomp}) of $R_1 \otimes R_2$ can be calculated using the Young diagrams of $R_1$ and $R_2$.
This process is described in detail in, for example, \cite{FuchsSchweigert}. 
If $R_1$ and $R_2$ have Young diagrams of $l_1$ and $l_2$ boxes respectively, then every irreducible representation in the decomposition of $R_1 \otimes R_2$ has a Young diagram of $l_1+ l_2$ boxes.

\subsection{Alternative $SU(d)$ invariant interaction}
\label{sec:alternativeSUd}
We briefly note that an alternative generalisation of the Heisenberg model has also been studied in the condensed-matter theory literature \cite{Beach09,Lou09,Read89}. The qudits of the system are partitioned into two subsets $A$ and $B$, and the interaction graph is bipartite, with no interactions acting within $A$ or $B$. The total Hamiltonian $H$ is of the form
\[H=\sum_{i \in A,\\ j \in B }\widetilde{h}_{ij} \quad \text{ where }\widetilde{h}=\sum_{a} T^a \otimes (-T^a)^{*}\]
where $^{*}$ denotes complex conjugation.
Since $\sum_a T^a T^a=\frac{d^2-1}{2d}I$ by equation (\ref{eq:casimirvalue}), we have 
\begin{align*}
\widetilde{h}+\frac{d^2-1}{2d} I &=\sum_a T^a \otimes (-T^a)^* + \frac{1}{2}\left(T^aT^a \otimes I +I \otimes (-T^a)^*(-T^a)^*\right)\\
&=\frac{1}{2}\sum_a \widetilde{T}^a \widetilde{T}^a
\end{align*}
where $\widetilde{T}^a= T^a\otimes I + I\otimes (-T^a)^*$. Thus $\widetilde{h}$ is, up to a multiple of the identity, the Casimir operator in the $\widetilde{T}^a$ representation and so commutes with $\widetilde{T}^a$ for all $a$. 
This implies that the total Hamiltonian $H$ is now no longer invariant under conjugation by the unitary $U^{\otimes n}$, but is invariant when conjugated by $U^{\otimes |A|} \otimes (U^{*})^{\otimes |B|}$.

Note that $\widetilde{T}^a$ is the tensor product of the fundamental and antifundamental representation which decomposes into a direct sum of the trivial representation and the adjoint representations (this can be seen using the Young diagram method, as described for example in \cite{FuchsSchweigert}).
Therefore, as $\widetilde{T}^a$ annihilates the state $\ket{\phi}=\frac{1}{\sqrt{d}}\sum_{i}\ket{i}\ket{i}$, $\widetilde{h}+\frac{d^2-1}{2d}I=\frac{1}{2}\sum_a \widetilde{T}^a\widetilde{T}^a$ also annihilates $\ket{\phi}$, and has eigenvalue $\frac{1}{2}c_{\operatorname{adjoint}}=d/2$ on the rest of the space.
Therefore $\widetilde{h}$ is just a linear combination of the identity $I$ and the projector onto $\ket{\phi}$:
\begin{equation}
\label{eq:alternativeSUdproj}
\widetilde{h} =\frac{1}{2d}I-\frac{d}{2} \proj{\phi}
\end{equation}

We will show that this Hamiltonian can simulate an arbitrarily weighted $SU(d)$ invariant interaction $h=\sum_{a} T^a \otimes T^a$ on the $A$ qudits using a second-order mediator gadget.
Consider a system of four qudits with qudits $1,2,3 \in A$ and qudit $4 \in B$. Let $\Pi$ be the projector onto the state $\ket{\phi}_{34}$ and let $H_0= I - \Pi = \frac{2}{d}(\widetilde{h}_{34}+\frac{d^2-1}{2d}I)$, $H_1=0$ and $H_2=\widetilde{h}_{14}+\mu \widetilde{h}_{24}$ for some $\mu \in \R$. Since $\Pi M_4 \Pi = (\tr M) \Pi$ for any $M$ and the $T^a$'s are traceless, $\Pi H_2 \Pi = 0$. By Lemma~\ref{lem:secondorder} we can simulate
\begin{align*}
-\Pi H_2 H_0^{-1} H_2 \Pi & =-\Pi H_2 (I-\Pi) H_2 \Pi=-\sum_{a,b}(T^a_1+\mu T^a_2)\frac{1}{d}\tr(T^aT^b) (T^b_1+\mu T^b_2) \Pi\\
&=-(1+\mu^2)\frac{d^2-1}{4d^2} I - \frac{\mu}{d} \sum_a T_1^a T_2^a,
\end{align*}
where we use that $\sum_a (T^a)^2 = \frac{d^2-1}{2d} I$ in the third equality. By adjusting $\mu$ we can obtain an arbitrarily weighted $h$ interaction up to the identity term.

In order to show that $\widetilde{h}$ is universal, it will therefore suffice to consider only $h$.
We will do this for the rest of the paper.

\subsection{Encoding a logical qubit in a $2d$-qudit gadget}

We now consider a system of $k$ qudits each of dimension $d$, and will use subscript notation to denote which qudit an operator acts on, so $T^a_i$ denotes the action of $T^a$ on qudit $i$ and the identity elsewhere. 
For a set $S$ we use the shorthand $T^a_S=\sum_{i \in S} T^a_i$.
The operators $\{T_S^a\}_a$ form a representation of $\mathfrak{su}(d)$; it is the representation given by the tensor product of the fundamental representation $l=|S|$ times. 

Consider the following Hamiltonian, given by the quadratic Casimir operator in the $\{T^a_S\}_a$ representation:
\begin{align}
C(S)&:=\sum_a T_S^a T_S^a=\sum_{a}\left(\sum_{i \neq j} T^a_i T^a_j + \sum_i T^a_i T^a_i\right)\\
&=\sum_{i\neq j} h_{ij} + \frac{l(d^2-1)}{2d} I.
\end{align}
As discussed above, to understand the eigenspaces of $C(S)$, it suffices to know the irreducible representations contained in the decomposition of $\{T^a_S\}_a$. 
In particular we note that $C(S)$ is a sum of squares of Hermitian matrices so is positive semidefinite, and the Young diagram consisting of a single column of $d$ boxes is a one dimensional irrep, with Casimir eigenvalue zero, corresponding to the state $\ket{\Psi}$, the completely antisymmetric state on $d$ qudits. 
%When $l=kd$ for an integer $k$, the zero eigenvalue is achieved on 1-dimensional irreps corresponding to diagrams of $k$ columns of length $d$ - these correspond to the states $\ket{\Psi}_{S_1} \otimes \dots \otimes \ket{\Psi}_{S_k}$, where $S_1 \cup \dots \cup S_k=S$ is a partitioning of the set $S$ into $k$ subsets of size $d$.
The 1-dimensional irrep is known as the trivial representation because $T^a_S \ket{\Psi}=0$ for all $a$.

We will use a gadget construction to encode a logical qubit within $2d$ physical qudits, using a second-order perturbative gadget (via Lemma \ref{lem:secondorder}) to implement effective interactions across pairs of logical qubits. We consider a system of $2d$ qudits, each of dimension $d$, and each with a label in  $E=\{1,2,\dots, 2d\}$. Let $A=\{3,4,\dots, d+1\}$ and $B=\{d+2,\dots 2d\}$ and consider the Hamiltonian
\[H_0=C(E)+C(A)+C(B) -\frac{(d^2-1)}{d}I. \]
The $-\frac{(d^2-1)}{d}I$ term will simply ensure that the ground state energy of $H_0$ is zero, so that the requirements of Lemma~\ref{lem:secondorder} are met.

First we will show that the ground space of $H_0$ -- which will form our logical qubit -- is indeed two-dimensional. 
In fact the two states in the ground space of $H_0$ sit in the respective ground spaces of $C(E)$, $C(A)$ and $C(B)$. 
The ground space of $C(A)$ is given by the $d$-dimensional space $\Antisymspace$ of antisymmetric states on the $d-1$ qudits in $A$, corresponding to the Young diagram of a single column of $d-1$ boxes. 
Let $\{\ket{i}\}_{i=1}^{d}$ be an orthonormal basis for $\C^d$, then there is a unique (up to a phase) antisymmetric state $\ket{\psi_i}$ in $\linspan \{ \ket{1},\dots,\ket{i-1},\ket{i+1},\dots \ket{d}\}^{\otimes d-1}$. 
These states are clearly orthonormal and form a basis for $\Antisymspace$. 

Then the groundspace of $H_0$ contains
\[\ket{\phi_1}=\ket{\Psi}_{1A}\ket{\Psi}_{2B} \quad \text{ and } \quad \ket{\phi_2}=\ket{\Psi}_{1B}\ket{\Psi}_{2A},\]
where $\ket{\Psi}$ is the completely antisymmetric state on $d$ qudits,
\begin{align}
\ket{\Psi}&=\frac{1}{\sqrt{d!}} \sum_{\sigma \in S_d} \text{sgn}(\sigma)\ket{\sigma(1)}\ket{\sigma(2)}\dots \ket{\sigma(d)}\\
&=\frac{1}{\sqrt{d}}\sum_i \ket{i}\ket{\psi_i}
\end{align}
and $\{\ket{i}\}_i$ and $\{\ket{\psi_i}\}_i$ are the orthonormal bases for $\C^d$ and $\Antisymspace$ as defined above.
Clearly, these states are in the ground space of $C(A)$ and $C(B)$, and $\ket{\Psi}$ is the antisymmetric state on $d$ qudits so $T_E^a$ annihilates $\ket{\phi_1}$ and $\ket{\phi_2}$, implying that these states are also in the ground space of $C(E)$.
To see that these are the only two states in the ground space of $H_0$, we note that the ground space of $C(A)+C(B)$ is spanned by states in the representations given in Figure~\ref{fig:tensorrep}.
The $C(E)$ term forces the ground space of $H_0$ to be the two dimensional space corresponding to the two copies of the Young diagram of two columns of $d$ boxes.

It is important to note that $\ket{\phi_1}$ and $\ket{\phi_2}$ are not orthogonal:
\begin{align}
\braket{\phi_1|\phi_2}&=\frac{1}{d^2}\sum_{i,j,k,l}\left(\bra{i}\bra{\psi_i}\bra{j}\bra{\psi_j}\right)\left(\ket{k}\ket{\psi_l}\ket{l}\ket{\psi_k}\right)\\
&=\frac{1}{d^2}\sum_{i,j,k,l}\delta_{ik}\delta_{il}\delta_{jl}\delta_{jk}=\frac{1}{d^2}\sum_{i}\delta_{ii}=\frac{1}{d}.
\end{align}

\newcommand{\extendedyoung}[4]{

\draw (0,0) grid (#1,-1);
\draw (0,-1) grid (#2,-2);
\ifthenelse{#2>1}{\pgfmathsetmacro{\xp}{2}}{\pgfmathsetmacro{\xp}{#2}}
\draw[dotted] (0,-2) grid (\xp,-3);
\draw (0,-3) grid (\xp,-4);
\draw (0,-4) grid (#3,-5);
\ifthenelse{#4>0}{
\draw (0,-5) grid (#4,-6);}{}}

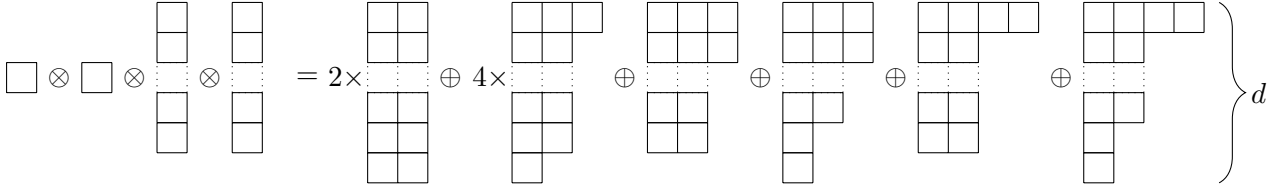
\begin{figure}
\begin{tikzpicture}[scale=0.4]
\pgfmathsetmacro{\s}{0.75}
\begin{scope}[shift={(-10,0)}]
\draw (0,-2) grid (1,-3);
\pgfmathsetmacro{\x}{1+\s}
\node at (\x,-2.5) {$\otimes$};
\pgfmathsetmacro{\x}{\x+\s}  \begin{scope}[shift={(\x,0)}] \draw (0,-2) grid (1,-3); \end{scope}
\pgfmathsetmacro{\x}{\x+1+\s} \node at (\x,-2.5) {$\otimes$};
\pgfmathsetmacro{\x}{\x+\s} \begin{scope}[shift={(\x,0)}] \extendedyoung{1}{1}{1}{0} \end{scope}
\pgfmathsetmacro{\x}{\x+1+\s}\node at (\x,-2.5) {$\otimes$};
\begin{scope}[shift={(3+6*\s,0)}] \extendedyoung{1}{1}{1}{0} \end{scope}
%\draw [decorate,decoration={brace,amplitude=10pt}] (\x+2,0) -- (\x+2,-5)node [black,midway,xshift=25] {$d-1$ };
\end{scope}

\node at (0,-2.5) {$=$};

\begin{scope}[shift={(2,0)}]
\node at (-0.7,-2.5) {$2 \times$};
\extendedyoung{2}{2}{2}{2}
\pgfmathsetmacro{\x}{2+\s}  \node at (\x,-2.5) {$\oplus$};
\pgfmathsetmacro{\x}{\x+1.3+\s} \begin{scope}[shift={(\x,0)}] \node at (-0.7,-2.5) {$4 \times$}; \extendedyoung{3}{2}{2}{1} \end{scope}
\pgfmathsetmacro{\x}{\x+3+\s} \node at (\x,-2.5) {$\oplus$};
\pgfmathsetmacro{\x}{\x+\s} \begin{scope}[shift={(\x,0)}] \extendedyoung{3}{3}{2}{0} \end{scope}
\pgfmathsetmacro{\x}{\x+3+\s} \node at (\x,-2.5) {$\oplus$};
\pgfmathsetmacro{\x}{\x+\s} \begin{scope}[shift={(\x,0)}] \extendedyoung{3}{3}{1}{1} \end{scope}
\pgfmathsetmacro{\x}{\x+3+\s} \node at (\x,-2.5) {$\oplus$};
\pgfmathsetmacro{\x}{\x+\s} \begin{scope}[shift={(\x,0)}] \extendedyoung{4}{2}{2}{0} \end{scope}
\pgfmathsetmacro{\x}{\x+4+\s} \node at (\x,-2.5) {$\oplus$};
\pgfmathsetmacro{\x}{\x+\s} \begin{scope}[shift={(\x,0)}] \extendedyoung{4}{2}{1}{1} \end{scope}

\draw [decorate,decoration={brace,amplitude=10pt}] (\x+4.5,0) -- (\x+4.5,-6)node [black,midway,xshift=15] {$d$ };
\end{scope}
\end{tikzpicture}
\caption{Irreducible representations in the decomposition of the ground space of $C(A)+C(B)$. The rules for taking the tensor product of representations given as Young diagrams can be found for example in \cite{FuchsSchweigert}.}
\label{fig:tensorrep}
\end{figure}

In order to calculate perturbative gadgets we want to understand the action of the physical interaction $h$ defined in (\ref{eq:heisenbergsud}) in this logical qubit space. First we calculate $M_{ij}(T^a_k T^b_l ):=\bra{\phi_i}T^a_k T^b_l \ket{\phi_j}$ for all $a,b,i,j$ and any $k,l \in \{1,2,A,B\}$, and then we will convert to an orthogonal basis later. We only show the calculations for three of these values, as all others can be calculated by symmetric arguments, and recalling  that ($T^a_1+T^a_A) \ket{\Psi}_{1A}=0$.
For example, we can calculate $\bra{\phi_1}T^a_1 T^b_2\ket{\phi_2}= -\bra{\phi_1}T^a_1 T^b_A\ket{\phi_2}= \bra{\phi_1} T^b_1 T^a_1\ket{\phi_2}$.

\begin{align}
\bra{\phi_1}T^a_1 T^b_1\ket{\phi_1} &=\frac{1}{d^2}\sum_{i,j,k,l}\left(\bra{i}\bra{\psi_i}\bra{j}\bra{\psi_j}\right) T_1^a T_1^b\left(\ket{k}\ket{\psi_k}\ket{l}\ket{\psi_l}\right) \\
 &=\frac{1}{d^2}\sum_{i,j,k,l} \bra{i} T^a T^b \ket{k} \delta_{ik}\delta_{jl
}\delta_{jl}\\
 &=\frac{1}{d} \tr(T^aT^b)
 \end{align}
 
 \begin{align}
\bra{\phi_1}T^a_1 T^b_1\ket{\phi_2} &=\frac{1}{d^2}\sum_{i,j,k,l}\left(\bra{i}\bra{\psi_i}\bra{j}\bra{\psi_j}\right) T_1^a T_1^b\left(\ket{k}\ket{\psi_l}\ket{l}\ket{\psi_k}\right) \\
 &=\frac{1}{d^2}\sum_{i,j,k,l} \bra{i} T^a T^b \ket{k} \delta_{il}\delta_{jl}\delta_{jk}\\
 &=\frac{1}{d^2} \tr(T^aT^b)
 \end{align}
 
 \begin{align}
 \bra{\phi_1}T^a_1 T^b_2\ket{\phi_1} &=\frac{1}{d^2}\sum_{i,j,k,l}\left(\bra{i}\bra{\psi_i}\bra{j}\bra{\psi_j}\right) T_1^a T_2^b\left(\ket{k}\ket{\psi_k}\ket{l}\ket{\psi_l}\right) \\
 &=\frac{1}{d^2}\sum_{i,j,k,l} \bra{i} T^a\ket{k} \delta_{ik} \bra{j} T^a\ket{l}\delta_{jl}\\
 &=\frac{1}{d^2} \tr(T^a)\tr(T^b)=0.
 \end{align}

We then have 
\begin{align}
M(T^a_1T^b_1)&=\frac{\tr(T^aT^b)}{d^2}\begin{pmatrix} d & 1\\ 1 & d \\ \end{pmatrix}
& M(T^a_1T^b_A)&=\frac{\tr(T^aT^b)}{d^2}\begin{pmatrix} -d & -1\\ -1 & 0 \\ \end{pmatrix} \\
M(T^a_1T^b_B)&=\frac{\tr(T^aT^b)}{d^2}\begin{pmatrix} 0 & -1\\ -1 & -d \\ \end{pmatrix}
& M(T^a_1T^b_2)&=\frac{\tr(T^aT^b)}{d^2}\begin{pmatrix} 0 & 1\\ 1 & 0 \\ \end{pmatrix} 
\end{align}
Choose $\ket{0_L}$ and $\ket{1_L}$ to be the orthonormal basis of $\linspan\{\ket{\phi_1},\ket{\phi_2}\}$ given by:
%\begin{align}
%\ket{0_L}&= \ket{\phi_1} \\
%\ket{1_L}&= \frac{1}{\sqrt{d^2-1}}\left(\ket{\phi_2}-d \ket{\phi_1}\right)
%\end{align}
\begin{align}
\ket{0_L}&= \sqrt{\frac{d}{2(d+1)}}\left(\ket{\phi_1}+\ket{\phi_2}\right) \\
\ket{1_L}&= \sqrt{\frac{d}{2(d-1)}}\left(\ket{\phi_1}-\ket{\phi_2} \right)
\end{align}

Let $\Pi= \proj{0_L}+\proj{1_L}$ be the projector onto the two dimensional ground space. Then the action of $\Pi T^a_i T^a_j \Pi$ is given in Table~\ref{tab:su(n)terms}. Therefore by Lemma~\ref{lem:firstorder}, choosing $H_1=\alpha h_{1A}+\beta h_{12}$ for $\alpha,\beta\in \R $, we can simulate any logical 1-local interaction in $\linspan\{X_L,Z_L\}$.

\begin{table}
\[\begin{array}{|c|c|}
\hline 
(i,j) & \Pi T^a_i T^a_j \Pi \\
\hline
(1,1), (2,2), (A,A), (B,B) & \frac{1}{2d} I_L \\
(1,A), (2,B) & -\frac{1}{4\sqrt{d^2-1}}X_L -\frac{1}{4(d^2-1)}Z_L -\frac{d^2-2}{4d(d^2-1)} I_L \\
(1,B), (2,A) & \frac{1}{4\sqrt{d^2-1}}X_L -\frac{1}{4(d^2-1)}Z_L -\frac{d^2-2}{4d(d^2-1)} I_L \\
(1,2), (A,B) & \frac{1}{2(d^2-1)}Z_L-\frac{1}{2d(d^2-1)} I_L \\
\hline
\end{array}\]
\caption{Action of $T^a_iT^a_j$ in the ground space of $H_0$. }
\label{tab:su(n)terms}
\end{table}

\subsection{Second-order terms}

We now want to simulate interactions between two logical qubits using a second-order gadget, via Lemma~\ref{lem:secondorder}. $H_1$ is chosen as in the previous section to simulate any 1-local terms desired. Consider two copies of the gadget above with qudit labels $\{1,2,\dots, 2d\}$ and $\{1',2',\dots, 2d'\}$ respectively.
We will choose $H_2=\sum_{i,j} \alpha_{i j} h_{ij'}$, so we need to calculate 
\[ \Pi H_2 (H_0)^{-1} H_2\Pi = \sum_{i,j,k,l} \alpha_{ij} \alpha_{kl}\Pi h_{ij'} (H_0)^{-1} h_{kl'} \Pi.\]
The difficult part of this calculation is to understand how the $(H_0)^{-1}$ term acts. For any state $\ket{\psi}$ in the ground space of $H_0$, it is easy to check that the states $\{T_1^b\ket{\psi}\}_b$ are orthogonal  and that $T_E^a$ acts on this space as the adjoint representation:
 \[T^a_E T^b_1 \ket{\psi}=\left(T^a_1 T^b_1 +\sum_{i\neq 1}T^b_1T^a_i\right)\ket{\psi}
=\left(T^a_1 T^b_1 -T^b_1 T^a_1 \right)\ket{\psi}= [T^a_1,T^b_1]\ket{\psi}.\]
Therefore $T_1^b \ket{\psi}$ is an eigenvector of $C(E)$ with the Casimir eigenvalue corresponding to the adjoint representation, which has Young diagram consisting of one column of length $d-1$ and a second column of length $1$. By equation (\ref{eq:casimirvalue}), this eigenvalue is given by $c_{\text{adjoint}}=d$, which we can also check directly: 
\begin{align}
C(E)T^b_1 \ket{\psi}&=\sum_a T^a_E T^a_E T^b_1\ket{\psi}=\sum_a\left[T_1^a,[T_1^a,T_1^b]\right]\ket{\psi}\\
&=-\sum_{a,c,e}f_{abc}f_{ace} T^e_1\ket{\psi}=-\sum_e \kappa_{be} T^e_1\ket{\psi}=dT^b_1 \ket{\psi}
\end{align}
where we have used the antisymmetry of the structure constants $f_{abc}$ and the definition of the Killing form $\kappa_{ab}=\sum_{c,e} f_{ace}f_{bec}=-2d\tr(T^aT^b)$. 

Furthermore, the operator $T^b_1$ does not act on $A$ or $B$ so the state $T^b_1\ket{\psi}$ is still antisymmetric with respect to permutations within $A$ and $B$ and so is in the zero-energy ground space of $C(A)+C(B)-\frac{d^2-1}{d}I$. Thus $H_0 h_{kl'} \Pi= 2d h_{kl'} \Pi$ and so 
\begin{align*} \Pi h_{ij'} (H_0)^{-1} h_{kl'} \Pi &= \frac{1}{2d}\Pi h_{ij'} h_{kl'} \Pi=\frac{1}{2d}\sum_{a,b} \Pi_E T^a_i T^b_k \Pi_E \otimes \Pi_{E'} T^a_{j'} T^b_{l'} \Pi_{E'}\\
&=\frac{1}{2d}\sum_{a} \Pi_E T^a_i T^a_k \Pi_E \otimes \Pi_{E'} T^a_{j'} T^a_{l'} \Pi_{E'},
\end{align*}
which corresponds to a logical operator that can be read off from Table~\ref{tab:su(n)terms}. We choose 
%$\alpha_{ij}=1$ if $(i,j) \in \{ (1,1), (2,2), (A,A)\}$ and $\alpha_{ij}=0$ otherwise.
$\alpha_{ij}=1$ if $(i,j) \in \{ (1,A), (2,B), (A,1), (B,A), (B,B) \}$ and $\alpha_{ij}=0$ otherwise.
Then by Lemma~\ref{lem:secondorder} we find that the simulated interaction is 
%\[ -\Pi H_2 (H_0)^{-1} H_2\Pi=-\frac{1}{2d}\left(\frac{1}{8-8d^2}X_LX_L - \frac{3}{8(d^2-1)^2}Z_LZ_L+ \text{1-local terms}\right), \]
\[ -\Pi H_2 (H_0)^{-1} H_2\Pi = \frac{1}{8d(d^2-1)}\left(X_LX_L + \frac{3}{d^2-1}Z_LZ_L+ \text{1-local terms}\right), \]
which can be checked either by hand or using a computer algebra package.
As we can produce arbitrary 1-local terms using the arguments of the previous section, this allows us to simulate the interaction $\alpha(XX + \frac{3}{d^2-1} ZZ)$ for an arbitrary positive weight $\alpha$, which falls into a family that was shown to be universal\footnote{Note that the results of~\cite{Piddock-Montanaro} are stated in terms of QMA-completeness, but it is easy to check that, in combination with~\cite{cubitt17}, they imply that universality holds for this interaction.} in \cite{Piddock-Montanaro}.
This completes the proof of the following theorem:

\begin{repthm}{thm:sud}
For any $d \ge 2$, the $SU(d)$ Heisenberg interaction $h:=\sum_{a} T^a \otimes T^a$, where $\{T^a\}$ are traceless Hermitian matrices such that $\tr(T^a T^b)=\frac{1}{2}\delta_{ab}$, is universal.
\end{repthm}

The following corollary is an immediate consequence of Theorem~\ref{thm:sud} and the discussion in Section \ref{sec:alternativeSUd}.
\begin{cor}
\label{cor:alternativeSUd} 
For any $d \ge 2$, the alternative $SU(d)$ Heisenberg interaction $\widetilde{h}:=- \sum_a T^a \otimes (T^a)^{\star}$, where $\{T^a\}$ are traceless Hermitian matrices such that $\tr(T^a T^b)=\frac{1}{2}\delta_{ab}$, is universal even on a bipartite interaction graph.
\end{cor}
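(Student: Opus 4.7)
The plan is to stitch together the bipartite gadget construction already spelled out in Section~\ref{sec:alternativeSUd} with the universality of the standard $SU(d)$ Heisenberg interaction $h$ established in Theorem~\ref{thm:sud}. The key observation is already present in equation~(\ref{eq:alternativeSUdproj}): up to an identity shift and rescaling, a $\widetilde{h}$ edge between an $A$ qudit and a $B$ qudit is a projector onto the maximally entangled state $\ket{\phi}=\frac{1}{\sqrt{d}}\sum_i\ket{i}\ket{i}$. In particular, such a term is available both as a heavy penalty and as a tunable perturbation while never leaving the bipartite interaction graph.

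The main step I would invoke is the four-qudit, second-order mediator gadget already computed in Section~\ref{sec:alternativeSUd}. For each pair of qudits $i,j\in A$ on which we wish to implement an $h$ interaction, I would introduce one fresh ancilla in $A$ and one fresh ancilla in $B$, apply a heavy bipartite $\widetilde{h}$ edge between the two ancillas (which becomes $H_0\propto I-\proj{\phi}$ on the ancilla pair), and add perturbatively weighted $\widetilde{h}$ edges from $i$ and $j$ to the $B$ ancilla. Lemma~\ref{lem:secondorder} then yields the effective interaction $-\frac{\mu}{d}\sum_a T_i^a T_j^a$ with $\mu\in\R$ freely tunable by the choice of perturbation weights, i.e.\ an arbitrarily weighted $h_{ij}$ term up to a global identity. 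Every interaction placed by the gadget crosses the $A$--$B$ bipartition, so the bipartite structure of the simulator is preserved.

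I would then run one such gadget in parallel for each desired edge of the target $\{h\}$-Hamiltonian, so as to simulate any $\{h\}$-Hamiltonian on the $A$ qudits using only $\widetilde{h}$ interactions on a bipartite graph. The correctness of the parallel composition of these second-order gadgets is covered by the standard parallel-gadget result of~\cite[Lemma 36]{cubitt17}, so no new perturbative machinery is required. Universality of $\{\widetilde{h}\}$-Hamiltonians on bipartite graphs then follows by transitivity of simulation combined with Theorem~\ref{thm:sud}.

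The only thing to check with care is that the bipartite structure really is respected at every level: the heavy term $\widetilde{h}_{34}$ and the perturbations $\widetilde{h}_{14}$, $\widetilde{h}_{24}$ all live on $A$--$B$ edges, and each gadget consumes fresh ancillas on both sides, so the simulator graph remains bipartite throughout. There is no substantial technical obstacle here; the corollary is genuinely an immediate consequence of Theorem~\ref{thm:sud} and the computation in Section~\ref{sec:alternativeSUd}.
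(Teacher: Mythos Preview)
Your proposal is correct and follows exactly the route the paper intends: the bipartite second-order mediator gadget of Section~\ref{sec:alternativeSUd} shows that $\{\widetilde{h}\}$-Hamiltonians on a bipartite graph can simulate arbitrary $\{h\}$-Hamiltonians on the $A$ qudits, and universality then follows from Theorem~\ref{thm:sud}. The paper records this simply as ``an immediate consequence of Theorem~\ref{thm:sud} and the discussion in Section~\ref{sec:alternativeSUd}'', and your write-up just unpacks that sentence.
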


\section{Rank 1 projectors}
\label{sec:proj}
In this section we consider the family of $\cS$-Hamiltonians where $\cS$ contains a single rank 1 projector $P$ onto a two qudit state $\ket{\psi} \in (\C^d)^{\otimes 2}$.
We prove universality even in the restricted setting where interactions are only allowed between qudits on a bipartite interaction graph.
We note that this also trivially implies universality without such a restriction.
\begin{repthm}{thm:proj}
Let $P= \proj{\psi}$  be the projector onto the two-qudit state $\ket{\psi} \in (\C^d)^{\otimes 2}$. Then Hamiltonians of the form \[H= \sum_{i \in A, j \in B} \alpha_{ij}P_{ij}\] where $A$ and $B$ are disjoint subsets of qubits and $\alpha_{ij} \in \R$, are universal if $\ket{\psi}$ is entangled.
\\Otherwise, if $\ket{\psi}$ is a product state, then this family of Hamiltonians is classical.
\end{repthm}

\begin{proof}
We first conjugate the entire Hamiltonian by a total unitary $\left(\bigotimes_{i \in A} U\right)\otimes \left(\bigotimes_{j \in B} V\right) $. This allows us to perform a change of basis of the form $(U \otimes V) P_{ij}(U\otimes V)^{\dag}$ for each projector $P_{ij}$.
Therefore, by the Schmidt decomposition, we may assume without loss of generality that $\ket{\psi}= \sum_{i=1}^d \lambda_i \ket{i}\ket{i}$, where $\lambda_i \geqslant 0$ and the $\lambda_i$ are in non-increasing order.
If $\ket{\psi}$ is a product state, then the Hamiltonian is clearly classical, since $P$ is diagonal in this basis - it is the projector onto $\ket{1}\ket{1}$.

So assume that $\ket{\psi}$ is entangled; we first show how to simulate some 1-local operators using mediator qudit gadgets. For three qudits $1,2 \in A$ and $3 \in B$, let $H_0= I-P_{32}$ with groundstate $\ket{\psi}_{32}$ and let $H_{1}=P_{12}$ so that by Lemma~\ref{lem:firstorder} we can simulate
\begin{align*}
\Pi H_1 \Pi &= P_{32}P_{12} P_{32}=\left( \sum_{i,j}\lambda_i \lambda_j I \otimes  \ket{i}\bra{j} \otimes \ket{i}\bra{j}\right) \left( \sum_{k,l} \lambda_k \lambda_l \ket{k}\bra{l} \otimes \ket{k} \bra{l} \otimes I \right) P_{32}
\\
&=\left( \sum_{i,j,l}\lambda_i \lambda_j^2 \lambda_l \ket{j}\bra{l} \otimes  \ket{i}\bra{l} \otimes \ket{i}\bra{j}\right) \left(\sum_{m,n} \lambda_m \lambda_n I \otimes \ket{m}\bra{n} \otimes \ket{m}\bra{n}\right)\\
 &=\left( \sum_{i,j,n}\lambda_i \lambda_j^4 \lambda_n\ket{j}\bra{j} \otimes  \ket{i}\bra{n} \otimes \ket{i}\bra{n}\right) =R_1 P_{32}
\end{align*}
where $R$ is the single qudit operator $R=\sum_j \lambda_j^4 \ket{j}\bra{j}$.

We can now therefore assume we also have access to the 1-local interaction $R$ on any qudit in $A$. Let $H_1= (\alpha+\beta^2)P_{12}$ and $H_2= \beta(P_{12}-R_1)$ for some arbitrary $\alpha, \beta \in \R$, with $H_0= I-P_{32}$ as before. We note that $\Pi H_2\Pi=0$, so that by Lemma~\ref{lem:secondorder}, we can simulate
\begin{align*}
\Pi \left[H_1- H_2 (H_0)^{-1} H_2 \right]\Pi&= P_{32}\left[ (\alpha+\beta^2) P_{12} - \beta^2 P_{12}(I-P_{32})P_{12}\right] P_{32}\\
&=\alpha P_{32}P_{12} P_{32}+\beta^2 (P_{32}P_{12} P_{32})^2=(\alpha R_1+\beta^2 R_1^2) P_{32} 
\end{align*}
By a symmetric argument, we can also simulate the 1-local interaction $\alpha R + \beta^2 R^2$ on any qudit in $B$.
To complete the proof, we consider the following two separate cases:
\begin{description}
\item{\bf (i) $R$ has a degenerate eigenspace with non-zero eigenvalue.}

Suppose there exists $\mu >0$ such that $J=\left\{i  \:|\:  \lambda_i = \mu\right\}\subseteq  \{1,2,\dots,d\}$ has two or more elements.
Then $R^2-2\mu^4 R +\mu^8I=(R-\mu^4 I)^2$ is positive semidefinite with ground space projector $\Pi= \sum_{i \in J} \proj{i}$.
By projecting all qudits into this subspace with Lemma \ref{lem:firstorder}, we can simulate a Hamiltonian of interactions of the form
\[ (\Pi \otimes \Pi)P (\Pi \otimes \Pi) = \mu^2 \sum_{i,j \in J} \ket {i}\bra{j} \otimes \ket{i} \bra{j}\]
on a bipartite lattice. This interaction is exactly the alternative $SU(d')$ invariant interaction for $d'=|J|$ (see equation \ref{eq:alternativeSUdproj}), which is universal by Corollary~\ref{cor:alternativeSUd}.

\item{\bf (ii) The eigenspaces of $R$ with non-zero eigenvalue are non-degenerate.}

Without loss of generality, assume that the $ \lambda_i$ are ordered in non-increasing order. The assumption that $\ket{\psi}$ is entangled implies that $\lambda_1, \lambda_2>0$. Since we are not in case (i), we know that $\lambda_1>\lambda_2$ and $\lambda_2> \lambda_i$ for all $i \neq 1,2$.
Then the operator $H_0= R^2-(\lambda_1^4+ \lambda_2^4)R +\lambda_1^4\lambda_2^4 I$ has two-dimensional ground space with projector $\Pi=\proj{1}+\proj{2}$.
Therefore by Lemma \ref{lem:firstorder}, we can simulate 
\begin{align*}
 \Pi P \Pi &= \sum_{i,j \in \{1,2\}}\lambda_i \lambda_j \ket {i}\bra{j} \otimes \ket{i} \bra{j}\\
&=\frac{\lambda_1\lambda_2}{2}(XX-YY)+\frac{\lambda_1^2+\lambda_2^2}{4}(ZZ+I)+\frac{\lambda_1^2-\lambda_2^2}{4}(ZI+IZ)
\end{align*}
where we have identified $\ket{1}$ and $\ket{2}$ with the qubit basis states $\ket{0}$ and $\ket{1}$, and $X,Y,Z$ are the standard qubit Pauli matrices.

The 2-local part of this interaction was shown to be universal in \cite{Piddock-Montanaro}, even when the interactions are restricted to a bipartite interaction graph. 
It remains to note that the gadget for removing the 1-local part of an interaction presented in \cite{Cubitt-Montanaro} takes place on a bipartite interaction graph.
\end{description}
\end{proof}

\section{$SU(2)$ Heisenberg interaction on qudits of dimension $d$}

Next we consider the $SU(2)$ Heisenberg interaction in local dimension $d$.
Let $S^x, S^y, S^z$ form a $d$-dimensional irreducible representation of $\mathfrak{su}(2)$ corresponding to the qubit operators $\sigma^x=X/2,\sigma^y=Y/2,\sigma^z=Z/2$. As a representation they must satisfy $[S^a,S^b]=\sum_{c}i\epsilon_{abc}S^c$, where $\epsilon_{abc}$ is the completely antisymmetric Levi-Civita symbol which satisfies the following standard identities:
\begin{equation}
\label{eq:LCidentity}
\sum_{a} \epsilon_{abc} \epsilon_{aef}=\delta_{be}\delta_{cf}-\delta_{bf}\delta_{ce} 
\quad \Rightarrow \quad
\sum_{a,b} \epsilon_{abc} \epsilon_{abf}=2\delta_{cf}.
\end{equation}
Then the $SU(2)$ Heisenberg interaction on qudits of dimension $d$ is defined by 
\[h=\sum_{a} S^a \otimes S^a. \]
We first prove some preliminary technical results that will be useful later on. 

The trivial representation of $\mathfrak{su}(2)$ has Young diagram of a single column of two boxes.
Let $R^{(d)}$ be the unique $d$-dimensional representation such that $R^{(d)}(\sigma^a)=S^a$, which has a Young diagram of a single row of $d-1$ boxes. 
The Casimir eigenvalue of the $d$-dimensional representation is given by $\lambda:=(d^2-1)/4$ by equation (\ref{eq:casimirvalue}), and so $\sum_a S^aS^a=\lambda I$. 

The tensor product of two $d$-dimensional representations has a direct sum decomposition into all odd-dimensional representations of sizes $1,3,\dots,2d-1$ (this can be seen using the Young diagram method, as described for example in \cite{FuchsSchweigert}):
\begin{equation}
\label{eq:su2tensordecomp}
R^{(d)}\otimes R^{(d)}=R^{(1)} \oplus R^{(3)} \oplus \dots \oplus R^{(2d-1)}
\end{equation}
Letting $s=(d-1)/2$, this is the familiar decomposition of the total spin of two particles of spin $s$.

For two qudits of dimension $d$ labelled $E$ and $F$, let $H_0=h_{EF}+\lambda I=\frac{1}{2}\sum_{a}(S_E^a+S_F^a)(S_E^a+S_F^a)$, which is half the Casimir operator in the representation $\{S^a_E+S^a_F\}_a$, so has eigenspace decomposition as given in equation (\ref{eq:su2tensordecomp}), with eigenvalues half of the corresponding Casimir eigenvalue for that representation.

Let $\ket{\psi_{EF}}$ be the state corresponding to the trivial one dimensional representation in the decomposition, for which $(S_E^a+S_F^a)\ket{\psi_{EF}}=0$ for all $a$. In the standard choice of basis this is given by 
\[\ket{\psi_{EF}}=\frac{1}{\sqrt{d}}\sum_{i=0}^{d-1}(-1)^i\ket{i}_E\ket{d-i}_F.\]
The following identities involving $\ket{\psi_{EF}}$ can be derived from the fact that $\bra{\psi_{EF}}M_E\ket{\psi_{EF}}=\frac{1}{d}\tr(M)$ for any single qudit interaction $M$ and the trace formulas provided in \cite{Okubo77}.
\begin{equation}
\label{eq:psiSpsi}
\bra{\psi_{EF}}S_E^a\ket{\psi_{EF}}=0, 
\qquad \bra{\psi_{EF}}S_E^aS_E^b\ket{\psi_{EF}}=\frac{\lambda}{3}\delta_{ab},
\qquad \bra{\psi_{EF}}S_E^aS_E^bS_E^c\ket{\psi_{EF}}=\frac{i\lambda}{6}\epsilon_{abc}
\end{equation}
\begin{equation}
\label{eq:psiSSSSpsi}
\bra{\psi_{EF}}S_E^aS_E^bS_E^cS_E^e\ket{\psi_{EF}}=\frac{\lambda}{15}\left((\lambda-2)\delta_{ac}\delta_{be}+(\lambda+\tfrac{1}{2})(\delta_{ab}\delta_{ce}+\delta_{ae}\delta_{bc})\right)
\end{equation}
In particular the second equation of (\ref{eq:psiSpsi}) shows that the states $\{S_E^a \ket{\psi_{EF}}\}_{a=1}^3$ are orthogonal; in fact they span the space on which $S_E^a+S_F^a$ acts as the 3 dimensional adjoint representation in the decomposition, since $(S_E^a+S_F^a)\ket{\psi_{EF}}=0$ implies $(S_E^a+S_F^a) S_E^b\ket{\psi_{EF}}=[S_E^a,S_E^b]\ket{\psi_{EF}}$.
We can check that $H_0$ has eigenvalue 1 on this space:
\[H_0 S_E^b\ket{\psi_{EF}}=\frac{1}{2}\sum_{a}[S_E^a,[S_E^a,S_E^b]]\ket{\psi_{EF}}=\frac{1}{2}\sum_{a,c,e}-\epsilon_{ace}\epsilon_{abc}S_E^e\ket{\psi_{EF}}=S_E^b\ket{\psi_{EF}}. \]
 Finally we wish to show that the states $\left(\frac{1}{2}\{S_E^b,S_E^c\}-\frac{\lambda}{3}\delta_{bc}\right)\ket{\psi_{EF}}$  are in the 5-dimensional eigenspace of $H_0$ with eigenvalue 3.
\begin{align*}
H_0 S_E^b S_E^c \ket{\psi_{EF}}&=\frac{1}{2}\sum_{a} (S_E^a+S_F^a)(S_E^a+S_F^a) S_E^b S_E^c \ket{\psi_{EF}}
=\frac{1}{2}\sum_{a} [S_E^a,[S_E^a,S_E^b S_E^c]] \ket{\psi_{EF}}\\
&=\frac{1}{2}\sum_{a} \left( [S_E^a,[S_E^a,S_E^b ]]S_E^c+2[S_E^a,S_E^b] [S_E^a,S_E^c ]+S_E^b[S_E^a[S_E^a,S_E^c ]] \right)\ket{\psi_{EF}}\\
&=-\frac{1}{2}\sum_{a,e,f}\left(\epsilon_{abe}\epsilon_{aef}S_E^fS_E^c+2\epsilon_{abe}\epsilon_{acf}S_E^eS_E^f+\epsilon_{ace}\epsilon_{aef}S_E^bS_E^f\right)\ket{\psi_{EF}}\\
&=\left(2S_E^bS_E^c-\sum_{e,f}(\delta_{bc}\delta_{ef}-\delta_{bf}\delta_{ce})S_E^eS_E^f\right)\ket{\psi_{EF}}
=\left(2S_E^bS_E^c-\delta_{bc}\lambda I + S^c_ES^b_E\right)\ket{\psi_{EF}}
\end{align*}
where we have used equation (\ref{eq:LCidentity}) and $\sum_{e}S^eS^e=\lambda I$. This implies that $H_0\left(\frac{1}{2}\{S_E^b,S_E^c\}-\frac{\lambda}{3}\delta_{bc}I\right)\ket{\psi_{EF}}=3\left(\frac{1}{2}\{S_E^b,S_E^c\}-\frac{\lambda}{3}\delta_{bc}I\right)\ket{\psi_{EF}}$ as desired.

\subsection{Simulating $h^2$ with $h$}
\begin{lem}
\label{lem:htoh2}
A Hamiltonian consisting entirely of $SU(2)$ Heisenberg interactions $h$ can simulate a Hamiltonian of the form $\sum_{ij} \alpha_{ij} h_{ij}+\beta_{ij} h_{ij}^2$ for arbitrary $\alpha_{ij},\beta_{ij}\in \R$ and $\beta_{ij}\ge 0 $. 
\end{lem}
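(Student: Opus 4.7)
The plan is to build a fourth-order mediator-qudit gadget out of $h$ interactions alone, using Corollary~\ref{cor:4thorderinterference}. For each pair $(i,j)$ with $\beta_{ij}>0$ I introduce a dedicated pair of mediator qudits $E,F$ and set
\[
  H_0^{(ij)} := h_{EF} + \lambda I,\qquad H_4^{(ij)} := \lambda_4(h_{iE} + h_{jF}),
\]
with a positive coupling $\lambda_4$ to be fixed. The preceding discussion of the spectrum of $h_{EF}$ shows that $H_0^{(ij)}$ has the singlet $\ket{\psi_{EF}}$ as its unique ground state (eigenvalue $0$), the adjoint subspace $\{S^a_E\ket{\psi_{EF}}\}_a$ as first excited space (eigenvalue $1$), and a five-dimensional space at eigenvalue $3$.

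First I verify the hypotheses of Corollary~\ref{cor:4thorderinterference}. From (\ref{eq:psiSpsi}), $\bra{\psi_{EF}}S^a_E\ket{\psi_{EF}}=0$, so $(H_4^{(ij)})_{--}=0$. Using $(S^a_E+S^a_F)\ket{\psi_{EF}}=0$, one finds $H_4^{(ij)}\ket{\psi_{EF}} = \lambda_4\sum_a K^a S^a_E\ket{\psi_{EF}}$ where $K^a := S^a_i - S^a_j$; this vector lies entirely in the adjoint eigenspace, so $H_0^{(ij)}H_4^{(ij)}\Pi_- = H_4^{(ij)}\Pi_-$. The fourth-order contribution of a single gadget is therefore $-\Pi_- H_4 H_4 (H_0)^{-1} H_4 H_4 \Pi_-$.

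Because every $H_4^{(ij)}$ is $SU(2)$-invariant and linear in one $S$-operator on each of $\{i,j\}$, the resulting effective operator on $(i,j)$ is invariant under the diagonal $SU(2)$-action on the pair and of total degree four in $S^a_i,S^a_j$. The only such operators lie in $\linspan\{I,h_{ij},h_{ij}^2\}$, because $h_{ij}^k$ for $k\ge 3$ requires at least six $S$-factors. Decomposing $S^b_E S^a_E\ket{\psi_{EF}} = \tfrac\lambda 3\delta_{ab}\ket{\psi_{EF}} + \tfrac{i}{2}\sum_c\epsilon_{bac}S^c_E\ket{\psi_{EF}} + \text{(5-dim part)}$ into its $H_0$-eigencomponents and using (\ref{eq:psiSSSSpsi}) to evaluate the surviving matrix elements, an explicit calculation yields a strictly positive coefficient for $h_{ij}^2$. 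The residual $I$ and $h_{ij}$ pieces, together with the $(H_2^{(ij)})_{--}$ and $(H_3^{(ij)})_{--}$ required by Lemma~\ref{lem:fourthorder} (which by the same invariance argument lie in $\linspan\{I,h_{ij}\}$), are absorbed by taking $H_2^{(ij)} = c_2 h_{ij} + c_2' I$ and $H_3^{(ij)} = c_3 h_{ij} + c_3' I$ acting trivially on the mediators (and hence automatically block-diagonal), together with $H_1^{(ij)} = \tilde\alpha_{ij} h_{ij}$. Tuning $\lambda_4$ and $\tilde\alpha_{ij}$ independently for each edge realises any $\alpha_{ij}\in\R$ and any $\beta_{ij}\ge 0$; the sign restriction on $\beta_{ij}$ is forced by the overall minus sign in the fourth-order formula together with the positive $h_{ij}^2$ coefficient.

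Finally, when all gadgets run in parallel, Corollary~\ref{cor:4thorderinterference} produces a cross-gadget correction $-\tfrac12\sum_{I\ne J}\Pi_-[H_4^{(I)},H_4^{(J)}]^2\Pi_-$. Gadgets with disjoint supports commute, while two gadgets sharing a single physical qudit give a commutator of the form $\pm i\lambda_4^2\sum_{a,b,c}\epsilon_{abc}S^c_\bullet S^a_\star S^b_\diamond$, where $\star,\diamond$ are mediator qudits from the two distinct gadgets. Squaring, projecting each mediator pair onto its singlet via $\bra{\psi_{EF}}S^a_\star S^{a'}_\star\ket{\psi_{EF}} = \tfrac\lambda 3\delta_{aa'}$, and invoking the Levi-Civita identity (\ref{eq:LCidentity}) together with $\sum_a S^a S^a = \lambda I$, collapses the interference to a constant multiple of the identity --- a harmless energy shift. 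The main technical obstacle is the explicit fourth-order calculation itself: carefully tracking which intermediate components sit in the eigenvalue-$1$ versus eigenvalue-$3$ subspaces of $H_0$ and verifying that the coefficient of $h_{ij}^2$ is nonzero with the expected sign. The $SU(2)$-invariance argument is crucial in drastically reducing the bookkeeping, since it forces the output into the three-dimensional space $\linspan\{I,h_{ij},h_{ij}^2\}$ before any detailed calculation is done.
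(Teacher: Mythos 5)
Your overall plan — a fourth-order mediator gadget via Corollary~\ref{cor:4thorderinterference}, plus an $SU(2)$-invariance argument forcing the effective interaction into $\linspan\{I,h_{ij},h_{ij}^2\}$ — is structurally sound, and the invariance observation is a genuinely useful way to reduce bookkeeping that the paper does not make explicit. The cross-gadget interference analysis is also essentially correct. However, there are two substantive problems.

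First, the stated ``fourth-order contribution'' is incomplete. Lemma~\ref{lem:fourthorder} simulates $\Pi_-\bigl(H_1 + H_4 H_0^{-1} H_2 H_0^{-1} H_4 - H_4 H_0^{-1} H_4 H_0^{-1} H_4 H_0^{-1} H_4\bigr)\Pi_-$, and the term $\Pi_- H_4 H_0^{-1} H_2 H_0^{-1} H_4 \Pi_- = \Pi_- H_4 H_2 H_4 \Pi_-$ (using $H_0^{-1}H_4\Pi_-=H_4\Pi_-$) is \emph{not} negligible: with $H_2$ nontrivial on $(i,j)$ as forced by equation~(\ref{eq:H2H3}), this term is degree four in $S_i,S_j$ and contributes to the coefficient of $h_{ij}^2$. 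You only write $-\Pi_- H_4 H_4 H_0^{-1} H_4 H_4 \Pi_-$. In fact the paper computes the full difference $A-B$ with both pieces, and it is the \emph{combined} coefficient that turns out to be positive. This also undermines your sign argument for $\beta_{ij}\ge0$, which you attribute solely to ``the overall minus sign in the fourth-order formula''.

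Second, your choice $H_4^{(ij)}=\lambda_4(h_{iE}+h_{jF})$ differs from the paper's $H_4=\mu_2(h_{1E}+h_{2E})$ in a way that matters. On the ground space it gives $H_4\Pi_-=\lambda_4\sum_a K^a S_E^a \Pi_-$ with $K^a=S_i^a-S_j^a$, and the $K^a$ do \emph{not} close into a representation of $\mathfrak{su}(2)$: $[K^a,K^b]=i\epsilon_{abc}(S_i^c+S_j^c)$, not proportional to $K^c$. So identity~(\ref{eq:EabcSaSb}), which the paper uses heavily, does not apply; for instance one finds $\sum_{a,b,c}i\epsilon_{abc}K^aK^bK^c=0$, so the third-order term vanishes for your gadget (meaning $H_3$ must be taken to be zero, not a free $c_3 h+c_3'I$). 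These are not fatal — the gadget may well still work — but they mean your computation cannot simply borrow the paper's algebra, and the ``explicit calculation yields a strictly positive coefficient for $h_{ij}^2$'' is precisely what needs to be established and is not. You acknowledge this (``the main technical obstacle is the explicit fourth-order calculation itself''), but that calculation is the actual content of the lemma; without it the proof has a gap.

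Incidentally, the paper's choice $\widetilde{S}^a=S_1^a+S_2^a$ (so that $\sum_a\widetilde S^a\widetilde S^a=2(h_{12}+\lambda I)$ and the $\widetilde S^a$ \emph{do} form a representation) is what makes the algebra tractable, culminating in the explicit form $A-B=\mu_2^4\tfrac{\lambda}{135}\bigl(4(11\lambda+3)h_{12}^2+\cdots\bigr)\Pi$ whose $h^2$ coefficient is manifestly positive. If you want to pursue your variant gadget, you would need to reproduce this level of detail with $K^a$ in place of $\widetilde{S}^a$, being careful about the missing $A$ term, before the conclusion is established.
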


\begin{proof}
To apply an arbitrary interaction of the form $\alpha h +\beta h^2$ across qudits 1 and 2, we will use a mediator gadget with a pair of mediator qudits labelled $E,F$ under the heavy interaction $H_0=h_{EF}+\lambda I$ for $\lambda=\frac{d^2-1}{4}$ as in the previous section. Let $\Pi=\proj{\psi_{EF}}$ be the projector onto the ground state of $H_0$.

This will be a fourth-order gadget so we must define Hamiltonians $H_1,H_2,H_3,H_4$ in order to apply Lemma~\ref{lem:fourthorder}.
Let \[H_4=\mu_2(h_{1E}+h_{2E})=\mu_2\sum_{a}(S_1^a+S_2^a)S_E^a=\mu_2\sum_{a}\widetilde{S}^aS_E^a, \] where  $\widetilde{S}^a =S_1^a+S_2^a$, and let $H_1=\mu_1 h_{12}$, $H_2=\frac{2\mu_2^2\lambda}{3}(h_{12}+\lambda I)$, and  $H_3=-\frac{\mu_2^3\lambda}{3}(h_{12}+\lambda I)$, where $\mu_1$, $\mu_2$ are real coefficients to be chosen later.
Note that $h_{12}+\lambda I = \frac{1}{2} \sum_a \widetilde{S}^a \widetilde{S}^a$.
$H_1,H_2,H_3$ all commute with $\Pi$, so are block diagonal with respect to the split $ \mathcal{H}_- \oplus \mathcal{H}_+$.
We can use equation (\ref{eq:psiSpsi}) to check that the remaining condition of Lemma~\ref{lem:fourthorder} is satisfied,
\[\Pi H_4 \Pi=\mu_2\sum_{a}(S_1^a+S_2^a)\bra{\psi_{EF}}S_E^a\ket{\psi_{EF}}\Pi=0.\]
Since $H_0 S_E^b\ket{\psi_{EF}}= S_E^b\ket{\psi_{EF}}$ (when viewing $H_0$ as an operator only on $E$ and $F$), we have $H_0 H_4 \Pi=H_4\Pi$. This significantly simplifies the calculations required to determine the effective interaction produced using Lemma~\ref{lem:fourthorder}:
\begin{align*}
\Pi H_4 H_0^{-1}H_4 \Pi&=\Pi (H_4)^2 \Pi=\mu_2^2\sum_{a,b} (S_1^a+S_2^a)(S_1^b + S_2^b)\bra{\psi_{EF}}S_E^aS_E^b\ket{\psi_{EF}}\Pi\\
&=\frac{\mu_2^2\lambda}{3}\sum_{a,b} \delta_{ab}\widetilde{S}^a \widetilde{S}^b\Pi
=\frac{2\mu_2^2\lambda}{3}(h_{12}+\lambda I)\Pi=\Pi H_2 \Pi;
\end{align*}
\begin{align*}
\Pi H_4 H_0^{-1}H_4 H_0^{-1}H_4 \Pi&=\Pi (H_4)^3 \Pi=\mu_2^3\sum_{a,b,c} (S_1^a+S_2^a)(S_1^b + S_2^b)(S_1^c+S_2^c)\bra{\psi_{EF}}S_E^aS_E^bS_E^c\ket{\psi_{EF}}\Pi\\
&=\frac{\mu_2^3\lambda}{6}\sum_{a,b,c}i\epsilon_{abc}\widetilde{S}^a \widetilde{S}^b \widetilde{S}^c\Pi
=\frac{\mu_2^3\lambda}{6}\sum_{c}  \widetilde{S}^c  \widetilde{S}^c \Pi=\frac{\mu_2^3\lambda}{3}(h_{12}+\lambda I) \Pi\\
&=-\Pi H_3\Pi.
\end{align*}
In the final set of equations we have used the following useful identity which holds for any operators $\widetilde{S}^a$ which form a representation of $\mathfrak{su}(2)$ and thus satisfy $[\widetilde{S}^a,\widetilde{S}^b]=\sum_c i\epsilon_{abc} \widetilde{S}^c$:
\begin{align}
\label{eq:EabcSaSb}
\sum_{a,b}i\epsilon_{abc}\widetilde{S}^a\widetilde{S}^b&=\sum_{a,b}\frac{i}{2}\left(\epsilon_{abc}\widetilde{S}^a\widetilde{S}^b +\epsilon_{bac}\widetilde{S}^b\widetilde{S}^a\right)=\frac{i}{2}\sum_{a,b}\epsilon_{abc}[\widetilde{S}^a,\widetilde{S}^b]\\
&=-\frac{1}{2}\sum_{a,b}\epsilon_{abc}\epsilon_{abe}\widetilde{S}^e=-\delta_{ce}\widetilde{S}^e=-\widetilde{S}^c.
\end{align}

Let $A=\Pi H_4 H_0^{-1}H_2 H_0^{-1}H_4 \Pi$ and $B=\Pi H_4 H_0^{-1}H_4 H_0^{-1}H_4 H_0^{-1}H_4 \Pi$, so that by Lemma~\ref{lem:fourthorder}  $\Delta H_0 + \Delta^{3/4} H_4 + \Delta^{1/4} H_3 + \Delta^{1/2}H_2+H_1$ simulates $(H_1)_{--}+A-B$. 
First we calculate $A$ using equation (\ref{eq:psiSpsi}) to find
\[A=\Pi H_4 H_2 H_4 \Pi=\frac{\mu_2^4\lambda}{3}\sum_{a,b,c} \widetilde{S}^a\widetilde{S}^b \widetilde{S}^b \widetilde{S}^c \bra{\psi_{EF}}S_E^aS_E^c\ket{\psi_{EF}}\Pi
=\frac{\mu_2^4\lambda^2}{9}\sum_{a,b}\widetilde{S}^a\widetilde{S}^b \widetilde{S}^b \widetilde{S}^a \Pi.\]
Calculating $B$ is more complicated:
\[B=\Pi (H_4)^2 H_0^{-1}(H_4)^2 \Pi=\mu_2^4\sum_{a,b,c,e}\widetilde{S}^a\widetilde{S}^b \widetilde{S}^c \widetilde{S}^e  \bra{\psi_{EF}}S_E^aS_E^bH_0^{-1}S_E^cS_E^e\ket{\psi_{EF}}\Pi. \]
We therefore need to calculate $ \bra{\psi_{EF}}S_E^aS_E^bH_0^{-1}S_E^cS_E^e\ket{\psi_{EF}}$, which can be done by recalling from above that $\left(\frac{1}{2}\{S_E^b,S_E^c\}-\frac{\lambda}{3}\delta_{bc}I\right)\ket{\psi_{EF}}$ is in the eigenspace of $H_0$ with eigenvalue 3, and $[S_E^b,S_E^c]=\sum_e f_{bce} S_E^e$ for some coefficients $f_{bce}$, so $[S_E^b,S_E^c]\ket{\psi_{EF}}$ is in the eigenspace of $H_0$ with eigenvalue 1. Then we have
\begin{align*}
\bra{\psi_{EF}}S_E^aS_E^bH_0^{-1}S_E^cS_E^e\ket{\psi_{EF}}&=\bra{\psi_{EF}}S_E^aS_E^bH_0^{-1}\left(\frac{1}{2}\{S_E^c,S_E^e\}-\frac{\lambda}{3}\delta_{ce}I +\frac{1}{2}[S_E^c,S_E^e]+\frac{\lambda}{3}\delta_{ce}I\right)\ket{\psi_{EF}}\\
&=\bra{\psi_{EF}}S_E^aS_E^b\left(\frac{1}{3}\left(\frac{1}{2}\{S_E^c,S_E^e\}-\frac{\lambda}{3}\delta_{ce}I \right) +\frac{1}{2}[S_E^c,S_E^e]\right)\ket{\psi_{EF}}\\
&=\bra{\psi_{EF}}S_E^aS_E^b\left(\frac{2}{3}S_E^cS_E^e-\frac{1}{3}S_E^eS_E^c-\frac{\lambda}{9}\delta_{ce} I \right)\ket{\psi_{EF}}\\
&=\frac{\lambda}{45}\left( (\lambda-\tfrac{9}{2})\delta_{ac}\delta_{be}+(\lambda+3)\delta_{ae}\delta_{bc}+(\tfrac{1}{2}-\tfrac{2}{3}\lambda)\delta_{ab}\delta_{ce}\right)
\end{align*}
where we have used equations (\ref{eq:psiSpsi}) and (\ref{eq:psiSSSSpsi}) in the last equality.
And so we have
\begin{align*}
A-B=\frac{\mu_2^4\lambda}{45}\sum_{a,b}\left( (\tfrac{9}{2}-\lambda)\widetilde{S}^a\widetilde{S}^b \widetilde{S}^a \widetilde{S}^b+(4\lambda-3)\widetilde{S}^a\widetilde{S}^b \widetilde{S}^b \widetilde{S}^a+(\tfrac{2}{3}\lambda-\tfrac{1}{2})\widetilde{S}^a\widetilde{S}^a \widetilde{S}^b \widetilde{S}^b\right).
\end{align*}
Then we substitute in the following relations which are an easy consequence of equation (\ref{eq:EabcSaSb}): 
\begin{align*}\sum_{a,b}\widetilde{S}^a\widetilde{S}^b\widetilde{S}^a\widetilde{S}^b
&=\sum_{a,b}\left(\widetilde{S}^a\widetilde{S}^a\widetilde{S}^b\widetilde{S}^b+ \widetilde{S}^a[\widetilde{S}^b,\widetilde{S}^a]\widetilde{S}^b\right) 
=\sum_{a,b}\left(\widetilde{S}^a\widetilde{S}^a\widetilde{S}^b\widetilde{S}^b+\sum_c i\epsilon_{bac}\widetilde{S}^a\widetilde{S}^c\widetilde{S}^b\right) \\
&=\sum_{a,b}\widetilde{S}^a\widetilde{S}^a\widetilde{S}^b\widetilde{S}^b -\sum_c\widetilde{S}^c\widetilde{S}^c,
\end{align*}
\begin{align*}\sum_{a,b}\widetilde{S}^a\widetilde{S}^b\widetilde{S}^b\widetilde{S}^a
&=\sum_{a,b}\left(\widetilde{S}^a\widetilde{S}^b\widetilde{S}^a\widetilde{S}^b+ \widetilde{S}^a\widetilde{S}^b[\widetilde{S}^b,\widetilde{S}^a]\right) =\sum_{a,b}\left(\widetilde{S}^a\widetilde{S}^b\widetilde{S}^a\widetilde{S}^b+\sum_c i\epsilon_{bac}\widetilde{S}^a\widetilde{S}^b\widetilde{S}^c\right)\\ &=\sum_{a,b}\widetilde{S}^a\widetilde{S}^b\widetilde{S}^a\widetilde{S}^b +\sum_c\widetilde{S}^c\widetilde{S}^c
=\sum_{a,b}\widetilde{S}^a\widetilde{S}^a\widetilde{S}^b\widetilde{S}^b
\end{align*}
%
%and similarly $\sum_{a,b}\widetilde{S}^a\widetilde{S}^b\widetilde{S}^a\widetilde{S}^b =\sum_{a,b}\widetilde{S}^a\widetilde{S}^a\widetilde{S}^b\widetilde{S}^b +\sum_c\widetilde{S}^c\widetilde{S}^c$ 
to get 
\begin{align*}
A-B&=\frac{\mu_2^4\lambda}{45}\left(\left(\frac{11}{3}\lambda+1\right)\sum_{a,b}\widetilde{S}^a\widetilde{S}^a\widetilde{S}^b\widetilde{S}^b +\left(\lambda-\frac{9}{2}\right)\sum_ c\widetilde{S}^c\widetilde{S}^c \right)\\
&=\mu_2^4\tfrac{\lambda}{135}\left(4(11\lambda+3)h_{12}^2+(88\lambda^2+30\lambda-27)h_{12}+(44\lambda^2+18\lambda-27)\lambda I \right)\Pi
\end{align*}
where we have used $\sum_c \widetilde{S}^c \widetilde{S}^c=2(h_{12}+\lambda I)$.

Let $\mu_1=\alpha-\mu_2^4\frac{\lambda}{135}(88\lambda^2+30\lambda-27)$ and  $\mu_2=(135\beta/4(11\lambda^2+3\lambda))^{1/4}$, noting that $11\lambda^2+3\lambda$ is positive for all $d\ge2$. Then by Lemma~\ref{lem:fourthorder} we simulate $\Pi H_1 \Pi +A-B=(\alpha h_{12}+\beta h_{12}^2 +c I ) \Pi$ for some $c \in \R$.

Finally, since this is a fourth-order gadget, we must check if there is any cross-gadget interference when we use multiple gadgets in parallel. 
Let $\Pitot$ be the projector onto the ground space of all gadgets being applied in parallel. 
By Corollary~\ref{cor:4thorderinterference}, the interference between gadgets $i$ and $j$ is given by 
\[-\frac{1}{2} \Pitot\left[H_4^{(i)},H_4^{(j)}\right]^2 \Pitot. \]
If $H_4^{(i)}$ and $H_4^{(j)}$ commute then clearly there is no interference. Assume without loss of generality that gadget $i$ simulates an interaction between qudits  $1$ and $2$ with $H_4^{(i)}=\mu_2^{(i)}( h_{1E_i}+h_{2E_i})$ and gadget $j$ simulates an interaction between qudits $1$ and $3$ with $H_4^{(j)}=\mu_2^{(j)} (h_{1E_j}+h_{3E_j})$. Normalising by a factor of $(\mu_2^{(i)})^2(\mu_2^{(j)})^2$ for convenience, the cross-gadget interference is proportional to 
\begin{align*}
-\frac{1}{2(\mu_2^{(i)})^2(\mu_2^{(j)})^2} \Pitot\left[H_4^{(i)},H_4^{(j)}\right]^2 \Pitot
&=-\frac{1}{2}  \Pitot \left[ \sum_{a}(S_1^a+S_2^a)S_{E_i}^a,\sum_b (S_1^b+S_3^b)S_{E_j}^b\right]^2\Pitot\\
&=-\frac{1}{2}  \sum_{a,b,c,e}[S_1^a,S_1^b] [S_1^c,S_1^e]\Pitot S_{E_i}^a S_{E_i}^c  S_{E_j}^b S_{E_j}^e\Pitot\\
&=-\frac{1}{2} \frac{\lambda^2}{9}\sum_{a,b}[S_1^a,S_1^b][S_1^a,S_1^b]
=\frac{\lambda^2}{18} \sum_{a,b,c,e}\epsilon_{abc}\epsilon_{abe}S_1^cS_1^e\\
&=\frac{\lambda^2}{9} \sum_{c}S_1^cS_1^c=\frac{\lambda^3}{9}I
\end{align*}
where we have used equation (\ref{eq:psiSpsi}) in the third equality. Therefore the cross-gadget interference is proportional to the identity, which corresponds only to an unimportant energy shift, and so can be ignored.
\end{proof}

\subsection{$h^2$ and $h$ simulate qutrit Heisenberg, which simulates $h^2$ and $h$}
\label{sec:h2h}

Let $C$ be the Casimir operator corresponding to the $\{S^a_1 + S^a_2\}_a$ representation of $\mathfrak{su}(2)$. Given access to $h^2$ and $h$ interactions, we can produce the two-qudit interaction 
\begin{align*}
(C-2I)^2 &=\left(\sum_a (S_1^a+S_2^a)(S_1^a+S_2^a)-2I\right)^2=(2h_{12}+2\lambda I-2I)^2\\
&=4\left(h_{12}^2+2(\lambda-1)h_{12}+(\lambda-1)^2 I\right),
\end{align*}
where as before $\lambda = (d^2-1)/4$. This operator is clearly positive semidefinite and has eigenvalue zero only on the 3-dimensional representation in the decomposition (\ref{eq:su2tensordecomp}), since the 3-dimensional representation has Casimir eigenvalue 2.
We will use this 3-dimensional space to encode a logical qutrit. For any 4 qudits $(1,2)$, $(3,4)$, where each pair is restricted to this space, the operator
\[h_{13}+h_{14}+h_{23}+h_{24}=\sum_a (S_1^a+S_2^a)(S_3^a+S_4^a)\]
%\[\Pi\left(h_{13}+h_{14}+h_{23}+h_{24}\right)\Pi=\Pi\left(\sum_a (S_1^a+S_2^a)(S_3^a+S_4^a)\right)\Pi\]
acts as a logical qutrit $SU(2)$ Heisenberg interaction.
So by Lemma~\ref{lem:firstorder} we can simulate any qutrit Hamiltonian of $SU(2)$ Heisenberg interactions using qudit interactions $h$ and $h^2$. Then, by Lemma~\ref{lem:htoh2}, it is possible to simulate any Hamiltonian $H=\sum_{ij} \alpha_{ij} h'_{ij}+\beta_{ij} (h'_{ij})^2$, where $\beta_{ij}\ge 0$ and now $h'$ and $(h')^2$ are the corresponding qutrit interactions. 
In particular one can set $\alpha_{ij}=\beta_{ij}$ and simulate $\sum_{ij} \beta_{ij}( h'_{ij}+(h'_{ij})^2)$. Then $h'+(h')^2$ is the $SU(3)$ Heisenberg interaction, which is universal by Theorem~\ref{thm:sud} (even with non-negative weights). This completes the proof of the following theorem:

\begin{repthm}{thm:su2}
For any $d \ge 2$, the $SU(2)$ Heisenberg interaction $h= S^x \otimes S^x + S^y \otimes S^y + S^z \otimes S^z$, where $S^x$, $S^y$, $S^z$ are representations of the Pauli matrices $X$, $Y$, $Z$, is universal.
\end{repthm}

\section{Bilinear-biquadratic interaction in dimension 3}

We finally consider an important variant of the $SU(2)$ Heisenberg model: the bilinear-biquadratic spin-1 Heisenberg model (i.e.\ in local dimension 3). Write $X_3$, $Y_3$, $Z_3$ for matrices such that $\{iX_3,iY_3,iZ_3\}$ generate a $3$-dimensional irreducible representation of $\mathfrak{su}(2)$. For example, we can take
\[ X_3 = \frac{1}{\sqrt{2}} \begin{pmatrix} 0 & 1 & 0\\ 1& 0& 1\\ 0 & 1 & 0\end{pmatrix},\;\;\;\;Y_3 = \frac{i}{\sqrt{2}} \begin{pmatrix} 0 & -1 & 0\\ 1& 0& -1\\ 0 & 1 & 0\end{pmatrix},\;\;\;\;Z_3 = \begin{pmatrix} 1 & 0 & 0\\ 0& 0& 0\\0 & 0 & -1\end{pmatrix}; \]
note that these obey the same commutation relations as the Pauli matrices (up to a scaling constant). Then the Heisenberg interaction is
\[ h = X_3 \otimes X_3 + Y_3 \otimes Y_3 + Z_3 \otimes Z_3. \]
Consider the algebra generated by $h$. We have $h^3 = h - 2 h^2 + 2 I$, so up to scaling and an identity term any nontrivial interaction in this algebra can be written as $h^{(\theta)} := (\cos \theta) h + (\sin \theta) h^2$ for some $\theta$.
Let $\alpha=\cos \theta$ and $\beta=\sin \theta$. Because of our freedom to choose the signs of interactions, we can further assume that $0 \le \theta \le \pi$, and thus $\beta \ge 0$. Then any Hamiltonian produced from such interactions can be written, up to an overall identity term, as
\[ H = \sum_{i < j} a_{ij} h^{(\theta)}_{ij}. \]
This model is known as the (general) bilinear-biquadratic Heisenberg model and has been a popular object of study~\cite{Affleck89,Harada02,Kennedy90,Lauchli06}. The special case $\theta = \arctan 1/3$ is the interaction proportional to $h + \frac{1}{3}h^2$ occurring in the famous AKLT model~\cite{AKLT}, which was handled in Lemma \ref{lem:AKLTmodel}. We also already showed that the cases $\theta \in \{0,\pi/4\}$ are universal in the previous section ($\pi/4$ corresponds to the $SU(3)$ Heisenberg interaction); here we prove universality for all other values of $\theta$.

It is easy to check that $h$ has three eigenspaces with eigenvalues $-2$, $-1$, 1 and dimensions 1, 3, 5 respectively. 
Therefore $h^{(\theta)}$ has eigenvalues $4\beta-2\alpha, \beta-\alpha,\beta+\alpha$ with respect to the same eigenspaces.
In addition, $h^2$ is proportional to the projector onto $\ket{\psi}=\ket{02}-\ket{11}+\ket{20}$ plus a multiple of the identity.
Depending on $\theta$, $h^{(\theta)}$ has the following properties:
\begin{itemize}
\item $\theta = 0$: $h^{(\theta)} = h$. The Heisenberg model.
\item $0 < \theta < \arctan 1/3$: ground state nondegenerate and equal to $\ket{02}-\ket{11}+\ket{20}$.
\item $\theta = \arctan 1/3$: ground space 4-fold degenerate (the AKLT model).
\item $\arctan 1/3 < \theta < \pi/2$: ground space 3-fold degenerate and spanned by
\be \label{eq:basis3dim} \{\ket{01}-\ket{10},\ket{12}-\ket{21},\ket{02}-\ket{20}\}. \ee
\item $\theta = \pi/2$: ground space 8-fold degenerate and the orthogonal complement of $\ket{02}-\ket{11}+\ket{20}$. The case $h^{(\theta)} = h^2$.
\item $\pi/2 < \theta < \pi$: ground space 5-fold degenerate. 
\end{itemize}

The special case $\theta = \pi/4$ gives the qutrit swap operator (up to rescaling and subtracting an identity term), which is in addition $SU(3)$-invariant. For $\theta > \pi/4$, the highest energy state is nondegenerate and is $\ket{02}-\ket{11}+\ket{20}$.

\subsection{Mediator gadget}
We first consider the case where the state $\ket{\psi}=\ket{02}-\ket{11}+\ket{20}$ is either the unique ground state or highest excited state of $h^{(\theta)}$.
\begin{lem}
\label{lem:mediatorsu2}
Let $\theta \in (0,\arctan 1/3) \cup (\pi/4, \pi)\setminus \{\arctan 2\}$. Then $h^{(\theta)}$ is universal.
\end{lem}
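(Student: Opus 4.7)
The plan is to build a second-order mediator gadget on two extra qudits that effectively projects them onto the entangled state $\ket{\psi}=\ket{02}-\ket{11}+\ket{20}$, and to show that the resulting effective interaction between two system qudits is of the form $h^{(\theta')}$ for some $\theta'\neq\theta$. Once this is in hand, real linear combinations of $h^{(\theta)}$ and $h^{(\theta')}$ span the whole two-dimensional subspace $\{a h + b h^2 : a,b \in \R\}$; in particular we can simulate the $SU(3)$ Heisenberg interaction $h+h^2$ on every edge and invoke Theorem~\ref{thm:sud}.

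First I would set up the gadget. Label two mediator qudits $3,4$ and take the heavy term to be $H_0 = \pm(h^{(\theta)}_{34} - E_\psi I)$ with $E_\psi = \bra{\psi}h^{(\theta)}\ket{\psi}$, the overall sign chosen so that $\ket{\psi}_{34}$ becomes the unique zero-energy ground state of $H_0$; this is exactly what the non-degeneracy hypothesis on $\theta$ guarantees. Attach two system qudits $1,2$ through the symmetric perturbation $H_2 = h^{(\theta)}_{13} + h^{(\theta)}_{24}$. Using $\bra{\psi}(M\otimes I)\ket{\psi} = \tfrac{1}{3}\tr(M)\,I$ for any single-qudit $M$, a short computation shows that $(H_2)_{--}$ is a scalar multiple of $\Pi$; absorbing this constant as a global energy shift, the hypotheses of Lemma~\ref{lem:secondorder} are satisfied, and the simulated effective Hamiltonian on qudits $1,2$ is $-\Pi H_2 H_0^{-1} H_2 \Pi$.

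Next I would identify the form of this effective interaction. By $SU(2)$-invariance --- $\ket{\psi}$ is the spin-$0$ singlet on two spin-$1$ particles, and every $h^{(\theta)}$ commutes with the total spin --- the produced operator on qudits $1,2$ must itself be $SU(2)$-invariant, and hence of the form $A(\theta)\,h_{12} + B(\theta)\,h_{12}^2 + C(\theta)\,I$. Any nominally $1$-local byproducts coming from the $\Pi h^{(\theta)}_{13} H_0^{-1} h^{(\theta)}_{13} \Pi$-type terms are $SU(2)$-invariant single-qutrit operators, hence proportional to the identity, and absorb into $C$. Writing this as $\kappa(\theta)\,h^{(\theta')}$ modulo identity, the angle $\theta'$ is determined by $\tan\theta' = B(\theta)/A(\theta)$, which can be computed in closed form from the two non-zero eigenvalues of $H_0$ and the matrix elements of $h^{(\theta)}$ between $\ket{\psi}$ and the $3$- and $5$-dimensional excited subspaces of $h^{(\theta)}_{34}$.

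The main obstacle is then the purely algebraic verification that $\theta' \neq \theta$ throughout the stated ranges. I expect the condition $\theta'=\theta$ to reduce, after substituting in the explicit eigenvalue gaps of $h^{(\theta)}$, to a low-degree polynomial equation in $\sin\theta,\cos\theta$ whose only solution in $(0,\arctan 1/3)\cup(\pi/4,\pi)$ is $\theta=\arctan 2$, which is exactly the excluded value. Once this is checked, $\{h^{(\theta)}, h^{(\theta')}\}$ is a basis for $\operatorname{span}\{h,h^2\}$, so the bare interaction $h^{(\theta)}_{ij}$ combined with a suitably weighted gadget output $h^{(\theta')}_{ij}$ simulates $h_{ij} + h_{ij}^2$ on any edge $(i,j)$, giving a simulation of an arbitrary $SU(3)$ Heisenberg Hamiltonian whose universality is Theorem~\ref{thm:sud}. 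One minor point to check is that when $\ket{\psi}$ is the highest rather than lowest eigenvector of $h^{(\theta)}$ (i.e.\ in the upper range $\theta\in(\pi/4,\pi)$) the opposite sign of $H_0$ must be used: this only flips the overall signs of both $A$ and $B$ simultaneously and hence leaves $\theta'$, and the whole argument, intact.
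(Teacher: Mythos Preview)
Your overall plan matches the paper's: use a second-order mediator gadget whose heavy term pins the two ancillas to the singlet $\ket{\psi}$, obtain an effective two-qudit interaction that is $SU(2)$-invariant and hence a combination of $h$, $h^2$ and $I$, and then linearly combine with the bare $h^{(\theta)}$ to reach the $SU(3)$ Heisenberg interaction. The symmetry argument you give for the form $A(\theta)h+B(\theta)h^2+C(\theta)I$ is correct and is a clean shortcut the paper does not take (it computes explicitly).

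However, your specific choice $H_2=h^{(\theta)}_{13}+h^{(\theta)}_{24}$ --- coupling the two system qudits to \emph{different} mediators --- is not equivalent to the paper's choice $H_2=h^{(\theta)}_{13}+h^{(\theta)}_{23}$, and this difference is fatal at $\theta=\pi/2$. Using $S^a_4\ket{\psi}=-S^a_3\ket{\psi}$ one finds that $h^{(\theta)}_{24}\ket{\psi}=\bigl[(\beta-\alpha)h_{23}+\beta h^2_{23}\bigr]\ket{\psi}$, so the $V_1$-component of $h^{(\theta)}_{24}\ket{\psi}$ carries coefficient $(\tfrac{\beta}{2}-\alpha)$ rather than $(\alpha-\tfrac{\beta}{2})$. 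Carrying through the second-order computation, the $2$-local part of your effective interaction is
\[
\frac{4(\alpha-\beta/2)^2}{3E_1}\,h_{12}\;-\;\frac{\beta^2}{3E_2}\,h_{12}\;-\;\frac{2\beta^2}{3E_2}\,h_{12}^2,
\]
whereas the paper's gadget gives the same expression with the first term's sign flipped. Setting this proportional to $\alpha h+\beta h^2$ yields, for your gadget, the factorisation $2\alpha(3\alpha-4\beta)(2\alpha-\beta)=0$: your $\theta'$ coincides with $\theta$ not only at $\theta=\arctan 2$ but also at $\alpha=0$, i.e.\ $\theta=\pi/2$. Since $\pi/2$ lies in the range of the lemma and is \emph{not} covered by Lemma~\ref{lem:logicalsu2} (as $\pi/2>\arctan 5$), your argument leaves a genuine hole there. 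For the paper's choice the corresponding condition reduces to $3\alpha^2-5\alpha\beta+3\beta^2=0$ (no real roots) together with $2\alpha-\beta=0$, so $\arctan 2$ is the only degenerate angle.

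Two smaller points. First, because the perturbation enters quadratically, a single gadget only produces one sign of the $h^{(\theta')}$ coefficient; the paper therefore also uses the variant $H_2=\lambda_2(h^{(\theta)}_{13}-h^{(\theta)}_{23})$ to obtain the opposite sign and hence truly arbitrary real combinations of $h^{(\theta)}$ and $h^{(\theta')}$. You should do the same. Second, your claim that the degenerate-angle computation will pin down exactly $\arctan 2$ is precisely the step that depends on the gadget design and cannot be left as an expectation --- as the above shows, it fails for your particular $H_2$.
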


\begin{proof}
Our strategy will be to use a second-order gadget via Lemma \ref{lem:secondorder} to implement the effective interaction $h^{(\theta')}$ for any choice of $\theta'$. In particular this allows us to simulate the interaction $h^{(\pi/4)}$ which is the qutrit swap operator -- the unique $SU(3)$ invariant interaction shown to be universal in  Theorem~\ref{thm:sud}.
 To use this approach, we need to define Hamiltonians $H_0$, $H_1$, $H_2$ on a system of 4 qutrits. We label these qutrits $1,2,3,4$ where qutrits 3 and 4 are mediator qutrits, and the effective interaction $h^{(\theta')}$  is simulated on qutrits 1 and 2.

The condition on $\theta$ implies that $\beta>0$ and $\alpha> 3\beta$ or $\alpha< \beta$. 
Consider the operator $h^{(\theta)}+(2\alpha-4\beta)I$, which annihilates $\ket{\psi}=\ket{02}-\ket{11}+\ket{20}$, and has eigenvalues $\alpha-3\beta$ and $3\alpha-3\beta$ on the two eigenspaces of $h$ with dimension 3 and 5 respectively, which in turn correspond to eigenvalues $-1$ and $+1$.
If $\alpha> 3\beta$ then both of these eigenvalues are positive and we set $H_0=h^{(\theta)}_{34}+(2\alpha-4\beta)I$, while if $\alpha<\beta$ then both of these eigenvalues are negative and the proof will continue analogously with $H_0=-(h^{(\theta)}_{34}+(2\alpha-4\beta)I)$.

In either case, $\Pi=I \otimes \ket{\psi_{34}}\bra{\psi_{34}}$ is the projector onto the ground space of $H_0$.
Let $H_1=\lambda_1 h^{(\theta)}_{12}$ for some $\lambda_1 \in \R$, so that $H_1$ commutes with $\Pi$, and $\Pi H_1\Pi =\lambda_1 h_{12}^{(\theta)} \Pi$. 
 Then we choose
 \[H_2=\lambda_2 \left(h^{(\theta)}_{13}+h^{(\theta)}_{23}-\frac{8\beta}{3} I\right)= \lambda_2(\alpha-\beta/2)A+\lambda_2\beta B\] where $A=h_{13}+h_{23}$, $B=h_{13}^2+\tfrac{1}{2}h_{13}+ h_{23}^2+\tfrac{1}{2}h_{23}-\tfrac{8}{3}I$, and $\lambda_2 \in \R$. 
 It is easy to check that for any $\ket{\phi_{12}}$, $h_{13}\ket{\phi_{12}}\ket{\psi_{34}}$ and $h_{23}\ket{\phi_{12}}\ket{\psi_{34}}$ are in the eigenspace of $h_{34}$ with eigenvalue $-1$, and therefore that $A\Pi$ has support only on the eigenspace of $H_0$ with eigenvalue $\alpha-3\beta$.
 Similarly, one can check that $(h_{13}^2 +\frac{1}{2}h_{13}-\frac{4}{3}I)\ket{\phi_{12}}\ket{\psi_{34}}$ and $(h_{23}^2 +\frac{1}{2}h_{23}-\frac{4}{3}I)\ket{\phi_{12}}\ket{\psi_{34}}$ are in the eigenspace of $h_{34}$ with eigenvalue $+1$, which implies that $B\Pi$ has support only on the eigenspace of $H_0$ with eigenvalue $3\alpha-3\beta$. 
 
Therefore neither $A\Pi$ or $B\Pi$ have support on the eigenspace of $H_0$ with eigenvalue 0, and so $\Pi H_2 \Pi =0$ as required to apply Lemma \ref{lem:secondorder}. The second-order term is given by
\[\Pi H_2 H_0^{-1} H_2 \Pi=\lambda_2^2\frac{(\alpha-\beta/2)^2}{\alpha-3\beta}\Pi A^2\Pi +\lambda_2^2\frac{\beta^2}{3\alpha-3\beta}\Pi B^2\Pi.\]
Calculating $\Pi A^2 \Pi$ and $\Pi B^2 \Pi$ separately we find that
\[\Pi A^2 \Pi=\Pi (h_{13}^2 +h_{13}h_{23}+h_{23}h_{13}+h_{23}^2)\Pi=\frac{4}{3}(2I +h_{12})\Pi\]
\[\Pi B^2\Pi=\left(\frac{2}{3}h_{12}^2+\frac{1}{3}h_{12}+\frac{2}{9}I\right)\Pi.\]
Then by Lemma~\ref{lem:secondorder}, we can simulate the interaction
\[\Pi H_1 \Pi - \Pi H_2H_0^{-1}H_2\Pi =\left(\lambda_1 h^{(\theta)}_{12}+\lambda_2^2 \tilde{h}_{12}(\theta)\right)\Pi\]
where 
\[\tilde{h}_{12}(\theta)=\frac{2}{9( \alpha - \beta)}\left(\beta^2 h_{12}^2+\frac{6 \alpha^3 - 12 \alpha^2 \beta + 8 \alpha \beta^2 - 3 \beta^3}{\alpha-3\beta}h_{12}+\frac{2 (18 \alpha^3 - 36 \alpha^2 \beta + 23 \alpha \beta^2 - 6 \beta^3)}{3(\alpha-3\beta)}I\right).\]
By repeating the same calculation with $H_2=\lambda_2(h^{(\theta)}_{13}-h^{(\theta)}_{23})$, it is possible to simulate the interaction $\lambda_1 h_{12}^{(\theta)}-\lambda_2^2 \tilde{h}_{12}(\theta)$ instead. For all $\theta$ satisfying the conditions in the lemma, it is easy to check that the 2-local part of $\tilde{h}_{12}(\theta)$ is linearly independent of $h_{12}^{(\theta)}$. So, by choosing $\lambda_1$, $\lambda_2$ appropriately, we can use this gadget to simulate any desired interaction $h^{(\theta')}$ (with an arbitrary weight), and in particular the case $\theta' = \pi/4$.
\end{proof}

\subsubsection{Logical qubit gadget}

In the next case we consider, $h^{(\theta)}$ has a 3-dimensional ground space.

\begin{lem}
\label{lem:logicalsu2}
Let $\theta \in (\arctan 1/3,\arctan 5)$. Then $h^{(\theta)}$ is universal.
\end{lem}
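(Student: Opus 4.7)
My plan is to construct a gadget that encodes a logical qutrit within the three-dimensional ground space of $h^{(\theta)}$ acting on a pair of physical qutrits, and then use first-order perturbation theory to produce the $SU(3)$ Heisenberg interaction on the encoded logical qutrits, from which universality will follow by Theorem~\ref{thm:sud}.

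First I would observe that for $\theta \in (\arctan 1/3, \pi/2)$ (which contains the range of interest), the eigenvalue analysis preceding the lemma shows that $\beta - \alpha < \min\{\alpha+\beta,\, 4\beta - 2\alpha\}$, so the unique ground space of $h^{(\theta)}$ is the three-dimensional antisymmetric subspace spanned by the basis in~(\ref{eq:basis3dim}). This subspace is naturally isomorphic to $\C^3$ and transforms as the conjugate representation $\bar{3}$ of $SU(3)$. I would then apply Lemma~\ref{lem:firstorder} using a heavily weighted $h^{(\theta)}$ interaction on each disjoint pair $(2k-1,2k)$ of physical qutrits (absorbing the constant shift needed to make the ground-space energy zero and the gap at least~$1$). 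Each such pair then encodes a logical qutrit via the projector $P$ onto its antisymmetric subspace. Taking $H_1 = \sum \alpha_{ij'} h^{(\theta)}_{ij'}$ over qutrit pairs $(i,j')$ belonging to different encoded gadgets, the lemma produces effective logical interactions of the form $(P\otimes P)\, h^{(\theta)}_{ij'}\, (P\otimes P)$.

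The central structural observation is that $h$ is $SU(3)$-invariant (up to an identity shift it is the qutrit swap), so $h^2$ and hence $h^{(\theta)}$ are $SU(3)$-invariant; the projector $P$ is also $SU(3)$-equivariant. Consequently each projected interaction $(P\otimes P)\, h^{(\theta)}_{ij'}\, (P\otimes P)$ is an $SU(3)$-invariant operator on the logical $\bar{3}\otimes\bar{3}$ space. The space of such two-qudit invariants is two-dimensional, spanned by the identity and the logical swap (equivalently, by $I$ and the logical $SU(3)$ Heisenberg interaction). Therefore the simulated interaction must take the form $c(\theta)\, h^{SU(3)}_L + \kappa(\theta)\, I$ for some real coefficients $c(\theta)$, $\kappa(\theta)$.

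The key technical step, and the main obstacle, will be to compute $c(\theta)$ explicitly and verify $c(\theta)\neq 0$ on the interval $(\arctan 1/3,\arctan 5)$. Writing $h^{(\theta)} = \cos\theta \cdot h + \sin\theta \cdot h^2$, the coefficient decomposes as $c(\theta) = (\cos\theta)\, c_1 + (\sin\theta)\, c_2$ for explicit constants $c_1$, $c_2$ obtained by projecting $h_{13}$ and $h_{13}^2$ onto the antisymmetric subspaces of $(1,2)$ and $(3,4)$; I expect the unique zero in the relevant range to occur at $\tan\theta = -c_1/c_2 = 5$, which accounts for the upper endpoint $\arctan 5$ in the hypothesis. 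The actual computation can be carried out in a convenient orthogonal basis of $\wedge^2 \C^3$ such as $\{(\ket{01}-\ket{10})/\sqrt 2,\, (\ket{12}-\ket{21})/\sqrt 2,\, (\ket{02}-\ket{20})/\sqrt 2\}$, tracking identity contributions carefully; if a single $h^{(\theta)}_{ij'}$ term should prove inconvenient one can take linear combinations over the four choices $(i,j')\in\{1,2\}\times\{3,4\}$ to isolate the logical Heisenberg component. Once $c(\theta)\neq 0$ is established, arbitrary real multiples of the logical $SU(3)$ Heisenberg interaction can be simulated, and Theorem~\ref{thm:sud} completes the proof.
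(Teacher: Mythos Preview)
Your proposal rests on a false premise: the spin-1 $SU(2)$ Heisenberg interaction $h$ is \emph{not} $SU(3)$-invariant. An $SU(3)$-invariant operator on $\C^3\otimes\C^3$ must be constant on each $SU(3)$ irrep in the decomposition $3\otimes 3 = 6\oplus\bar 3$, hence can have at most two eigenvalues; but $h$ has the three eigenvalues $-2,-1,1$ (with multiplicities $1,3,5$). It is only the particular combination $h+h^2$ (i.e.\ $h^{(\pi/4)}$, the qutrit swap up to affine rescaling) that is $SU(3)$-invariant. So for generic $\theta$, $(P\otimes P)\,h^{(\theta)}_{ij'}\,(P\otimes P)$ need not be $SU(3)$-invariant on the logical space, and your structural argument collapses.

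In fact the paper computes exactly what first order gives you: $\Pi\, h^{(\theta)}_{ij'}\,\Pi = \bigl(\tfrac14 h^{(\theta)}_L + \beta I\bigr)\Pi$, i.e.\ the same bilinear-biquadratic interaction (scaled by $1/4$) on the logical qutrits. This is only $SU(2)$-invariant, and it does not give you access to a second, linearly independent direction in $\operatorname{span}\{h_L,h_L^2\}$, so first order alone cannot reach $h_L+h_L^2$. The paper therefore goes to second order: taking $H_2=\lambda_2\bigl(h^{(\theta)}_{13}-h^{(\theta)}_{24}\bigr)$ (which satisfies $\Pi H_2\Pi=0$), the term $-\Pi H_2 H_0^{-1}H_2\Pi$ produces a \emph{different} linear combination of $h_L$ and $h_L^2$, and combining it with the first-order piece $\Pi H_1\Pi$ one can tune the coefficients to hit $h_L+h_L^2$. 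Your speculation that $\arctan 5$ arises as a zero of a first-order coefficient $c(\theta)$ is therefore also off target; in the paper's argument the upper endpoint enters through positivity of the factor $5\alpha-\beta$ appearing in the second-order computation.
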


\begin{proof}
In this case, the condition on $\theta$ implies that $0< \beta/5<\alpha < 3 \beta$ and that $h^{(\theta)}$'s ground space is 3-dimensional.
We will construct a second-order gadget that encodes each logical qutrit into one of these 3-dimensional ground spaces of two physical qutrits.
Using Lemma~\ref{lem:secondorder}, we choose $H_0$, $H_1$ and $H_2$ such that the effective interaction between logical qutrits is proprtional to $h+h^2$, the SU(3) invariant SWAP interaction shown to be universal in Theorem~\ref{thm:sud}.
%Then $h^{(\theta)}+(\alpha-\beta)I$ has eigenvalues $0,3\beta-\alpha,2\alpha$. 

By the anti-interference discussion presented in \cite[Lemma 36]{cubitt17}, it will suffice to consider just two logical qutrits encoded in 4 physical qutrits. Let one logical qutrit be encoded into the ground space of $h_{12}^{(\theta)}$ in a pair of physical qutrits labelled $1,2$ and a second logical qutrit be encoded into the ground space of $h^{(\theta)}_{34}$ in a pair of physical qutrits labelled $3,4$. The overall heavy Hamiltonian $H_0$, with an appropriate multiple of the identity to ensure the ground state energy is zero, is given by 
\[H_0=h^{(\theta)}_{12}+h^{(\theta)}_{34}+2(\alpha-\beta)I.\]
Let $\Pi$ be the projector onto the 9 dimensional ground space of $H_0$, in which the two logical qutrits are encoded.
One can check that for $i \in \{1,2\}$ and $j \in \{3,4\}$,
\[\Pi h^{(\theta)}_{ij}\Pi=\left(\frac{1}{4} h^{(\theta)}_L + \beta I\right) \Pi\]
where $h^{(\theta)}_L$ denotes the action of $h^{(\theta)}$ in the logical qutrit space, with respect to the basis (\ref{eq:basis3dim}). 
Let $H_2=\lambda_2(h^{(\theta)}_{13}-h^{(\theta)}_{24})$ so that $\Pi H_2 \Pi=0$. Using a computer algebra package we can calculate the second-order term, remembering that $H_0$ has zero energy on its ground space, and that the $H_0^{-1}$ denotes the inverse computed on the higher energy space only:
\[-\Pi H_2 H_0^{-1} H_2 \Pi=
\frac{\lambda_2^2}{2\alpha(\alpha-3\beta)}\left((-3\alpha^3 + 6 \alpha^2 \beta - 8\alpha\beta^2 + \beta^3)h
-\frac{1}{2}(5 \alpha^3 - 7 \alpha^2\beta + 9 \alpha\beta^2 + \beta^3)h^2+c I\right)\Pi \]
for some $c \in \R$. 

Let $H_1=4\lambda_1 h_{13}$ so that $\Pi H_1\Pi= \lambda_1(h_{L}^{(\theta)} + 4\beta I) \Pi=\lambda_1(\alpha h_L +\beta h_L^2  + 4\beta I ) \Pi$. Then by Lemma \ref{lem:secondorder}, choosing  $H_0$ and $H_2$ as above and setting $\lambda_1=\alpha-\beta$, $\lambda_2=2\sqrt{\alpha}$ will simulate
\[\frac{5\alpha^3 - 8 \alpha^2 \beta + 13 \alpha \beta^2 - 2 \beta^3}{3 \beta-\alpha }\left(h_L+h^2_{L}\right)+\tilde{c}I\]
for some $\tilde{c}$, which is the $SU(3)$ Heisenberg interaction as desired, up to rescaling and deletion of an identity term.
We note that $3\beta-\alpha>0$ and 
\[5\alpha^3 - 8 \alpha^2 \beta + 13 \alpha \beta^2 - 2 \beta^3=(5\alpha-\beta)(\alpha-\sqrt{2} \beta)^2 +(10\sqrt{2}-7)\alpha^2\beta +(3-2\sqrt{2})\alpha\beta^2>0\]
since $\alpha,\beta>0$ and $5\alpha-\beta>0$. Therefore this gadget can only produce positively-weighted interactions, but this restriction is allowed in Theorem~\ref{thm:sud}.
\end{proof}

Combining Theorem \ref{thm:su2}, Lemma~\ref{lem:AKLTmodel}, Lemma~\ref{lem:mediatorsu2} and Lemma~\ref{lem:logicalsu2} yields our final result:

\begin{repthm}{thm:bilinbiq}
Let $h^{(\theta)} := (\cos \theta) h + (\sin \theta) h^2$, where $\theta \in [0,2\pi)$ is an arbitrary parameter and $h$ is the spin-1 Heisenberg interaction. For any $\theta$, $h^{(\theta)}$ is universal.
\end{repthm}

\subsection*{Acknowledgements}

We would like to thank Johannes Bausch and Toby Cubitt for helpful discussions on the topic of this work. SP was supported by the EPSRC. AM was supported by EPSRC Early Career Fellowship EP/L021005/1. No new data were created during this study.

% ------------------------------------------------------------------------------
%\newpage

\appendix

\section{Proofs for fourth-order perturbative gadgets}
\label{app:fourthorderproofs}

In this appendix, we prove Lemma~\ref{lem:fourthorder} and Lemma~\ref{lem:4thorderinterference}.

\begin{replem}{lem:fourthorder}[Fourth-order simulation]
Let $H_0$, $H_1$, $H_2$, $H_3$, $H_4$ be Hamiltonians acting on the same space, such that: $\max\{\|H_1\|,\|H_2\|,\|H_3\|,\|H_4\|\} \le \Lambda$; $H_2$ and $H_3$ are block-diagonal with respect to the split $\mathcal{H}_+ \oplus \mathcal{H}_-$; $(H_4)_{--}=0$. Suppose there exists a local isometry $V$ such that $\Im(V)=\mathcal{H}_-$ and 
\begin{equation} \| V H_{\operatorname{target}} V^\dag - \Pi_-\left(H_1 +H_4 H_0^{-1} H_2 H_0^{-1} H_4 -H_4 H_0^{-1} H_4 H_0^{-1} H_4 H_0^{-1} H_4\right)\Pi_- \| \le \epsilon/2 
\label{eq:4thorder}
\end{equation}
and also that
\begin{equation} (H_2)_{--} = \Pi_- H_4 H_0^{-1}H_4 \Pi_- \quad \text{ and } \quad (H_3)_{--} = -\Pi_- H_4 H_0^{-1}H_4 H_0^{-1}H_4 \Pi_-. 
\label{eq:H2H3}
\end{equation}
Then $H_{\operatorname{sim}} = \Delta H_0 + \Delta^{3/4} H_4 + \Delta^{1/4} H_3 + \Delta^{1/2}H_2+H_1$ $(\Delta/2,\eta,\epsilon)$-simulates $H_{\operatorname{target}}$, provided that $\Delta \ge O(\Lambda^{20}/\epsilon^4+\Lambda^4/\eta^4)$.
\end{replem}

\begin{proof} 
We will follow the presentation of the Schreiffer-Wolff transformation provided in~\cite{Bravyi-Hastings} and~\cite{Bravyi-DiVincenzo-Loss}. 
Let $A=\Delta^{3/4} H_4 + \Delta^{1/4} H_3 + \Delta^{1/2}H_2+H_1$, so that $H_{\operatorname{sim}}=\Delta H_0+A$.
The Schreiffer-Wolff transformation is a unitary operator $e^S$ which maps the low-energy space of $H_{\operatorname{sim}}$ onto $\mathcal{H}_-$, the ground space of $H_0$.
Define $\widetilde{V}=e^{-S}V$, which therefore maps exactly onto the low energy space of $H_{\operatorname{sim}}$.
And, using equation (22) of \cite{Bravyi-Hastings}, we have $\|V-\widetilde{V}\|=\|I-e^{-S}\|=O(\|S\|)=O(\|A\|/\Delta)=O(\Lambda/\Delta^{1/4})\le \eta$, so $\widetilde{V}$ satisfies condition 1 of Definition~\ref{dfn:sim}.

To check condition 2 of Definition~\ref{dfn:sim}, it is necessary to bound
\[ \|H_{\operatorname{sim}}|_{\le  \Delta}-\widetilde{V}H_{\operatorname{target}}\widetilde{V}^{\dagger} \|
=\| V \widetilde{V}^{\dagger} H_{\operatorname{sim}} \widetilde{V} V^{\dagger} -V H_{\operatorname{target}} V^{\dagger} \| 
= \| H_{\eff} - V H_{\operatorname{target}} V^{\dagger}\| \]
where $H_{\eff}= (e^S H_{\operatorname{sim}}e^{-S})_{--}$, which is in general a very complicated operator. To deal with this, we expand $H_{\eff}$ as a Taylor series in $1/ \Delta$.
The first three terms are given in \cite{Bravyi-Hastings} as 
\[H_{\eff,1}=A_{--} \qquad \text{ and } \qquad H_{\eff,2}=-\frac{1}{\Delta}A_{-+}H_0^{-1}A_{+-}\]
\[H_{\eff,3}= \frac{1}{\Delta^2}A_{-+}H_0^{-1}A_{++}H_0^{-1}A_{+-}-\frac{1}{2\Delta^2}(A_{-+}H_0^{-2}A_{+-}A_{--} +\hc)\]
The fourth-order term in the Taylor series can be derived using the techniques of \cite{Bravyi-DiVincenzo-Loss}, where they consider the more general situation where $H_0$ acts non-trivially on its low energy space.
Let $A_{\od}=\Pi_-A\Pi_+ +\Pi_+A\Pi_-$ and $A_{\text{d}}=\Pi_-A\Pi_- +\Pi_+A\Pi_+$  and $S_1=\Delta^{-1}[H_0^{-1},A_{\od}]$.
In the special case we are considering where $(H_0)_{--}=0$, the fourth-order term is given according to equation (3.22) of \cite{Bravyi-DiVincenzo-Loss}  as 
\begin{align*}H_{\eff,4}&=\Pi_-\left(\frac{1}{8}[S_1,[S_1,[S_1,A_{\od}]]] 
-\frac{1}{2} [A_{\od},[\Delta^{-1}H_0^{-1},[A_{\text{d}},[\Delta^{-1}H_0^{-1},[A_{\text{d}},S_1]]]]]\right) \Pi_-\\
&=\frac{1}{2\Delta^3}\Pi_-\Bigl(A H_0^{-2}  A\Pi_- A H_0^{-1} A   -A H_0^{-1} A H_0^{-1} A H_0^{-1} A +A H_0^{-2} A H_0^{-1} A \Pi_- A\\
&\qquad \qquad\qquad+A H_0^{-1} A H_0^{-2} A \Pi_- A -A H_0^{-3} A \Pi_- A \Pi_- A+\hc \Bigr)\Pi_- 
\end{align*}
where the $\hc$ refers to the Hermitian conjugate of all terms contained in the brackets, where the second equality follows from some tedious algebra or the use of a computer algebra package.

Next we substitute in $A=\Delta^{3/4} H_4 + \Delta^{1/4} H_3 + \Delta^{1/2}H_2+H_1$ to get 
\[H_{\eff,1}=\Delta^{1/4} (H_3)_{--} + \Delta^{1/2}(H_2)_{--}+(H_1)_{--}\]
\[H_{\eff,2}=-\Delta^{1/2} \Pi_- H_4 H_0^{-1}H_4 \Pi_-   +O(\Lambda^{2}/\Delta)\]
\begin{align*}
H_{\eff,3}&=\Delta^{1/4}\Pi_- H_4 H_0^{-1}H_4 H_0^{-1}H_4 \Pi_- + \Pi_- H_4 H_0^{-1}H_2 H_0^{-1}H_4 \Pi_- \\ &\qquad-\frac{1}{2}\left( \Pi_-H_4 H_0^{-2}H_4 \Pi_-H_2 \Pi_- +\hc\right)+O(\Lambda^3/\Delta^{1/4}) 
\end{align*}
\[H_{\eff,4}=\frac{1}{2}\Pi_-\Bigl( H_4 H_0^{-2} H_4 \Pi_- H_4 H_0^{-1} H_4 -H_4 H_0^{-1}H_4 H_0^{-1}H_4 H_0^{-1}H_4+\hc\Bigr)\Pi_-+O(\Lambda^4/\Delta^{1/4})\]
Combining these expressions with equations (\ref{eq:4thorder}) and (\ref{eq:H2H3}), and noting that some terms cancel because $\Pi_-H_2 \Pi_- = \Pi_- H_4 H_0^{-1}H_4 \Pi_-$, we have
\[ \| H_{\eff} - V H_{\operatorname{target}} V^{\dagger}\|\le \|H_{\eff}-\sum_{i=1}^{4} H_{\eff,i}\|+\epsilon/2+O(\Lambda^{2}/\Delta)+O(\Lambda^3/\Delta^{1/4}) +O(\Lambda^4/\Delta^{1/4}).\]
Given $\Delta>O( \Lambda^{20}/\epsilon^4)$, we may assume that the sum of the last three terms is less than $\epsilon/4$. By equation (23) of \cite{Bravyi-Hastings}, we have $\|H_{\eff}-\sum_{i=1}^{4} H_{\eff,i}\| = O(\Delta^{-4}\|A\|^{5})=O(\Lambda^5/\Delta^{1/4})<\epsilon/4$.
\end{proof}

\begin{replem}{lem:4thorderinterference}
Consider a Hilbert space $\cH=\cH_0 \otimes \bigotimes_{i\ge 1} \cH_i$ with multiple fourth-order mediator gadgets labelled by $i\ge 1$, each with heavy Hamiltonian $H_0^{(i)}$ which acts non-trivially only on $\cH_i$, and interaction terms $H_1^{(i)}$, $H_2^{(i)}$, $H_3^{(i)}$, $H_4^{(i)}$ which act non-trivially only on $\cH_i \otimes \cH_0$. Let $\Pi_-^{(i)}$ denote the projector onto the ground space of $H_0^{(i)}$, and $\Pi_+^{(i)} = I - \Pi_-^{(i)}$.
Suppose that for each $i$, these terms satisfy the conditions of Lemma~\ref{lem:fourthorder}; in particular, $H_0^{(i)} \Pi_-^{(i)} = 0$, $H_2^{(i)}$ and $H_3^{(i)}$ are block diagonal with respect to the $\Pi_-^{(i)}$, $\Pi_+^{(i)}$ split, $\Pi^{(i)}_- H_4^{(i)} \Pi^{(i)}_-=0$ and 
\[ \Pi_-^{(i)}H_2^{(i)}\Pi_-^{(i)} = \Pi^{(i)}_- H_4^{(i)} (H_0^{(i)})^{-1}H_4^{(i)} \Pi^{(i)}_- \text{ and }
\Pi^{(i)}_-H_3^{(i)} \Pi_-^{(i)} = -\Pi^{(i)}_- H_4^{(i)} (H_0^{(i)})^{-1}H_4^{(i)} (H_0^{(i)})^{-1}H_4^{(i)} \Pi^{(i)}_-.\]
%
%Let $H_0=\sum_{i}H_0^{(i)}$, $H_1=\sum_{i}H_1^{(i)}$, $H_2=\sum_{i}H_2^{(i)}$, $H_3=\sum_{i}H_3^{(i)}$ and $H_4=\sum_{i}H_4^{(i)}$.
For each $j \in \{0,\dots,4\}$, let $H_j = \sum_i H_j^{(i)}$, and let  $\Lambda \ge\max\{\|H_1\|,\|H_2\|,\|H_3\|,\|H_4\|\}$.

Suppose there exists a local isometry $V$ such that $\Im(V)$ is the ground space of $H_0$ and $\|V H_{\operatorname{target}}V^{\dagger}-M\|\le \epsilon/2$ where $M$ is equal to
\begin{align*}M=\sum_i& \Pi_-\left(H_1^{(i)}+H_4^{(i)} (H_0^{(i)})^{-1} H_2^{(i)} (H_0^{(i)})^{-1} H_4^{(i)} -H_4^{(i)} (H_0^{(i)})^{-1} H_4^{(i)} (H_0^{(i)})^{-1} H_4^{(i)} (H_0^{(i)})^{-1} H_4^{(i)}\right)\Pi_-\\
&+\sum_{i\neq j}\Pi_-\Big(H_4^{(i)} (H_0^{(i)})^{-1} H_4^{(j)}(H_0^{(j)})^{-1}H_4^{(j)} (H_0^{(i)})^{-1} H_4^{(i)}\\[-11pt]
&\qquad\qquad\qquad-H_4^{(i)} (H_0^{(i)})^{-1} H_4^{(j)} (H_0^{(i)}+H_0^{(j)})^{-1} H_4^{(j)} (H_0^{(i)})^{-1} H_4^{(i)}\\
&\qquad\qquad\qquad-H_4^{(i)} (H_0^{(i)})^{-1} H_4^{(j)} (H_0^{(i)}+H_0^{(j)})^{-1} H_4^{(i)} (H_0^{(j)})^{-1} H_4^{(j)}\Big)\Pi_-
\end{align*}
where $\Pi_-$  is the projector onto the ground space of $H_0$.

Then $\Delta H_0 + \Delta^{3/4} H_4 + \Delta^{1/4} H_3 + \Delta^{1/2}H_2+H_1$ $(\Delta/2,\eta,\epsilon)$ simulates $H_{\operatorname{target}}$, provided that $\Delta \ge O(\Lambda^{20}/\epsilon^4+\Lambda^4/\eta^4)$
\end{replem}

\newcommand{\Pbar}[1]{\overline{\Pi^{\{#1\}}}}
\begin{proof}
First we note that since the $H_0^{(i)}$ operators act on different subsystems for each $i$, all the $\Pi_-^{(i)}$ operators commute and $\Pi=\prod_i \Pi_-^{(i)}$.
For a set $S$, let $\overline{\Pi^S}$ be the projector onto the excited (i.e.\ not ground) space of all gadgets with label $i \in S$ and onto the ground space of all other gadgets. This is defined by
\[\overline{\Pi^{S}}=\left(\prod_{i \in S} \Pi_+^{(i)}\right)\left(\prod_{j  \notin S} \Pi_-^{(j)}\right).\]
These projectors are orthogonal in the sense that $\overline{\Pi^S}\overline{\Pi^T}=0$ unless $S=T$.
By definition, $\overline{\Pi^{S}}$ commutes with $H_0$, and the following relation holds:
\begin{equation}
\label{eq:PiSH0}
H_0^{-1}\overline{\Pi^S}=\left( \sum_{i \in S} H_0^{(i)}\right)^{-1}\overline{\Pi^S} =\overline{\Pi^S} H_0^{-1}.
\end{equation}
Since $\Pi_-^{(i)}H_4^{(i)}\Pi_-^{(i)}=0$, we have $H_4^{(i)}\Pi_-^{(i)}=\Pi_+^{(i)} H_4^{(i)}\Pi_-^{(i)}$ for all $i$. 
This implies the following relations:
\begin{equation}
\label{eq:PiH4Pi}
H_4^{(i)}\Pi_-=\Pbar{i}H_4^{(i)}\Pi_- \quad \text{ and } \quad(I-\Pi_-)H_4^{(i)}\Pbar{j} =\Pbar{i,j}H_4^{(i)}\Pbar{j}\text{ for all } i,j.
\end{equation}
We will now use equations (\ref{eq:PiSH0}) and (\ref{eq:PiH4Pi}) to check that the conditions of Lemma~\ref{lem:fourthorder} hold.
\begin{align*}
\Pi_- H_4 H_0^{-1}  H_4 \Pi_-&=\sum_{i,j}\Pi_- H_4^{(i)} H_0^{-1}  H_4^{(j)}\Pi_- =\sum_{i,j}\Pi_- H_4^{(i)} \Pbar{i}H_0^{-1} \Pbar{j} H_4^{(j)}\Pi_-\\
&=\sum_{i,j}\Pi_- H_4^{(i)} (H_0^{(i)})^{-1} \Pbar{i}\Pbar{j} H_4^{(j)}\Pi_-=\sum_{i}\Pi_- H_4^{(i)} (H_0^{(i)})^{-1} H_4^{(i)}\Pi_-\\
&=\sum_{i} \Pi_- H_2^{(i)}\Pi_-=\Pi_- H_2\Pi_-;
\end{align*}
\begin{align*}
\Pi_- H_4  H_0^{-1}  H_4 H_0^{-1}  H_4 \Pi_-&=\sum_{i,j,k}\Pi_- H_4^{(i)} H_0^{-1}  H_4^{(j)} H_0^{-1}  H_4^{(k)} \Pi_-\\
&=\sum_{i,j,k}\Pi_- H_4^{(i)}\Pbar{i} H_0^{-1}  H_4^{(j)} H_0^{-1} \Pbar{k} H_4^{(k)} \Pi_-\\
&=\sum_{i,j,k}\Pi_- H_4^{(i)} (H_0^{(i)})^{-1} \Pbar{i} H_4^{(j)}\Pbar{k} (H_0^{(k)})^{-1}  H_4^{(k)} \Pi_-\\
&=\sum_{i}\Pi_- H_4^{(i)} (H_0^{(i)})^{-1} H_4^{(i)} (H_0^{(i)})^{-1}  H_4^{(i)} \Pi_-\\
&=-\sum_{i} \Pi_- H_3^{(i)} \Pi_- =-\Pi_- H_3 \Pi_-,
\end{align*}
where in the fourth equality we have used the fact that $ \Pbar{i} H_4^{(j)}\Pbar{k}=0$ unless $i=j=k$, which again follows from $\Pi^{(i)} H_4^{(i)} \Pi^{(i)}=0$.

Finally we use equations (\ref{eq:PiSH0}) and (\ref{eq:PiH4Pi}) to calculate the fourth-order terms from Lemma~\ref{lem:fourthorder}:
\begin{align*}
\Pi_- H_4 H_0^{-1} H_2 H_0^{-1} H_4 \Pi_-&=\sum_{i,j,k}\Pi_- H_4^{(i)} H_0^{-1} H_2^{(j)} H_0^{-1} H_4^{(k)}\Pi_-\\
&=\sum_{i,j,k}\Pi_- H_4^{(i)} (H_0^{(i)})^{-1} \Pbar{i}H_2^{(j)} \Pbar{k}(H_0^{(k)})^{-1} H_4^{(k)}\Pi_-\\
&=\sum_{i,j}\Pi_- H_4^{(i)} (H_0^{(i)})^{-1} \Pbar{i}H_2^{(j)} \Pbar{i}(H_0^{(i)})^{-1} H_4^{(i)}\Pi_-\\
&=\sum_{i}\Pi_- H_4^{(i)} (H_0^{(i)})^{-1} H_2^{(i)}(H_0^{(i)})^{-1} H_4^{(i)}\Pi_-\\
&\qquad \qquad +\sum_{i\neq j}\Pi_- H_4^{(i)} (H_0^{(i)})^{-1} H_4^{(j)} (H_0^{(j)})^{-1}H_4^{(j)}(H_0^{(i)})^{-1} H_4^{(i)}\Pi_-,
\end{align*}
where in the third equality we note that $[H_2^{(j)},\Pi^{(k)}]=0$ for all $j,k$ since $H_2^{(j)}$ is block diagonal with respect to the $\Pi^{(j)}_-$, $\Pi^{(j)}_+$ split, which implies that $\Pbar{i}H_2^{(j)} \Pbar{k}=\Pbar{i} \Pbar{k} H_2^{(j)}=\delta_{ik}\Pbar{i}H_2^{(j)} \Pbar{k}$;
and in the final equality we used the fact that for $i\neq j$,  $\Pbar{i}H_2^{(j)}\Pbar{i}=\Pbar{i}\Pi_{-}^{(j)}H_2^{(j)}\Pi_{-}^{(j)}\Pbar{i}=\Pbar{i}H_4^{(j)} (H_0^{(j)})^{-1} H_4^{(j)} \Pbar{i}$. Next,
\begin{align*}
\Pi_-H_4 H_0^{-1} H_4 H_0^{-1} H_4 H_0^{-1} H_4 \Pi_-&=\sum_{i,j,k,l} \Pi_-H_4^{(i)} H_0^{-1} H_4^{(j)} H_0^{-1} H_4^{(k)} H_0^{-1} H_4^{(l)} \Pi_-\\
&=\sum_{i,j,k,l} \Pi_-H_4^{(i)}\Pbar{i} H_0^{-1} H_4^{(j)} H_0^{-1} H_4^{(k)} H_0^{-1} \Pbar{l}\Pi_- \\
&=\sum_{i,j,k,l} \Pi_-H_4^{(i)} (H_0^{(i)})^{-1} \Pbar{i}H_4^{(j)} H_0^{-1} H_4^{(k)}  \Pbar{l}(H_0^{(l)})^{-1}H_4^{(l)} \Pi_- \\
&=\sum_{i,j,k,l} \Pi_-H_4^{(i)} (H_0^{(i)})^{-1} H_4^{(j)} \Pbar{i,j}H_0^{-1}\Pbar{k,l} H_4^{(k)}(H_0^{(l)})^{-1}H_4^{(l)} \Pi_- 
\end{align*}
Note that $H_0^{-1}$ commutes with $\Pbar{i,j}$, and so there is a factor $\Pbar{i,j}\Pbar{k,l}$ which is zero unless $\{i,j\}=\{k,l\}$. There are three such possibilities:
\[\Pbar{i,j}H_0^{-1}\Pbar{k,l} =\left\{
\begin{array}{cc}
(H_0^{(i)})^{-1}& i=j=k=l\\
(H_0^{(i)}+H_0^{(j)})^{-1}&i=k \neq j=l\\
(H_0^{(i)}+H_0^{(j)})^{-1}& i=l \neq j=k
\end{array}\right.\]
Substituting these three possibilities back into the previous expression above, and summing over $i,j,k,l$, we find that $\Pi_- H_4 H_0^{-1} H_2 H_0^{-1} H_4 \Pi_-$ $ - \Pi_-H_4 H_0^{-1} H_4 H_0^{-1} H_4 H_0^{-1} H_4 \Pi_-$ is equal to the terms given in the statement of the lemma.
\end{proof}

\bibliographystyle{plain}
\bibliography{universalH}

\end{document}